\documentclass[10pt]{amsart}
%\documentclass{conm-p-l}

%    If you need symbols beyond the basic set, uncomment this command.
\usepackage{amssymb,amsmath,euscript}

%    If your article includes graphics, uncomment this command.
%\usepackage{graphicx}

%    If the article includes commutative diagrams, ...
%\usepackage[cmtip,all]{xy}

%    Include other referenced packages here.
\usepackage{}

%    Update the information and uncomment if AMS is not the copyright
%    holder.
%\copyrightinfo{2009}{American Mathematical Society}

\newtheorem{theorem}{Theorem}[section]
\newtheorem{lemma}[theorem]{Lemma}
\newtheorem{proposition}[theorem]{Proposition}
\newtheorem{conjecture}[theorem]{Conjecture}
\newtheorem{corollary}[theorem]{Corollary}

\theoremstyle{definition}
\newtheorem{definition}[theorem]{Definition}

\theoremstyle{remark}
\newtheorem{remark}[theorem]{Remark}
%\numberwithin{equation}{section}
%%%%% Our Definitions %%%%%%%%%%%%%%%%%%%%%%%%
\def\Fq{{\mathbb F}_q}
\def\a{{\alpha}}

\def\FF{{\mathbb F}}
\def\PP{{\mathbb P}}

\newcommand{\codim}{\operatorname{codim}}
\newcommand{\wt}{\operatorname{wt}}

\newcommand{\GL}{\mathrm{GL}}

\def\PP{{\mathbb P}}

\def\O{{\Omega_{\alpha}(\ell,m)}}
\def\Op{{\Omega_{\alpha'}(\ell -1,m)}}
\def\C{{C_{\alpha}(\ell, m)}}
\def\Cp{{C_{\alpha'}(\ell-1, m)}}
\def\Q{{q^{\delta(\alpha)}}}
\def\Qp{{q^{\delta(\alpha')}}}

\def\a{{\alpha}}

%\newtheorem{corollary}[theorem]{Corollary}
%%%%%%%%%%%%%%%%%%%%%%%%%%%%%%%%%%%%%%%%%%%%%%%%%%%
\begin{document}
	
	% \title[short text for running head]{full title}
	\title[Minimum Weight Codewords of Schubert Codes]{Minimum Distance and the Minimum Weight Codewords of Schubert Codes}
	
	%    Only \author and \address are required; other information is
	%    optional.  Remove any unused author tags.
	
	%    author one information
	% \author[short version for running head]{name for top of paper} 
	\author{Sudhir R. Ghorpade}
	\address{Department of Mathematics, 
	Indian Institute of Technology Bombay,\newline \indent
	Powai, Mumbai 400076, India.}
	\email{srg@math.iitb.ac.in}
	\thanks{The first named author was partially supported by Indo-Russian project INT/RFBR/P-114 from the Department of Science \& Technology, Govt. of India and  IRCC Award grant 12IRAWD009 from IIT Bombay.}
	%    author two information
	\author{Prasant Singh}
	\address{Department of Mathematics,
	Indian Institute of Technology Bombay,\newline \indent
	Powai, Mumbai 400076, India.}
	\thanks{The second named author was partially supported by a doctoral fellowship from the Council of Scientific and Industrial Research, India.}
	%\curraddr{}
	\email{psinghprasant@gmail.com}
	
	\subjclass[2010]{94B05, 94B27, 14M15, 14G50}

	\date{\today}
	
\begin{abstract}
%We shall give a classification of minimum weight codewords for Schubert code $\C$ where the sequence $\alpha$ satisfies ${\a}_i-{\a}_{i-1}\geq 2$ for all $i$. 
We consider linear codes associated to Schubert varieties in Grassmannians.  A formula for the minimum distance of these codes was conjectured in 2000 and after having been established 
%shown to be valid 
in various special cases, it was proved in 2008 by Xiang. We give an alternative proof of this formula. Further, we propose a characterization of the minimum weight codewords of Schubert codes by introducing the notion of Schubert decomposable elements of certain exterior powers. It is shown that codewords corresponding to Schubert decomposable elements are of minimum weight and also that the converse is true in many cases. A lower bound, and in some cases, an exact formula, for the number of minimum weight codewords of Schubert codes is also given. From a geometric point of view, these results correspond to determining the maximum number of $\Fq$-rational points that can lie on a hyperplane section of a Schubert variety in a Grassmannian with its nondegenerate embedding in a projective subspace of the Pl\"ucker projective space, and also the number of hyperplanes for which the maximum is attained. % a hyperplane in the 
\end{abstract}

\maketitle

\section{Introduction}

%Let $X$ be a set with $n$ elements contained in the $\PP^{k-1}$
Fix a prime power $q$ and positive integers $\ell,m$ with $\ell \le m$. 
Let $\mathbb{F}_q$ denote the finite field with $q$ elements %where $q$ is power of some prime 
and let $V$ be a vector space over $\mathbb{F}_q$ of dimension $m$. 
%Let $G_{\ell,m}$ be the Grassmannian of all $\ell$-dimensional linear subspaces of $V$. 
To the Grassmannian $G_{\ell,m}$ of all $\ell$-dimensional linear subspaces of $V$, one can associate in a natural way an $[n,k]_q$-code, i.e., a $q$-ary linear code of length $n$ and dimension $k$, where 
\begin{equation}
\label{eq:1}
n = { {m} \brack {\ell}}_q :=
 \frac{ (q^m -1)(q^{m} - q) \cdots (q^{m} - q^{\ell -1})}
{(q^{\ell} -1)(q^{\ell} - q) \cdots (q^{\ell} - q^{\ell -1})} \quad
{\rm and} \quad k = { {m} \choose {\ell}}.
\end{equation}
This code is denoted by $C(\ell,m)$ and is called \emph{Grassmann code}. The study of Grassmann codes goes back to the work of Ryan \cite{CR1, CR2,CKR} in the late 1980's and was continued by Nogin \cite{N} and several authors (see, e.g., \cite{GL, GPP, HJR2, GK, BP} and the references therein). It is now known that Grassmann codes possess a number of interesting properties. For instance, their minimum weight is known and is given by the following beautiful formula of Nogin \cite{N}:
\begin{equation}
\label{eq:2}
d\left( C(\ell,m) \right) = q^{\delta} \quad \text{where} \quad \delta:= \ell (m-\ell).
\end{equation}
Furthermore, several generalized Hamming weights are known, the automorphism group has been determined and is known to be fairly large, the duals of Grassmann codes have a very low minimum distance (namely, $3$) %and the minimum weight codewords of $C(\ell,m)$ generate $C(\ell,m)$ 
and 
%more interestingly, 
the minimum weight codewords of $C(\ell,m)^{\perp}$ generate $C(\ell,m)^{\perp}$. In fact, as the results of \cite{BP, FP} show, Grassmann codes can be regarded as regular LDPC codes and also as a Tanner codes with a small component code, namely, $C(1,2)$. 
%are expected to be amenable to iterative encoding and permutation decoding. 

Schubert codes are a natural generalization of Grassmann codes and were introduced in \cite{GL} around the turn of the last century. These are linear codes $\C$ associated to Schubert subvarieties $\O$ of the 
Grassmannian $G_{\ell,m}$ and are indexed by $\ell$-tuples
 $\alpha=({\a}_1,\ldots,{\a}_\ell)$ of positive integers with $1 \le \a_1 < \cdots < \a_{\ell} \le m$. The Grassmann codes are a special case where %one takes 
 $\a_i = m-\ell+i$ for $i=1, \dots, \ell$. It was shown in \cite{GL} that the minimum distance of $\C$ satisfies the inequality
 \begin{equation}
\label{eq:3}
 d\left( C_{\a}(\ell,m) \right) \le q^{\delta(\a)} \quad \text{where} \quad \delta(\a):= \sum_{i=1}^{\ell} (\a_i - i).
\end{equation}
Further, it was conjectured in \cite{GL} that the inequality in \eqref{eq:3} is, in fact, an equality. We will refer to this conjecture as the Minimum Distance Conjecture, or in short, the MDC. 
%The MDC coincides with Nogin's formula \eqref{eq:2} 
When $\a_i = m-\ell+i$ for $i=1, \dots, \ell$, we have $\delta(\a) = \delta$ and so the MDC holds, thanks to 
\eqref{eq:2}.  In the case $\ell=2$, the MDC was proved in the affirmative by Chen \cite{HC} and, independently, by Guerra and Vincenti \cite{GV}. 
An explicit formula for the length $n_{\a}$ and dimension $k_{\a}$ of $\C$ in the case $\ell=2$ was also given in \cite{HC}, % for the case $\ell=2$, 
while \cite{GV} gave a general, even if complicated, formula for $n_{\a}$ for arbitrary $\ell$. % in the general case. 
Later, in \cite{GT}, the MDC was established for Schubert divisors (i.e., in the case $\delta(\a) = \delta -1$) and general formulas for $n_{\a}$ and  $k_{\a}$ were obtained, namely, 
 \begin{equation}
\label{eq:4}
n_{\a} = \sum_{\beta \le \a} q^{\delta(\beta)} \quad \text{and} \quad k_{\a} = \det_{1\le i,j\le \ell} \left( { {\a_j - j +1} \choose {i - j + 1}}\right),
\end{equation}
where the above 
summation is over all $\ell$-tuples $\beta = (\beta_1, \dots , \beta_{\ell})$ of integers satisfying 
$1\le \beta_1 < \dots < \beta_{\ell} \le m$ and $\beta_i \le \alpha_i$ for $i=1, \dots , \ell$ and 
%as in \eqref{eq:3},  
$\delta(\beta) := \sum_{i=1}^{\ell} (\beta_i - i)$. % for any such tuple $\beta$. 
%It may be noted the length $n_{\a}$ is asymptotically equal to the conjectured minimum distance $q^{\delta(\a)}$ as $q\to \infty$. 
An affirmative answer to the MDC was eventually proposed by Xiang \cite{X}, where an alternative proof of the inequality in \eqref{eq:3} is given and a clever and rather involved proof of the other inequality is also given. While one doesn't doubt the veracity of Xiang's proof, it has been felt that a cleaner and more transparent proof of the MDC would be desirable. 
%This has been expressed to the first named author by some researchers and appears to have been indicated by the MathSciNet review of \cite{X}. 
With this in view, we give in this paper an alternative coordinate-free argument to establish the MDC in the general case. Further, we take up the problem of characterizing the minimum weight codewords of Schubert codes and determining their number. In the case of Grassmann codes, there is a nice characterization that was given already by Nogin \cite{N}. To explain this, let us note that the codewords of $C(\ell,m)$ are indexed by elements $f$ of the exterior power $\bigwedge^{\ell}V$ and may be denoted by $c_f$. In fact, $c_f = \left(f\wedge P_1, \dots , f\wedge P_n\right)$, where $P_1, \dots, P_n$ is a fixed set of representatives in $\bigwedge^{m-\ell}V$ of  (the $\Fq$-rational points of) $G_{\ell, m}$. % $G_{\ell}(V)$. 
%,  and where $\bigwedge^{m}V$ is identified with $\Fq$. 
The map $\bigwedge^{m-\ell}V \to C(\ell,m)$ given by $f\mapsto c_f$ is a linear bijection. The characterization is simply that $c_f$ is a minimum weight codeword 
 $C(\ell,m)$ if and only if $f$ is decomposable, i.e.,  $f=f_1\wedge \ldots \wedge f_{m-\ell}$ for some linearly independent 
 $f_1, \ldots , f_{m-\ell} \in V$. The Schubert code $\C$ can be viewed as a puncturing of the Grassmann code $C(\ell,m)$ at the complement of $\O$ in $G_{\ell,m}$.  
The codewords of $\C$ can still be indexed by $f\in \bigwedge^{m-\ell}V$ and are given by 
$\left(f\wedge P_1, \dots , f\wedge P_{n_{\a}}\right)$, where $P_1, \dots, P_{n_{\a}}$ is a fixed set of representatives in $\bigwedge^{m-\ell}V$ of  (the $\Fq$-rational points of) $\O$; we will continue to denote these by $c_f$.  
 %As the experience with the MDC shows, the case of Schubert codes seems more difficult. In the first place, 
 However, in general, the map $f\mapsto c_f$ of $\bigwedge^{m-\ell}V \to \C$ is surjective, but not injective. %Nonetheless, the codewords of $\C$ can still be indexed by $f\in \bigwedge^{m-\ell}V$. 
 This makes the case of Schubert codes more difficult 
 and a straightforward generalization of the characterization of minimum weight codewords of Grassmann codes does not hold for Schubert codes. 
 It turns out that one needs here a stronger and more subtle notion of decomposability that we call \emph{Schubert decomposability}. We propose a new conjecture that the minimum weight codewords of $\C$ are precisely those that correspond to Schubert decomposable elements of  $\bigwedge^{m-\ell}V$. We prove several aspects of this conjecture. Thus we show that codewords indexed by Schubert decomposable elements of $\bigwedge^{m-\ell}V$ are minimum weight codewords of $\C$ and we also show that if $c_f \in \C$ is a minimum weight codeword for some decomposable $f\in \bigwedge^{m-\ell}V$, then $f$ is Schubert decomposable. What remains to be seen is whether every minimum weight codeword of $\C$ can be indexed by a decomposable element of $\bigwedge^{m-\ell}V$.  We show that this is indeed the case when $\ell=2$ or when $\alpha$ is ``completely non-consecutive'', i.e., when ${\a}_i-{\a}_{i-1}\geq 2$ for all $i=1, \dots , \ell$. Thus, the new conjecture is established in these cases. 
 %This implies also that the new conjecture is valid when $\ell=2$. Furthermore, 
 We also 
 give an explicit lower bound for the number of minimum weight codewords of $\C$, and observe that it gives the exact value if %assuming that 
 our new conjecture is true. Of course in the completely non-consecutive case or when $\ell=2$, this becomes an unconditional result. % theorem. %, whereas in the general case, we have a 
We also show that unlike Grassmann codes, the minimum weight codewords of a Schubert code do not, in general, generate the code. On the other hand, one knows from the recent work of Pi\~nero \cite{FP} that the duals of Schubert codes have the same low minimum distance as that of $C(\ell,m)^{\perp}$ and moreover, the minimum 
weight codewords of  $\C^{\perp}$ generate $\C^{\perp}$. 
 
 The results of this paper have a %natural 
 geometric interpretation that may be of independent interest. Indeed, $\O$ admits % has 
 a nondegenerate embedding in $\PP^{k_\a -1}$ and using %for instance, 
 the language of projective systems (see, e.g., \cite[\S 1.1]{TVN}), we see that determining the minimum distance $d(\C)$ is equivalent to determining the maximum number of $\Fq$-rational points in sections of $\O$ by hyperplanes in~$\PP^{k_\a -1}$  since 
$$
m_\a :=  \max\{\left|\left(\O\cap H \right) (\Fq)\right| : H  \text{ hyperplane in }\PP^{k_\a -1} \} = n_\a - d(\C).
$$
Furthermore, if $M_{\a}$ is the number of minimum weight codewords of $\C$, then
$$
M_{\a} = \left| \left\{ H : H  \text{ hyperplane in }\PP^{k_\a -1} \text{ with } \left|\left(\O\cap H \right) (\Fq)\right|  =  m_\a \right\} \right|.
%n_\a - d(\C) \right\} \right|.
$$
 
 %This paper is organized as follows. 

\section{Preliminaries}
\label{sec:prelim}

In this section, we recall some basic notions and set the notations and terminology used in the rest of this paper. 
As in the Introduction, a prime power $q$ and %positive 
integers $\ell,m$ with $1\le \ell \le m$ will be kept fixed throughout this paper. We have frequently used the notation $A:=B$ to mean that $A$ is defined to be equal to $B$. 

\subsection{Linear Codes} 
Let $n,k$ be positive integers. % with $k\le n$. 
By an \emph{$[n,k]_q$-code} we mean a linear $k$-dimensional subspace of $\Fq^n$. Let $C$ be an  $[n,k]_q$-code. The parameters $n$ and $k$ are called the \emph{length} and the \emph{dimension} of $C$, respectively, whereas elements of $C$ are 
usually referred to as \emph{codewords}. Given a codeword $c=(c_1, \dots , c_n)$ of $C$, the \emph{Hamming weight} of $c$ will be denoted by $\wt(c)$; this is simply the number of $i\in \{1, \dots , n\}$ for which $c_i\ne 0$. The \emph{minimum distance} of $C$ is denoted by $d(C)$ and can be defined as $\min\{\wt(c) : c\in C \text{ with } c\ne 0\}$. Elements $c\in C$ satisfying $\wt(c) = d(C)$ are called the \emph{minimum weight codewords} of $C$. 
 
\subsection{Grassmann and Schubert Varieties}
Let $\FF$ be a field (later we will mainly take $\FF = \Fq$, but for now it can be an arbitrary field) and $V$ be an $m$-dimensional vector space over $\FF$. For a nonnegative integer $d$, we let $\bigwedge^dV$ denote the $d$th exterior power of $V$; this is a vector space over $\FF$ of dimension ${m\choose d}$. Fixing a basis of $V$, we can (and will) identify $\bigwedge^mV$ with $\FF$. Also the dual $(\bigwedge^dV)^*$ is canonically identified with $\bigwedge^{m-d}V$. An element $f$ of $\bigwedge^dV$ is said to be \emph{decomposable} if $f\ne 0$ and $f= f_1\wedge \cdots \wedge f_d$ for some $f_1, \dots , f_d\in V$. In general, elements of $\bigwedge^dV$ are $\FF$-linear combinations of decomposable elements. The \emph{annihilator} of any $f\in \bigwedge^dV$ is the subspace of $V$ denoted by $V_f$ and defined by 
$$
V_f:= \{ x\in V: f\wedge x =0\}. 
$$
%This is clearly a subspace of $V$. 
Evidently, $f=0$ if and only if 
%\Leftrightarrow 
$\dim V_f =m$. Now suppose $d<m$. Then the following characterization is well-known; see, e.g., \cite[Thm. 1.1]{M}:
 \begin{equation}
\label{eq:5}
f \text{ is decomposable}  \Longleftrightarrow \dim V_f =d.
\end{equation}
Note that if $f$ is decomposable and $f=f_1\wedge \dots \wedge f_d$, then $\{f_1, \dots , f_d\}$ is a %$\FF$-
basis of $V_f$. And if $\{g_1, \dots , g_d\}$ is an arbitrary basis of $V_f$, then $f = \lambda (g_1\wedge \dots \wedge g_d)$ where $\lambda\in \FF$ is a nonzero scalar given by the determinant of the change-of-basis matrix. 

As in the Introduction, the Grassmannian $G_{\ell,m} = G_{\ell}(V)$ may be defined by 
$$
G_{\ell,m}:= \{ L : L \text{ is a $\ell$-dimensional subspace of } V \}.
$$
Elements of $G_{\ell,m}$ can be identified with the points of the projective space $\PP\big(\bigwedge^{\ell} V \big)$ via the \emph{Pl\"ucker embedding}, which associates to a subspace $L\in G_{\ell,m}$ with $\FF$-basis $\{v_1, \dots , v_{\ell}\}$ the class $[v_1\wedge\ldots\wedge v_\ell]$ of $v_1\wedge\ldots\wedge v_\ell \in \bigwedge^{\ell} V $. It is well-known that this is a well-defined embedding under which $G_{\ell,m}$ corresponds to a projective algebraic variety in $\PP\big(\bigwedge^{\ell} V \big)$ defined by the vanishing of certain quadratic homogeneous polynomials with integer coefficients. Moreover, the embedding is nondegenerate, i.e., $G_{\ell,m}$ is not contained in any hyperplane of $\PP\big(\bigwedge^{\ell} V \big)$. One can also view $G_{\ell,m}$ as a quotient of $\GL_m(\FF)$. Indeed, 
the group  $\GL(V)$ of invertible linear maps of $V \to V$ acts transitively on $G_{\ell,m}$ and so $G_{\ell,m}$ can be viewed as the homogeneous space $\GL(V)/P_{\ell}$, where 
%(or equivalently, $\GL_m(\FF)$ is the group of $m\times m$ nonsingular matrices over $\FF$ and 
$P_{\ell}$ is the parabolic subgroup given by the stabilizer of %$\{g\in \GL(V): g(L) = L\}$ where $L$ is 
a fixed $\ell$-dimensional subspace of $V$. As this indicates, $G_{\ell,m}$ is a nonsingular variety of dimension $\delta:= \ell (m-\ell)$. % It is clear that 
When $\ell =m$,  the Grassmannian is a particularly simple object, namely the singleton set $\{V\}$, or the projective space $\PP^0$ consisting of a single point. 
% when $\ell =1$ or $\ell=m$, namely,  the %full projective space $\PP^{m-1}$ or just a point in it. 
Thus, to avoid trivialities, we shall henceforth assume that $1 \le \ell < m$. 
%Indeed, it is just a point if 
%A_2\subset

Now let us fix a partial flag $A_1\subset \dots \subset A_\ell$ of nonzero subspaces of $V$. Let ${\a}_i:= \dim A_i$ for $i=1, \dots , \ell$. We sometimes refer to  $\alpha:= ({\a}_1,\ldots,{\a}_\ell)$ as the \emph{dimension sequence} of the partial flag $A_1\subset \dots \subset A_\ell$. 
Note that $1 \le \a_1 < \cdots < \a_{\ell} \le m$. The \emph{Schubert variety} corresponding to this partial flag depends essentially on the dimension sequence  $\alpha$ %:= ({\a}_1,\ldots,{\a}_\ell)$ 
and is defined by 
 \begin{equation}
\label{eq:6}
\O := \{L\in G_{\ell,m}:\dim (L\cap A_i)\geq i \text{ for all } i=1, \dots , \ell \}.
\end{equation}
With respect to the Pl\"ucker embedding of $G_{\ell,m}$, the Schubert variety $\O$ corresponds to a subset of $\PP\big(\bigwedge^{\ell} V \big)$ given by the intersection of $G_{\ell,m}$ with a bunch of Pl\"ucker  coordinate hyperplanes. As such $\O$ is indeed a projective variety that is known to be nondegenerately embedded in $\PP^{k_{\a}-1}$, where $k_{\a}$ is as in \eqref{eq:4}. Note %also 
that the elements of $\O$ are precisely those $L\in G_{\ell,m}$ for which there is a basis of the form $\{v_1, \dots , v_{\ell}\}$ with the property $v_i \in A_i$ for $i=1, \dots , \ell$. Thus, 
$$
\O = \{[v_1\wedge\ldots\wedge v_\ell]: v_1, \dots , v_{\ell}\in V \text{ linearly independent and } v_i\in A_i \;  \forall \, i\}.
$$
We shall use either of the above two descriptions of $\O$. % as may be convenient. 
Moreover, we 
%will often 
shall often reverse the order %of $v_1, \dots , v_{\ell}$ 
so as to write $[v_1\wedge\ldots\wedge v_\ell]$ as $[v_\ell\wedge\ldots\wedge v_1]$. 

Now note that $\alpha$ can be divided into consecutive blocks as 
$$
\a = (\a_1, \dots , \a_{p_1}, \; \a_{p_1+1}, \dots , \a_{p_2}, \; \dots,
 \; \a_{p_{u-1}+1}, \dots , \a_{p_u}, \;
 \a_{p_{u}+1}, \dots , \a_{\ell} )
$$
so that $1\le p_1 < \cdots < p_u <  \ell$ and $\a_{p_i + 1}, \dots ,
\a_{p_{i+1}}$ are consecutive for $0 \le i \le u$, where 
$p_0 =0$ and $p_{u+1} = \ell$, by convention. If we further require that  $\a_{p_i + 1} - \a_{p_i} \ge 2$ for $i=1, \dots , u$, then the nonnegative integer $u$ and the ``jump spots'' $p_1, \dots , p_u$ are uniquely determined by $\a$. For example, if $\ell=7$ and $\alpha=(1,2,4,5,6,8,10)$, then $u=3$ and 
$(p_1, p_2, p_3) = (2,5,6)$. It is an easy consequence of the dimension formula (see, e.g., \cite[Lemma 2]{GT})
%for subspaces of a finite dimensional vector space 
that if the dimension condition in \eqref{eq:6} holds at the ``jump spots'', then it holds everywhere else; in other words, 
 \begin{equation}
\label{eq:7}
\O=\{L\in G_{\ell}(A_{\ell}) : \dim (L\cap A_{p_i})\geq p_i \text{ for all } i=1, \dots , u \}.
\end{equation}
%Here %, and 
Hereafter, 
$u$ and $p_0, p_1, \dots , p_u, p_{u+1}$ will denote the unique integers satisfying %determined by $\a$ such that 
 \begin{equation}
\label{eq:piu}
p_0:= 0 <  p_1 < \cdots < p_u <  p_{u+1} := \ell, \quad \text{and} \quad \a_{p_i + 1} - \a_{p_i} \ge 2 \text{ for } 1\le i\le u, 
\end{equation}
and moreover, $\a_{p_{i-1} + 1}, \dots ,
\a_{p_{i}}$ are consecutive for $1 \le i \le u+1$, that is, 
 \begin{equation}
\label{eq:consec}
\a_{p_i - j} = \a_{p_i}-j  \quad \text{for } \, 1\le i\le u+1 \, \text{ and } \, 1\le j < p_i - p_{i-1}.
\end{equation}
In particular, if $\a$ is \emph{completely consecutive}, i.e., if $u=0$, then \eqref{eq:7} shows that $\O$ coincides with the Grassmannian $G_{\ell}(A_{\ell}) $ of all $\ell$-dimensional subspaces of $A_{\ell}$. The other extreme is $u=\ell-1$, which means $\a_{i + 1} - \a_{i} \ge 2$ for $i=1, \dots , \ell-1$, and we will refer to such $\a$ as \emph{completely non-consecutive}. We now define a notion that will play an important role in the sequel. 

\begin{definition} 
\label{defSchub}
An element $f$ of $\bigwedge^{m-\ell}V$ is said to be \emph{Schubert decomposable} (with respect to the Schubert variety $\O$) if $f$ is decomposable, i.e., $f\ne 0$ and $f= f_1\wedge \cdots \wedge f_{m-\ell}$ for some $f_1, \dots , f_{m-\ell}\in V$, and moreover, 
 \begin{equation}
\label{eq:8}
\dim (V_f\cap A_{p_i}) =  \a_{p_i} - p_i  \quad \text{for all } i=1, \dots , u.
\end{equation}
\end{definition}
Note that if $\a$ is completely consecutive, then $u=0$ and condition \eqref{eq:8} is vacuously true. Thus,  in this case the notions of decomposable and Schubert decomposable elements are identical. However, in general, a decomposable element need not be Schubert decomposable. 
 
\subsection{Grassmann Codes and Schubert Codes}
\label{subsec:2.3}
Here, and hereafter, we will assume that the base field $\FF$ is the finite field $\Fq$. Then $G_{\ell,m}$ and $\O$ are finite and the number of ($\Fq$-rational) points in these varieties are $n$ and $n_{\a}$, which were given explicitly in \eqref{eq:1} and \eqref{eq:4}, respectively.  Fix an ordering $L_1, \dots , L_{n_{\a}}$ of the elements of $\O$ and representatives $P_1, \dots , P_{n_{\a}}$ in $\bigwedge^{\ell} V$ such that each $P_i$ is a decomposable element of the form $v_{\ell}\wedge \dots \wedge v_{1}$ with $v_i\in A_i$ for all $i=1, \dots , \ell$ and $L_j = V_{P_j}$ for $j=1, \dots , n_{\a}$. The choice we make here of ordering $v_i$ in the descending order in $i$ is merely a matter of convenience and will be found suitable when we use induction on $\ell$. Needless to say, the element $v_{\ell}\wedge \dots \wedge v_{1}$ differs only in sign from $v_1\wedge \dots \wedge v_{\ell}$. %We then 
At any rate, we have a natural evaluation map
 \begin{equation}
\label{eq:9}
\bigwedge^{m-\ell} V \to \Fq^{n_{\a}} \quad \text{defined by} \quad  f\mapsto c_f, \quad 
\text{where} \quad c_f:= \left( f \wedge P_1, \dots , 
f\wedge P_{n_{\a}}\right).
\end{equation}
The \emph{Schubert code} $\C$ is defined as the image of this evaluation map. The Grassmann code $C(\ell,m)$ is a special case when $\alpha_i = m-\ell+i$ for  $i=1, \dots , \ell$. Note that a different choice of representatives results in a code that is monomially equivalent to $\C$. With this in view,  given any 
$f\in \bigwedge^{m-\ell} V $, instead of $f\wedge P_i$ 
we shall often write $f\wedge L_i$ or $f(L_i)$. %  or $c_f(L_i)$. 
This is an abuse of notation, but perfectly unambiguous when we are only interested in the vanishing or  nonvanishing of the scalar $f(L_i)$. This, for instance, is the case in the definition of the \emph{support} of $f$:
 \begin{equation}
\label{eq:10}
W(f) := \{ L\in \O : f(L)\ne 0\}.
 \end{equation}
It is clear that the cardinality of $W(f)$ is $\wt(c_f)$, i.e., the Hamming weight of the codeword $c_f$ of $\C$. In 
particular, 
 \begin{equation}
\label{eq:11}
d(\C) = \min\{|W(f)| : f\in \bigwedge^{m-\ell} V \text{ and $W(f)$ is nonempty}\}, 
 \end{equation}
where for any finite set $S$, we let $|S|$ denote the cardinality of $S$. 

We note that \eqref{eq:9} gives a surjective map of $\bigwedge^{m-\ell} V$ onto $\C$, but this map is, in general, not injective. In fact, its kernel is of dimension ${{m}\choose{\ell}} - k_{\a}$, where $k_{\a}$ is as in \eqref{eq:1}. Moreover, from an alternative expression for $k_{\a}$ given in \cite[eq. (4)]{GT}, it is readily seen that $k_{\a} < {{m}\choose{\ell}}$ if and only if $\O \ne G_{\ell, m}$. Note also that 
%It may also be noted that for any $f\in \bigwedge^{m-\ell} V$ and $L \in \O$, 
 \begin{equation}
\label{vanishing}
f \wedge L = 0 \; \text{ if $f\in \bigwedge^{m-\ell} V$ and $L \in \O$ are such that } V_f \cap L \ne \{0\}. 
%whenever
  \end{equation}
This easily verifiable observation can be used tacitly in the sequel.

\section{Minimum distance of Schubert Codes}
\label{sec3}

For the remainder of this paper,  fix a prime power $q$, positive integers $\ell,m$ with $\ell < m$ and an $m$-dimensional vector space $V$ over $\Fq$. Also, let us fix 
a partial flag $A_1 \subset \dots \subset A_{\ell}$ of nonzero subspaces of $V$ and let % $\ell$-tuple 
$\a = (\a_1, \dots , \a_{\ell})$ be its dimension sequence. %of integers such 
%Note that $1\le \a_1 < \dots < \a_{\ell} \le m$. 
For any integer $j$, we set $A_j := \{0\}$ if $j\le 0$ and $A_j := V$ if $j>\ell$, by convention. Given any 
$v_1, \dots , v_r \in V$, we shall denote by $\langle v_1, \dots , v_r \rangle$ the linear subspace of $V$ generated by $v_1, \dots , v_r $. Likewise, if $L'$ is a subspace of $V$ and $v\in V$, then by $\langle L' , v\rangle$ we denote the subspace of $V$ generated by $v$ and the elements of~$L'$. 
Given a finite dimensional vector space $W$, a subspace of $W$ of dimension $\dim W -1$
%If $L', L$ are subspaces of $V$ such that $L'\subset L$ and $\dim L = \dim L'+1$, then $L'$ 
may be referred to as a \emph{hyperplane} in $W$. 
We shall use the notation and terminology introduced in the previous section. In particular, given any 
$f\in\bigwedge^{m-\ell }V$, we denote by $c_f$ the corresponding codeword in the Schubert code $\C$. 

In case $\ell >1$, we will denote by 
$\alpha'$ the $(\ell-1)$-tuple $({\a}_1,\ldots,{\a}_{\ell-1})$, which is the dimension sequence of the truncated partial flag $A_1 \subset \dots \subset A_{\ell -1}$. 
%Also, $\Op$ will denote the corresponding Schubert variety in $G_{\ell-1, m}$ and $\Cp$ the corresponding Schubert code. 
Moreover, when $\ell >1$ and $f\in\bigwedge^{m-\ell }V$ is given, we put
$$
E:=\{x\in A_\ell: c_{f\wedge x} \text{ is the zero codeword in } \Cp \} \quad \text{and} \quad
F:= A_{\ell} \setminus E.
$$
It is clear that $E$ is a subspace of $A_{\ell}$. Naturally, $E$ and $F$ depend on $f$ and to make this dependence explicit, we could denote them by $E_f$ and $F_f$. However, in most situations there will be a fixed $f\in\bigwedge^{m-\ell }V$, and we will drop the subscript so as to simply write $E$ and $F$. 
%
%$f\in \bigwedge^{m-\ell}V$ and let $c_f$ be the corresponding codeword in $\C$. By abuse of notation we will say that $f$ is a codeword in $\C$.
%	Let $\alpha'=({\a}_1,\ldots,{\a}_{\ell-1})$ and $C(\alpha')$ denote the corresponding Schubert code.
%	 Suppose $f\in\bigwedge^{m-\ell }V$ is a nonzero codeword in $\C$. We want to show that $\wt(f)\geq q^{\delta(\alpha)}$. Define 
%	 $$
%	 E:=\{x\in A_\ell:f\wedge x \text{ correspond to a zero codeword in }C(\alpha ')\}.
%	 $$
%	 In other words,  $E$ is the collection of all points $x\in A_\ell$ such that the projection $f\wedge x$ is a zero function on $ A_{\ell-1}\times\ldots\times A_1$.
%	 Set $F=A_\ell \setminus E$ and $F'=F\cap A_{\ell-1}$. Also let  $t:= \codim_{A_\ell}E$ and $t':= \codim_{A_{\ell-1}}(E\cap A_{\ell-1})$. Since we have chosen $f$ to be a nonzero codeword in $\C$ therefore we have $t\geq 1 $ and $t\geq t'$.\
%	 
	 The following lemma is % the most important 
	 a simple, but crucial, observation made by Xiang \cite{X}. We include a proof for the sake of completeness. 
	 
 \begin{lemma}\label{F}
	 	Assume that $\ell > 1$ and $f\in\bigwedge^{m-\ell }V$ is given. Let $E$ be as above and let $t$ be a nonnegative integer such that $\codim_{A_\ell}E \le t$. Then $A_{\ell-t}\subseteq E$.
 \end{lemma}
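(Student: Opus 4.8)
The plan is to argue by contradiction on the dimension of $E$ inside the flag. Suppose, for contradiction, that $A_{\ell-t}\not\subseteq E$; that is, there exists $x\in A_{\ell-t}$ with $x\notin E$, so $c_{f\wedge x}$ is a nonzero codeword of $\Cp$. The idea is to exploit the freedom of moving $x$ within a coset of $E$ together with the flag structure $A_{\ell-t}\subset A_{\ell-t+1}\subset\dots\subset A_{\ell}$: each successive $A_{\ell-t+j}$ has dimension one more than the previous, and since $\codim_{A_\ell}E\le t$, the space $E$ must meet each $A_{\ell-t+j}$ in a subspace whose codimension in $A_{\ell-t+j}$ drops as $j$ grows, forcing $A_{\ell-t}$ (the deepest one) to sit inside $E$.

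Concretely, first I would observe that $\codim_{A_\ell}E\le t$ together with $\dim A_\ell - \dim A_{\ell - t}\le t$ (indeed $\dim A_{\ell-t}\ge \ell - t \ge \a_{\ell-t} - (\ell-t)$ adjustments aside; more simply $\dim A_{\ell}=\a_\ell$ and $\dim A_{\ell-t}=\a_{\ell-t}$, with $\a_\ell - \a_{\ell-t}\ge t$) gives, by the modular (dimension) law,
\[
\codim_{A_{\ell-t}}\bigl(E\cap A_{\ell-t}\bigr)\;=\;\dim A_{\ell-t}-\dim\bigl(E\cap A_{\ell-t}\bigr)\;\le\;\codim_{A_\ell}E\;\le\;t.
\]
Wait — this inequality alone does not yet give $E\cap A_{\ell-t}=A_{\ell-t}$, so the key step must be different: I would instead show directly that $E$ is large enough that its intersection with $A_{\ell-t}$ cannot be proper. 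The honest route is to use the definition of $E$ and the MDC-type bound on $\Cp$. Take any $x\in A_{\ell-t}\setminus E$. Then $c_{f\wedge x}\ne 0$ in $\Cp$, so $\wt(c_{f\wedge x})\ge d(\Cp)=q^{\delta(\a')}$ by the minimum distance bound (available by induction on $\ell$, since the base case $\ell=1$ is trivial and the excerpt permits assuming earlier results). On the other hand, for each $e\in E$ the element $x+e$ lies in $A_{\ell-t}$ as well (as $x\in A_{\ell-t}$ and, crucially, $E\cap A_{\ell-t}\subseteq A_{\ell-t}$), and $c_{f\wedge(x+e)}=c_{f\wedge x}+c_{f\wedge e}=c_{f\wedge x}$ since $c_{f\wedge e}=0$. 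So the single nonzero codeword $c_{f\wedge x}$ is represented by the entire coset $x+(E\cap A_{\ell-t})$, a set of size $q^{\dim(E\cap A_{\ell-t})}$.

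The main obstacle, and the heart of the argument, is then a counting comparison: the supports $W(f\wedge x)\subseteq \Op$ for the $q^{\dim A_{\ell-t}}$ many elements $x\in A_{\ell-t}$ must be controlled. I would use the standard fact (as in Nogin's and Xiang's arguments) that for $L\in\Op$ the map $x\mapsto (f\wedge x)(L)$ is $\Fq$-linear in $x\in A_\ell$, hence the $x\in A_{\ell-t}$ for which $L\in W(f\wedge x)$ form the complement of a hyperplane (or all of $A_{\ell-t}$); summing $|W(f\wedge x)|$ over all $x\in A_{\ell-t}$ and comparing with $q^{\dim A_{\ell-t}-1}\cdot |W_{\text{tot}}|$ where $W_{\text{tot}}=\bigcup_x W(f\wedge x)\subseteq \Op$ has at most $n_{\a'}$ points, against the lower bound $q^{\delta(\a')}$ for each nonzero term, will force the number of $x$ with $c_{f\wedge x}=0$ to be so large that $E\cap A_{\ell-t}$ must be all of $A_{\ell-t}$, i.e. $A_{\ell-t}\subseteq E$. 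The delicate point is getting the inequalities to close tightly enough; here the hypothesis $\codim_{A_\ell}E\le t$ is exactly what is needed to make the arithmetic work, since it bounds how many independent directions can fail to kill the codeword.
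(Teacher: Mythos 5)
Your proposal does not contain a proof; it is a sketch of a strategy that you yourself acknowledge you cannot close, and moreover the strategy is off-track. The key idea you are missing is entirely elementary and uses no counting, no minimum-distance bound, and no induction. The paper's argument is: suppose $x\in A_{\ell-t}\setminus E$. Then by definition of $E$ there are $x_i\in A_i$ ($1\le i\le \ell-1$) with $f\wedge x\wedge x_{\ell-1}\wedge\dots\wedge x_1\ne 0$, which forces $x,x_{\ell-t},\dots,x_{\ell-1}$ to be linearly independent. Now take any nonzero $y$ in their span. Since $x\in A_{\ell-t}\subseteq A_j$ for $\ell-t\le j\le \ell-1$, one can replace $x$ or one of the $x_j$ by $y$ (depending on which coefficient of $y$ is nonzero), reorder using the flag inclusions, and deduce $f\wedge y\wedge y_{\ell-1}\wedge\dots\wedge y_1\ne 0$ with each $y_i\in A_i$; hence $y\notin E$. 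Thus the $(t+1)$-dimensional space $\langle x,x_{\ell-t},\dots,x_{\ell-1}\rangle$ meets $E$ only in $0$, giving $\dim E\le \a_\ell-(t+1)$, i.e.\ $\codim_{A_\ell}E\ge t+1$, a contradiction. That is the entire proof.

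Your proposed alternative has two concrete problems. First, it is logically circular in the paper's architecture: you invoke $d(\Cp)=q^{\delta(\a')}$, but Lemma \ref{F} is used (via Corollary \ref{corF} and equation \eqref{claim} in Lemma \ref{x}) to prove Lemma \ref{prexx} and thence Theorem \ref{xx}, so the minimum-distance formula is downstream of the present lemma. One could in principle repackage everything as a single joint induction, but you do not do so, and the lemma simply does not require it. Second, and more fatally, the counting comparison you outline does not close. For each $L\in\Op$ the linear map $x\mapsto (f\wedge x)(L)$ on $A_{\ell-t}$ vanishes either identically or on a hyperplane, so $\sum_{x\in A_{\ell-t}}\wt(c_{f\wedge x})\le n_{\a'}\,(q-1)q^{\a_{\ell-t}-1}$, while the lower bound you would get from $\wt(c_{f\wedge x})\ge q^{\delta(\a')}$ on $A_{\ell-t}\setminus E$ is far smaller than this because $n_{\a'}$ is typically much larger than $q^{\delta(\a')}$; the hypothesis $\codim_{A_\ell}E\le t$ gives only $\dim(E\cap A_{\ell-t})\ge \a_{\ell-t}-t$, not $E\cap A_{\ell-t}=A_{\ell-t}$. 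The arithmetic does not ``close tightly enough,'' and your own hedging (``the delicate point,'' ``will force'') signals that you have a plan, not a proof.
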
	
%	Proof of this lemma is easy and one can find in \cite{X}.

\begin{proof}
If $t=0$ or $t \ge \ell$, then the result holds trivially. Thus, assume that $1\le t < \ell$. 
Suppose, on the contrary, there is some $x\in A_{\ell-t}\setminus E$. Then there are $x_i \in A_i$ for $1\le i\le \ell -1$ such that 
\begin{equation}
\label{eqstar}
f\wedge x \wedge x_{\ell-1} \wedge \dots \wedge x_1 \ne 0.
\end{equation} 
In particular, $x, x_{\ell-t} , \dots , x_{\ell-1}$ are linearly independent. 
Now if $y$ is any nonzero element of $\langle x, x_{\ell-t} , \dots , x_{\ell-1}\rangle$, then we can replace $x$ or some $x_j$ ($\ell - t\le j \le \ell-1$) by 
$y$ to obtain a basis of  $ \langle x, x_{\ell-t} , \dots , x_{\ell-1}\rangle$ consisting of $y$ and all except one among $ x, x_{\ell-t} , \dots , x_{\ell-1}$. So it follows from \eqref{eqstar} that 
$f\wedge y \wedge y_{\ell-1} \wedge \dots \wedge y_{1} \ne 0$ for some $y_i \in A_i$ ($1 \le i \le \ell-1$).
Consequently, $y\not\in E$. Thus %we have shown that 
$E \cap \langle x, x_{\ell-t} , \dots , x_{\ell-1}\rangle = \{0\}$. Hence, $\dim E \le \a_{\ell} - t - 1$, i.e., $\codim_{A_\ell} E \ge t +1$, which is a contradiction.  
\end{proof}

\begin{corollary}
\label{corF}
If $\codim_{A_\ell}E = 1$, then $L\not\subseteq A_{\ell-1}$ for every $L\in W(f)$.
\end{corollary}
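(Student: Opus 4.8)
The plan is a short proof by contradiction that feeds Lemma~\ref{F} into the basic description of the points of $\O$. The starting point: since $\ell>1$ and $\codim_{A_\ell}E=1$, Lemma~\ref{F} applied with $t=1$ gives $A_{\ell-1}\subseteq E$. This is the only place where the hypothesis $\codim_{A_\ell}E=1$ enters, and it is the crux of the matter.

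Next I would assume, for contradiction, that there is some $L\in W(f)$ with $L\subseteq A_{\ell-1}$. Since $L\in\O$, I would fix a basis $v_1,\dots,v_\ell$ of $L$ with $v_i\in A_i$ for each $i$; such a basis exists by the description of $\O$ recalled in Section~\ref{sec:prelim}, and we may take the Pl\"ucker representative of $L$ to be $v_\ell\wedge v_{\ell-1}\wedge\cdots\wedge v_1$. Since $L\in W(f)$ we then have $f\wedge v_\ell\wedge v_{\ell-1}\wedge\cdots\wedge v_1=f(L)\ne 0$, hence, regrouping, $(f\wedge v_\ell)\wedge(v_{\ell-1}\wedge\cdots\wedge v_1)\ne 0$.

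Now I would observe that $L':=\langle v_1,\dots,v_{\ell-1}\rangle$ is an $(\ell-1)$-dimensional subspace lying in $\Op$, because $\langle v_1,\dots,v_i\rangle\subseteq L'\cap A_i$ forces $\dim(L'\cap A_i)\ge i$ for $1\le i\le\ell-1$. Taking the Pl\"ucker representative of $L'$ to be $v_{\ell-1}\wedge\cdots\wedge v_1$, the previous display shows that the coordinate of $c_{f\wedge v_\ell}\in\Cp$ indexed by $L'$ is nonzero, so $c_{f\wedge v_\ell}$ is not the zero codeword of $\Cp$; that is, $v_\ell\notin E$. But $v_\ell\in L\subseteq A_{\ell-1}\subseteq E$, a contradiction, which finishes the argument.

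I do not expect a genuine obstacle here: the only things to be careful about are the (routine) verification that $L'\in\Op$ and the bookkeeping that the chosen representative of $L'$ exhibits a nonzero coordinate of $c_{f\wedge v_\ell}$. The single substantive ingredient is Lemma~\ref{F} with $t=1$; everything else is just the definitions of $W(f)$, of $E$, and of the points of a Schubert variety.
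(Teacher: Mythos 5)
Your proof is correct and follows essentially the same route as the paper: apply Lemma~\ref{F} with $t=1$ to get $A_{\ell-1}\subseteq E$, then pick a Schubert-adapted basis of a hypothetical $L\in W(f)$ with $L\subseteq A_{\ell-1}$ so that the top vector $v_\ell$ witnesses $c_{f\wedge v_\ell}\ne 0$ in $\Cp$, contradicting $v_\ell\in E$. The only difference is that you spell out the construction of $L'\in\Op$ and $x=v_\ell$ explicitly, which the paper leaves implicit.
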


\begin{proof}
Suppose, if possible, there is $L\in W(f)$ such that $L\subseteq A_{\ell-1}$. Then $L \subseteq E$, by Lemma \ref{F}. However, since $L\in W(f)$, there is $x\in L$ and $L'\in \Op$ such that $L = L'+\langle x \rangle$ and %$f(L) \ne 0$. Consequently, 
$(f\wedge x)(L')\ne 0$. But then $x\not \in E$, which is a contradiction. 
\end{proof}

The next lemma is also due to Xiang \cite{X}. %Here, 
We give a coordinate-free proof. 
%Recall that for any $f\in\bigwedge^{m-\ell }V$, by $W(f)$ we denote the support of $f$, i.e.
%$W(f)=\{L\in\Omega(\alpha):f(L)\neq 0\}$. 

\begin{lemma}
\label{x}
Assume that $\ell > 1$ and $f\in\bigwedge^{m-\ell }V$ is given. Let $\alpha'$ and $E$ %and $F$ 
be as defined above. Also, let 
$$
Z(\alpha',f)=\{(L',x)\in\Op \times A_\ell : (f\wedge x)(L')\neq 0\}
$$
and let $\phi:Z(\alpha',f)\longrightarrow W(f)$ be the map given by $(L',x)\mapsto \langle L',x\rangle$. Then
$\phi$ is well-defined and surjective. %and the fibres of $\phi$ satisfy the following. 
Moreover, given any $L \in W(f)$, the following holds. %we have:
 \begin{enumerate}
\item[(i)]	If $L\not\subseteq A_{\ell-1}$ then $|\phi^{-1}(L)|=q^{\ell-1}(q-1)$.
%If $\dim(L\cap A_{\ell-1})=\ell-1$, then $L'=L\cap A_{\ell-1}$ is uniquely determined and $x\in L-L'$. Therefore $\phi^{-1}(L)\cong L-L'$ and $|\phi^{-1}(L)|=q^{\ell-1}(q-1)$.
	
\item[(ii)]	If $L\subseteq A_{\ell-1}$ and if $\, t:= \codim_{A_\ell}E $, then %$t\ge 1$ and 
% then $L'\supset L\cap A_{\ell-t}:=L_{\ell-t}$ and $x\in L\setminus L_{\ell-t}$. In this case we have 
$|\phi^{-1}(L)|\leq q^{\ell-1} (q^t-1)$.
 \end{enumerate}	
 \end{lemma}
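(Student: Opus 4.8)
The plan is to analyze the fibers of $\phi$ directly. First I would check that $\phi$ is well-defined: if $(L',x)\in Z(\a',f)$ then $(f\wedge x)(L')\ne 0$ forces $x\notin L'$ (otherwise $f\wedge x\wedge P_{L'}$ would vanish, as $x$ would be repeated), so $\langle L',x\rangle$ has dimension $\ell$; moreover $f\wedge\langle L',x\rangle = \pm(f\wedge x)(L')\ne 0$, so $\langle L',x\rangle\in W(f)$, and it lies in $G_{\ell,m}$ with a basis $v_\ell=x, v_{\ell-1},\dots,v_1$ where $v_i\in A_i$ for $i<\ell$ coming from $L'\in\Op$ and $x\in A_\ell$, hence $\langle L',x\rangle\in\O$. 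Surjectivity is the content of the last line of Corollary~\ref{corF}'s proof, essentially: given $L\in W(f)$, nonvanishing of $f\wedge P_L$ means we can find a basis $v_\ell,\dots,v_1$ of $L$ with $v_i\in A_i$ and $(f\wedge v_\ell)(\langle v_{\ell-1},\dots,v_1\rangle)\ne 0$, so setting $L'=\langle v_1,\dots,v_{\ell-1}\rangle$ and $x=v_\ell$ gives a preimage.

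Next I would fix $L\in W(f)$ and compute $|\phi^{-1}(L)|$. The fiber is the set of pairs $(L',x)$ with $L'$ a hyperplane in $L$, $x\in L\setminus L'$, $L'\in\Op$, and $(f\wedge x)(L')\ne 0$. The key observation is that the last condition is automatic once $L'$ is a hyperplane in $L$ and $x\in L\setminus L'$: indeed then $\langle L',x\rangle = L$ and $f\wedge x\wedge P_{L'} = \pm f\wedge P_L\ne 0$ since $L\in W(f)$. So the fiber is exactly $\{(L',x): L'\text{ a hyperplane of }L,\ L'\in\Op,\ x\in L\setminus L'\}$. For a fixed admissible hyperplane $L'$, the number of choices of $x$ is $|L|-|L'| = q^\ell - q^{\ell-1} = q^{\ell-1}(q-1)$. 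Hence $|\phi^{-1}(L)| = q^{\ell-1}(q-1)\cdot N(L)$, where $N(L)$ is the number of hyperplanes $L'$ of $L$ that belong to $\Op$. It remains to count $N(L)$ in the two cases.

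For case (i), $L\not\subseteq A_{\ell-1}$: I claim every hyperplane of $L$ lies in $\Op$, so $N(L)=1$ (there is a unique hyperplane? no --- wait). Here I need to be careful: $\Op$ is the Schubert variety for the truncated flag, and $L'\in\Op$ means $\dim(L'\cap A_i)\ge i$ for $i\le \ell-1$. Since $L\in\O$, pick a basis $v_\ell,\dots,v_1$ with $v_i\in A_i$; then $L'_0:=\langle v_1,\dots,v_{\ell-1}\rangle$ is one hyperplane in $\Op$. The right count is that $N(L)$ equals the number of hyperplanes $L'$ of $L$ with $\dim(L'\cap A_i)\ge i$ for all $i\le\ell-1$; when $L\not\subseteq A_{\ell-1}$ one shows $L\cap A_{\ell-1}$ has dimension exactly $\ell-1$, i.e. is itself a hyperplane of $L$, and that this is the \emph{only} hyperplane $L'$ with $\dim(L'\cap A_{\ell-1})\ge \ell-1$ --- so $N(L)=1$, giving $|\phi^{-1}(L)| = q^{\ell-1}(q-1)$. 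For case (ii), $L\subseteq A_{\ell-1}$: now $L\subseteq A_{\ell-1}$ so every hyperplane $L'$ of $L$ satisfies $\dim(L'\cap A_{\ell-1})=\ell-1$ automatically, but the conditions at lower $A_i$ cut down the count; one bounds $N(L)$ by the number of hyperplanes of $L$ avoiding a suitable complement. The relevant input is Lemma~\ref{F}: with $t=\codim_{A_\ell}E$ we have $A_{\ell-t}\subseteq E$, and since $L\in W(f)$, $L\not\subseteq E$, so $L$ meets the complement-directions; a careful count shows the number of valid hyperplanes $L'$ is at most $(q^t-1)/(q-1)$ (the number of hyperplanes of $L$ containing a fixed $(\ell-t)$-dimensional subspace lying inside $E$, of which there are $q^{\ell-1}+\dots+1$ total, but only those not forced into $E$ count --- refine to $\le (q^t-1)/(q-1)$), whence $|\phi^{-1}(L)| \le q^{\ell-1}(q-1)\cdot\frac{q^t-1}{q-1} = q^{\ell-1}(q^t-1)$.

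The main obstacle I anticipate is the precise hyperplane-counting in case (ii): identifying exactly which hyperplanes $L'$ of $L$ land in $\Op$ and showing there are at most $(q^t-1)/(q-1)$ of them. The idea is that $L'\in\Op$ is equivalent (by the jump-spot reduction \eqref{eq:7} applied to $\a'$) to finitely many incidence conditions $\dim(L'\cap A_{p_i})\ge p_i$; since $L\in W(f)\subseteq\O$ already satisfies the ambient conditions, the binding constraint involves how $L$ sits relative to $E$ and the flag near the top. Concretely, $L'$ must contain $L\cap A_{\ell-1}$-type subspaces; using $A_{\ell-t}\subseteq E$ and $L\cap E\subsetneq L$, one argues any valid $L'$ must contain the $(\ell-t)$-dimensional space $L\cap A_{\ell-t}$ (or a space forced by the lower incidences) while still being a hyperplane of $L$ not contained in $E$ --- and the hyperplanes of $L$ containing a fixed $(\ell-t)$-subspace but not contained in the fixed hyperplane $L\cap E$ number exactly $\frac{q^t-1}{q-1}$. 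Getting this chain of containments exactly right, and in the coordinate-free style, is where the real work lies; everything else is bookkeeping with dimensions of intersections.
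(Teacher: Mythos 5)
Your proposal follows essentially the same route as the paper: analyze the fibres of $\phi$, use Lemma~\ref{F} to force any $L'$ in the fibre over $L\subseteq A_{\ell-1}$ to contain $L\cap A_{\ell-t}$, and count the hyperplanes of $L$ containing that subspace. Your observation that $(f\wedge x)(L')\neq 0$ is automatic once $L'\in\Op$ is a hyperplane of $L$ and $x\in L\setminus L'$ is exactly what the paper uses (implicitly in part (i), where $x$ is allowed to range over all of $L\setminus L'$), and it also closes the step you flagged as ``the real work'': if $u\in(L\cap A_{\ell-t})\setminus L'$, then on one hand $u\in A_{\ell-t}\subseteq E$ by Lemma~\ref{F}, so $(f\wedge u)(L')=0$ since $L'\in\Op$; on the other hand $u\in L\setminus L'$ and $f(L)\neq 0$, so by your automatic-nonvanishing observation $(f\wedge u)(L')\neq 0$ --- a contradiction. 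Hence every $L'$ in the fibre contains $L_t:=L\cap A_{\ell-t}$, and since $L\in\O$ gives $\dim L_t\geq \ell-t$ (with the convention $A_j=\{0\}$ for $j\leq 0$), the number of hyperplanes of $L$ containing $L_t$ is $(q^{\ell-\dim L_t}-1)/(q-1)\leq(q^t-1)/(q-1)$, which yields (ii). Two small corrections: $L\cap A_{\ell-t}$ has dimension at least $\ell-t$, not necessarily exactly $\ell-t$ (this only strengthens the bound), and the extra restriction ``not contained in $L\cap E$'' is unnecessary --- indeed $L\cap E$ need not be a hyperplane of $L$ --- because the containment condition alone already gives the count. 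Your case (i) is fine as corrected in your own text: $L'\in\Op$ forces $L'\subseteq A_{\ell-1}$, hence $L'=L\cap A_{\ell-1}$ is the unique admissible hyperplane, and it does lie in $\Op$ since $L\in\O$.
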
	
% \begin{proof}
%	 Next we define the set 
%	 $$Z(\alpha',f)=\{(L',x)\in\Omega (\alpha')\times A_\ell:f\wedge x(L')\neq 0\}.$$
%	 Let $W(f)=\{L\in\Omega(\alpha):f(L)\neq 0\}$ that is $W(f)$ is the support of $f$. Consider the natural map  $\phi:Z(\alpha',f)\longrightarrow W(f)$ mapping $(L',x)\mapsto \langle L',x\rangle$. Note that by definition $\phi$ is surjective and $\wt(f)=|W(f)|$. Therefore
%	 	\begin{align*} |Z(\alpha',f)| &=\sum\limits_{L\in W( f)} |\phi^{-1}(L)|\\
%	 		 	\end{align*}	
%	 On the other hand
%	 \begin{align*}
% |Z(\alpha', f)|&=\sum\limits_{x\in F}\wt(f\wedge x)\\
% 	            &=\sum\limits_{x\in F-A_{\ell-1}}\wt(f\wedge x)+\sum\limits_{x\in F\cap A_{\ell-1}}\wt(f\wedge x).
%	 \end{align*}
%
%	 Therefore to determine a lower bound on $\wt(f)$, we need to find the maximum number of points in each  fibre of $\phi$ for different and a lower bound for the cardinality $|Z(\alpha', f)|$. In the next lemma we will calculate the number of points in the fibre of the above map for different $L\in W(f)$. This lemma was proved in \cite{X} but here we shall give a coordinate free proof.
%	  \begin{lemma}\label{x}
%	  	Suppose $L\in W(f)$ and $(L',x)\in \phi^{-1}(L)$.
%	  \begin{enumerate}
%\item	If $\dim(L\cap A_{\ell-1})=\ell-1$, then $L'=L\cap A_{\ell-1}$ is uniquely determined and $x\in L-L'$. Therefore $\phi^{-1}(L)\cong L-L'$ and $|\phi^{-1}(L)|=q^{\ell-1}(q-1)$.
%	
%\item	If $L\subseteq A_{\ell-1}$ then $L'\supset L\cap A_{\ell-t}:=L_{\ell-t}$ and $x\in L\setminus L_{\ell-t}$. In this case we have $|\phi^{-1}(L)|\leq (q^t-1)q^{\ell-1}$.
% \end{enumerate}	
%	
%	  \end{lemma}	
 \begin{proof}
 It is clear that $\phi$ is well-defined (i.e., %for any $(L',x) \in Z(\alpha', f)$, the subspace 
 $ \langle L',x\rangle  \in W(f)$ 
 %is in $W(f)$ %belongs to % is indeed an element of 
 %$W(f)=\{L\in\Omega(\alpha):f(L)\neq 0\}$) % 
 whenever $(L',x)$ is in $Z(\alpha', f)$) 
 and that $\phi$ is surjective. Now let $L \in W(f)$. 
 
 (i) Suppose %$L \in W(f)$ is such that 
 $L\not\subseteq A_{\ell-1}$. Since $L \in \O$, we see that $\dim (L\cap A_{\ell-1}) = \ell -1$. Now if 
 $(L',x) \in \phi^{-1}(L)$, then $L'$ is an $(\ell-1)$-dimensional subspace of $L \cap A_{\ell -1}$ and hence 
 $L'=L\cap A_{\ell-1}$. On the other hand, $x$ can be an arbitrary element of $L \setminus L'$. Thus $\left| \phi^{-1}(L) \right| = q^{\ell} - q^{\ell-1} = q^{\ell-1}(q-1)$. 
 
 (ii) Suppose %$L \in W(f)$ is such that 
 $L \subseteq A_{\ell-1}$ and 
 $t:= \codim_{A_\ell}E $. %. Given any  
   %and let 
   Let $(L',x) \in \phi^{-1}(L)$. 
   Observe~that 
\begin{equation}
\label{claim}
 L'\cap A_{\ell-t}=L\cap A_{\ell-t} .
 \end{equation}
 Indeed, the inclusion $\subseteq$ is obvious, whereas if there exists $u\in( L\cap A_{\ell-t}) \setminus L'$, then 
 $L = \langle L', u\rangle$ and by Lemma \ref{F}, $u \in E$, which implies that $(f\wedge u) (L') = 0$ and hence $f(L) = 0$, which contradicts the assumption that $L\in W(f)$.

 Now let $L_t:= L\cap A_{\ell-t} $.
  From \eqref{claim}, we see that $L'$ is necessarily a hyperplane in $L$ containing $L_t$. 
 The number of such hyperplanes is equal to the number of hyperplanes in $L/L_t$. Thus if $r:= \dim L_t$ and $N':= (q^{\ell - r} - 1)/(q-1)$, then we see that there are at most $N'$ choices for $L'$. 
 Note that since $L\in \O$, we have $r \ge \ell -t$ and hence $N' \le (q^t-1)/(q-1)$. Moreover, 
 %having chosen $L'$, an element $x\in A_{\ell}$ such that $L = \langle L', x\rangle$ is necessarily in 
 % can be chosen among 
 $x$ has to be in 
 $L\setminus L'$ and so there are at most $q^{\ell} - q^{\ell - 1}$ possibilities for $x$. It follows that $|\phi^{-1}(L)|\leq q^{\ell-1} (q^t-1)$.
  \end{proof} 
  
\begin{lemma}
\label{prexx}
Assume that $\ell > 1$. Let $f\in\bigwedge^{m-\ell }V$ be such that $c_f\ne 0$ and let $E$ and $F$ %and $F$ 
be the corresponding sets as defined above. Also, let $t:= \codim_{A_{\ell}}E$. Then 
\begin{equation}
\label{w1w2}
 \wt(c_f)\geq \frac{1}{ q^{\ell-1}(q-1)} \sum\limits_{x\in F \setminus A_{\ell-1} }  \wt(c_{f\wedge x}) +\frac{1}{q^{\ell-1}(q^t-1)} \sum\limits_{x\in F\cap A_{\ell-1}} \wt(c_{f\wedge x}).
\end{equation}
Moreover, the inequality above is strict if the inequality in part (ii) of Lemma \ref{x} is strict for some $L\in W(f)$ with $L\subseteq A_{\ell-1}$. 
%holds if the second summation is empty, that is,  
%\begin{equation}
%\label{w1}
% \wt(c_f) = \frac{1}{ q^{\ell-1}(q-1)}\sum\limits_{x\in F \setminus A_{\ell-1}}\wt(c_{f\wedge x}) \quad 
% \text{if }  F\cap A_{\ell-1} = \emptyset, \text{ i.e., } A_{\ell-1} \subseteq E. 
%\end{equation}
 \end{lemma}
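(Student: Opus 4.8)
The plan is to use the fiber-counting map $\phi: Z(\alpha',f) \to W(f)$ from Lemma~\ref{x} and compute $|Z(\alpha',f)|$ in two ways. On one hand, we stratify $Z(\alpha',f)$ by the second coordinate $x \in A_\ell$: for each fixed $x$, the set of $L'$ with $(f\wedge x)(L') \ne 0$ is exactly $W(f\wedge x) \subseteq \Op$, which has cardinality $\wt(c_{f\wedge x})$. This cardinality is $0$ precisely when $x \in E$, so summing only over $x \in F = A_\ell \setminus E$ gives
\begin{equation*}
|Z(\alpha',f)| = \sum_{x \in F} \wt(c_{f\wedge x}).
\end{equation*}
On the other hand, $\phi$ is surjective onto $W(f)$, so $|Z(\alpha',f)| = \sum_{L \in W(f)} |\phi^{-1}(L)|$. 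Split this sum according to whether $L \subseteq A_{\ell-1}$ or not, and apply the two cases of Lemma~\ref{x}: the fibers over $L \not\subseteq A_{\ell-1}$ contribute exactly $q^{\ell-1}(q-1)$ each, and the fibers over $L \subseteq A_{\ell-1}$ contribute at most $q^{\ell-1}(q^t-1)$ each.

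Next I would match the two stratifications. The key compatibility is that $\phi(L',x) \subseteq A_{\ell-1}$ forces $x \in A_{\ell-1}$ (since $x \in \langle L', x\rangle = \phi(L',x)$), and conversely if $x \in F \cap A_{\ell-1}$ then any $(L',x) \in Z(\alpha',f)$ has $L' \subseteq A_{\ell-1}$ as well (as $L' \in \Op$ means $L' \subseteq A_{\ell-1}$), so $\phi(L',x) \subseteq A_{\ell-1}$. Hence the preimage under $\phi$ of $\{L \in W(f): L \subseteq A_{\ell-1}\}$ is exactly the set of pairs in $Z(\alpha',f)$ with $x \in F \cap A_{\ell-1}$, and its complement in $Z(\alpha',f)$ is the set of pairs with $x \in F \setminus A_{\ell-1}$. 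Therefore
\begin{equation*}
\sum_{x \in F \setminus A_{\ell-1}} \wt(c_{f\wedge x}) = \sum_{\substack{L \in W(f) \\ L \not\subseteq A_{\ell-1}}} |\phi^{-1}(L)| = q^{\ell-1}(q-1)\,\bigl|\{L \in W(f): L \not\subseteq A_{\ell-1}\}\bigr|,
\end{equation*}
and
\begin{equation*}
\sum_{x \in F \cap A_{\ell-1}} \wt(c_{f\wedge x}) = \sum_{\substack{L \in W(f) \\ L \subseteq A_{\ell-1}}} |\phi^{-1}(L)| \le q^{\ell-1}(q^t-1)\,\bigl|\{L \in W(f): L \subseteq A_{\ell-1}\}\bigr|.
\end{equation*}

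Dividing the first displayed identity by $q^{\ell-1}(q-1)$ and the second by $q^{\ell-1}(q^t-1)$ (noting $t \ge 1$ since $c_f \ne 0$ forces $E \ne A_\ell$, so both denominators are positive) and adding, the right-hand side of~\eqref{w1w2} is bounded above by
\begin{equation*}
\bigl|\{L \in W(f): L \not\subseteq A_{\ell-1}\}\bigr| + \bigl|\{L \in W(f): L \subseteq A_{\ell-1}\}\bigr| = |W(f)| = \wt(c_f),
\end{equation*}
which is precisely the asserted inequality. For the "moreover" clause: the only inequality used is the one coming from part~(ii) of Lemma~\ref{x}; if that inequality is strict for some $L \in W(f)$ with $L \subseteq A_{\ell-1}$, then the second displayed inequality above is strict, hence so is~\eqref{w1w2}. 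I do not anticipate a serious obstacle here; the one point requiring care is the bookkeeping that identifies the $\phi$-fiber stratification with the $x$-stratification of $Z(\alpha',f)$ — in particular checking that $x \in F \setminus A_{\ell-1}$ is equivalent to $\phi(L',x) \not\subseteq A_{\ell-1}$ for all relevant pairs, and that $x \in E$ really does correspond to $\wt(c_{f\wedge x}) = 0$ so that those terms drop out cleanly.
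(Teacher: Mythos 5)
Your proof is correct and follows essentially the same strategy as the paper: double-count $|Z(\alpha',f)|$ once via the projection to $F$ and once via the fibers of $\phi$, split both counts according to whether things land inside $A_{\ell-1}$, match the two stratifications via the observation that $\langle L',x\rangle\subseteq A_{\ell-1}$ iff $x\in A_{\ell-1}$ (given $L'\in\Op$), and then apply the two parts of Lemma~\ref{x} to the fiber sizes. The bookkeeping you flag as the delicate point is exactly the compatibility the paper records in its displayed equivalence, and your handling of it is correct.
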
	

\begin{proof}
Let $\alpha',  \; Z(\alpha',f)$, and %the map 
$\phi:Z(\alpha',f)\longrightarrow W(f)$ be as in Lemma \ref{x}. Then 
$$
 \left| Z(\alpha',f) \right| = 
 \sum\limits_{L\in W(f)} |\phi^{-1}(L)|=
 \mathop{\sum_{L\in W(f)}}_{L\nsubseteq A_{\ell-1}} |\phi^{-1}(L)|+
 \mathop{\sum_{L\in W(f)}}_{L\subseteq A_{\ell-1}}|\phi^{-1}(L)|.
$$
On the other hand, considering the fibres of the %second 
projection $Z(\alpha',f)\to F$, we obtain
$$
 \left| Z(\alpha',f) \right| = \sum_{x\in F} \wt (c_{f\wedge x}) =  \sum_{x\in F\setminus A_{\ell -1}} \wt (c_{f\wedge x}) + \sum_{x\in F \cap A_{\ell -1} } \wt (c_{f\wedge x})  .
 $$
 Now observe that for any $L' \in \Op$ and $x\in A_{\ell}$, we have 
 $$
 L' + \langle x \rangle \in W(f) \text{ and }  L' + \langle x \rangle \nsubseteq A_{\ell -1} \Longleftrightarrow 
 x\in F \setminus A_{\ell -1} \text{ and } L' \in W(f\wedge x).
 $$
 %Consequently, 
 This implies the first equality below, which in turn, yields the second inequality. %, thanks to 
 $$
  \mathop{\sum_{L\in W(f)}}_{L\nsubseteq A_{\ell-1}} |\phi^{-1}(L)| = \sum_{x\in F\setminus A_{\ell -1}} \wt (c_{f\wedge x}) \ \text{ and} \  
\mathop{\sum_{L\in W(f)}}_{L\subseteq A_{\ell-1}}|\phi^{-1}(L)| = 
\sum_{x\in F \cap A_{\ell -1} } \wt (c_{f\wedge x}).
$$
Hence, if we let $\theta_1:= |\{L\in W(f): L\nsubseteq A_{\ell-1}\}|$  and $\theta_2:= |\{L\in W(f): L\subseteq  A_{\ell-1}\}|$, then $\wt(c_f) = \theta_1 +\theta_2$ and  from Lemma \ref{x}, we see that 
$$
\sum_{x\in F\setminus A_{\ell -1}} \wt (c_{f\wedge x}) =  \theta_1 q^{\ell-1}(q-1) \ \text{ and} \  
\sum_{x\in F \cap A_{\ell -1} } \wt (c_{f\wedge x}) \leq \theta_2 q^{\ell-1}(q^t-1). 
$$
This implies \eqref{w1w2}. % as well as \eqref{w1}. 
Moreover,  if 
$|\phi^{-1}(L)| < q^{\ell-1}(q^t-1)$ for some $L\in W(f)$ with $L\subseteq A_{\ell-1}$, then it is clear that the inequality in \eqref{w1w2} is strict. 
\end{proof}

For ease of reference, we state %below 
the following result for which a short proof is given in \cite[Prop. 5.2]{GL}, while an % and thus we omit the proof. An 
alternative proof is given in \cite[Thm. 1]{X}. Yet another proof will be sketched in Remark \ref{OtherPfeasyineq}. 

\begin{proposition}
\label{easyineq}
$d(\C) \le \Q$. 
\end{proposition}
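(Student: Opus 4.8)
The plan is to exhibit a single codeword of $\C$ whose weight is exactly $\Q$; since $d(\C)$ is the minimum weight over all nonzero codewords, this immediately gives $d(\C)\le \Q$. The natural candidate is a codeword $c_f$ associated to a Schubert decomposable $f\in\bigwedge^{m-\ell}V$, chosen so that its annihilator $V_f$ meets the flag $A_1\subset\dots\subset A_\ell$ in the most economical way. Concretely, I would pick a basis $e_1,\dots,e_m$ of $V$ adapted to the flag, so that $A_i=\langle e_1,\dots,e_{\a_i}\rangle$, and then take $f$ to be the wedge of $m-\ell$ of these basis vectors selected so that $V_f$ is spanned by those very vectors; the complement of the index set of $f$ should consist of the ``staircase'' positions $\a_1,\dots,\a_\ell$ themselves (so that $V_f\cap A_{\a_i}$ has exactly the dimension $\a_i-i$ demanded by \eqref{eq:8}). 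With this choice $f$ is decomposable with $\dim V_f=m-\ell$, and the Schubert-decomposability conditions hold by construction.

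Next I would count $W(f)=\{L\in\O: f(L)\ne 0\}$. By \eqref{vanishing}, $f(L)\ne 0$ forces $V_f\cap L=\{0\}$, and since $\dim L=\ell$ and $\dim V_f=m-\ell$, this means $L$ is a complement of $V_f$ in $V$; conversely for decomposable $f$ any such complement gives $f(L)\ne 0$. So the task reduces to counting the $L\in\O$ that are complementary to $V_f$. Using the description \eqref{eq:7} of $\O$ via the jump spots, together with the adapted basis, an $L$ complementary to $V_f$ and lying in $\O$ is obtained by choosing, for each of the $\ell$ ``free'' positions, a vector of the form $e_{\a_i}+(\text{combination of the } e_j\in V_f \text{ with } j<\a_i)$; the number of admissible coefficients for the $i$-th such vector is $q^{\a_i-i}$, because exactly $\a_i-i$ of the $V_f$-basis vectors have index below $\a_i$. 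Multiplying, $|W(f)|=\prod_{i=1}^{\ell}q^{\a_i-i}=q^{\sum_i(\a_i-i)}=\Q$. Hence $\wt(c_f)=|W(f)|=\Q$, and since $f\ne 0$ gives $c_f\ne 0$ (indeed $W(f)$ is nonempty), \eqref{eq:11} yields $d(\C)\le\Q$.

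The routine part here is the bookkeeping in the count of complements inside the Schubert variety; the one genuine subtlety — the step I would be most careful about — is checking that every vector produced by the ``staircase + lower terms'' recipe really does span, together with the others, a subspace $L\in\O$ and that distinct coefficient choices give distinct $L$, i.e. that the parametrization of these complements is a genuine bijection with the affine space of size $\Q$. This is exactly where one uses that the chosen index set for $f$ is the complement of $\{\a_1,\dots,\a_\ell\}$: it makes the triangular structure of the coefficient matrix transparent, so that linear independence and the incidence conditions $\dim(L\cap A_{p_i})\ge p_i$ are automatic. An alternative, perhaps cleaner, route is to quote that $\O$ has a cell decomposition into affine cells, the largest of which has dimension $\delta(\a)$ and hence $\Q$ rational points, all lying off a suitable Plücker coordinate hyperplane; that hyperplane's complement realizes a codeword of weight $\Q$. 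Either way the argument is short, which matches the remark in the excerpt that \cite[Prop. 5.2]{GL} gives a brief proof.
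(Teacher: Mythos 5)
Your proof is correct, and it is a direct, self-contained argument: you build the ``standard'' Schubert decomposable element $f$ whose annihilator $V_f$ is spanned by the $e_j$ with $j\notin\{\a_1,\dots,\a_\ell\}$, observe $W(f)=\{L\in\O: L\cap V_f=\{0\}\}$, and then parametrize this set (the big Schubert cell) by the triangular normal form $v_i=e_{\a_i}+\sum_{j<\a_i,\ j\notin\{\a_1,\dots,\a_\ell\}}c_{ij}e_j$, getting $\prod_i q^{\a_i-i}=\Q$ points. That is essentially the short argument the paper attributes to \cite[Prop.~5.2]{GL} and is also what your ``big cell'' alternative route amounts to. The paper's own contribution here (Remark~\ref{OtherPfeasyineq}) reaches the same conclusion by a different route: it quotes Theorem~\ref{thmS1}, whose inductive proof shows that \emph{every} Schubert decomposable $f$ yields $\wt(c_f)=\Q$, and then simply notes such $f$ exist. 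Your count is more elementary and immediate, whereas the paper's route is a byproduct of machinery it needs anyway for the converse direction. One small bookkeeping point you flag but do not fully carry out: to see that the triangular parametrization is onto, one should use both the Schubert condition $\dim(L\cap A_i)\ge i$ and the bound $\dim(L\cap A_i)\le \a_i-\dim(V_f\cap A_i)=i$ coming from $L\cap V_f=\{0\}$; together these force $\dim(L\cap A_i)=i$, which is what guarantees that the $i$-th basis vector of $L$ can be normalized to have leading term $e_{\a_i}$ and lower-order terms only in $V_f\cap A_i$. (Also, where you write $V_f\cap A_{\a_i}$ you clearly mean $V_f\cap A_i$, or $V_f\cap A_{p_i}$ as in \eqref{eq:8}; the computation is fine.)
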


We are now ready to show that the MDC holds in the affirmative, in general. We shall also see that the proof also gives us some information about the minimum weight codewords of $\C$. 

\begin{theorem}
\label{xx}
	$d(\C) = \Q$. Moreover, if $\ell > 1$ and $f\in \bigwedge^{m-\ell}V$ is such that $c_f$ is a minimum weight codeword in $\C$, then $c_{f\wedge x}$  is a minimum weight codeword in $\Cp$ for every $x\in F$ and furthermore, 
	%if we let $t:= \codim_{A_{\ell}}E$ and $t':= \codim_{A_{\ell-1}}(E \cap A_{\ell-1})$, then 
we must have either (i) $t =1$ and $t'=0$, or (ii) $t' = t \ge 2$,  $\a_{\ell} - \a_{\ell -1} = 1$, and %moreover, 
equality holds in \eqref{w1w2}.  Here $\a' , \, E$ and $F$ are as %defined 
before, while $t:= \codim_{A_{\ell}}E$ 
%and $t$ are as in Lemma \ref{x} 
and $t':= \codim_{A_{\ell-1}}(E \cap A_{\ell-1})$.
%	
%	(i) $\codim_{A_{\ell}}E =1$, or (ii) $\codim_{A_{\ell}}E \ge 2$ and $\a_{\ell} - \a_{\ell -1} = 1$ and also $\codim_{A_{\ell -1}}E\cap A_{\ell -1}  =1$
	% A_{\ell}$ such that 
	%correspond to a non zero codeword in $C(\alpha)$ then $\wt(f)\geq \Q$.
\end{theorem}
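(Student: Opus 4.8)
The plan is to establish the equality $d(\C)=\Q$ by induction on $\ell$, the inequality $d(\C)\le\Q$ being Proposition~\ref{easyineq}. For the base case $\ell=1$ the tuple $\a$ is completely consecutive, so $\O=G_1(A_1)$ and $\C$ is the Grassmann code $C(1,\a_1)$, whose minimum distance is $q^{\a_1-1}=\Q$ by \eqref{eq:2}. Assume now $\ell>1$ and, as induction hypothesis, that $d(\Cp)\ge\Qp$ for the Schubert code $\Cp$ of the truncated flag $A_1\subset\dots\subset A_{\ell-1}$; recall $\delta(\a')=\delta(\a)-(\a_\ell-\ell)$. Fix $f\in\bigwedge^{m-\ell}V$ with $c_f\ne0$, let $E$ and $F$ be as before, and put $t:=\codim_{A_\ell}E$ and $t':=\codim_{A_{\ell-1}}(E\cap A_{\ell-1})$. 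We note two facts. First, $t\ge1$: if $E=A_\ell$ then, for a representative $P_j=v_\ell\wedge\dots\wedge v_1$ with $v_i\in A_i$, we have $f\wedge P_j=(f\wedge v_\ell)\wedge v_{\ell-1}\wedge\dots\wedge v_1=0$, since $[v_{\ell-1}\wedge\dots\wedge v_1]\in\Op$ and $c_{f\wedge v_\ell}=0$; hence $c_f=0$, a contradiction. Second, $t'\le t$, since $E\cap A_{\ell-1}$ is the kernel of the composite $A_{\ell-1}\hookrightarrow A_\ell\twoheadrightarrow A_\ell/E$. Of course $\a_{\ell-1}\le\a_\ell-1$ as well.

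Counting points of the affine sets $F=A_\ell\setminus E$ and $F\cap A_{\ell-1}=A_{\ell-1}\setminus(E\cap A_{\ell-1})$ gives $|F|=q^{\a_\ell}-q^{\a_\ell-t}$ and $|F\cap A_{\ell-1}|=q^{\a_{\ell-1}}-q^{\a_{\ell-1}-t'}$. By the induction hypothesis $\wt(c_{f\wedge x})\ge\Qp$ for every $x\in F$, so substituting into the bound \eqref{w1w2} of Lemma~\ref{prexx} yields $\wt(c_f)\ge(\Qp/q^{\ell-1})\,S$, where
\begin{equation}
\label{eq:planstar}
S:=\frac{|F\setminus A_{\ell-1}|}{q-1}+\frac{|F\cap A_{\ell-1}|}{q^t-1}.
\end{equation}
As $(\Qp/q^{\ell-1})\,q^{\a_\ell-1}=\Q$, it suffices to prove $S\ge q^{\a_\ell-1}$. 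Writing $s:=|F\cap A_{\ell-1}|$ and $|F\setminus A_{\ell-1}|=q^{\a_\ell-t}(q^t-1)-s$, a short computation rewrites $S$ as
\[
S=q^{\a_\ell-1}+\frac{q^{t-1}-1}{q-1}\Bigl(q^{\a_\ell-t}-\frac{q}{q^t-1}\,s\Bigr),
\]
so $S=q^{\a_\ell-1}$ when $t=1$, while for $t\ge2$ the inequality $S\ge q^{\a_\ell-1}$ is equivalent to $s\le q^{\a_\ell-1-t}(q^t-1)$. This last bound follows from
\[
s=q^{\a_{\ell-1}-t'}(q^{t'}-1)\ \le\ q^{\a_\ell-1-t'}(q^{t'}-1)\ \le\ q^{\a_\ell-1-t}(q^t-1),
\]
where the first inequality uses $\a_{\ell-1}\le\a_\ell-1$ and the second uses $t'\le t$. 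This completes the induction, proving $d(\C)=\Q$.

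For the minimum weight statement, let $\ell>1$ and suppose $\wt(c_f)=\Q$. Then $\wt(c_f)$ is at least the right-hand side of \eqref{w1w2}, which is at least $(\Qp/q^{\ell-1})\,S$, which in turn is at least $(\Qp/q^{\ell-1})\,q^{\a_\ell-1}=\Q$; since $\wt(c_f)=\Q$, all these must be equalities. Equality in the first means that \eqref{w1w2} is an equality; equality in the second means $\wt(c_{f\wedge x})=\Qp$, i.e.\ $c_{f\wedge x}$ is a minimum weight codeword of $\Cp$, for every $x\in F$; equality in the third means $S=q^{\a_\ell-1}$. If $t=1$, then $A_{\ell-1}=A_{\ell-t}\subseteq E$ by Lemma~\ref{F}, so $t'=0$: this is case~(i). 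If $t\ge2$, then the displayed formula for $S$ forces $s=q^{\a_\ell-1-t}(q^t-1)>0$, whence both inequalities in the displayed chain bounding $s$ are equalities; the second gives $t'=t$ (so $t'\ge2$), and then, $q^{t'}-1$ being nonzero, the first gives $\a_{\ell-1}=\a_\ell-1$, i.e.\ $\a_\ell-\a_{\ell-1}=1$. Combined with the equality in \eqref{w1w2}, this is case~(ii). I expect the main obstacle to be the numerical inequality $S\ge q^{\a_\ell-1}$ together with the careful determination of exactly when it is tight---i.e.\ tracking how the two elementary estimates governed by $t'\le t$ and $\a_{\ell-1}\le\a_\ell-1$ become equalities; by contrast, the reduction to \eqref{eq:planstar}, the bounds $t\ge1$ and $t'\le t$, and the point counts are routine.
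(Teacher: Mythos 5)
Your proof is correct and follows the paper's argument in all essentials: induction on $\ell$ with the same $E/F$ dichotomy, the bound from Lemma~\ref{prexx}, the invariants $t$ and $t'$, and the observation that equality forces $\wt(c_{f\wedge x})=\Qp$ for all $x\in F$. The only real difference is cosmetic — you collapse the paper's Subcases~2.1, 2.2, 2.3 into a single algebraic identity for $S$ and read the equality conditions off that identity, whereas the paper handles the three numerical comparisons separately; the content is the same.
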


\begin{proof}
In view of Proposition \ref{easyineq}, in order to show that $d(\C) = \Q$, it suffices to show that 
\begin{equation}
\label{otherineq}
\wt(c_f) \ge \Q \quad \text{for every $f\in \bigwedge^{m-\ell}V$ such that } c_f \ne 0.
\end{equation}
We now proceed to prove \eqref{otherineq} by induction on $\ell$ ($1 \le \ell < m$). The initial case can be deduced from facts about Grassmann codes or first order projective Reed-Muller codes, but we will give a direct and self-contained proof. The induction step will make use of the above lemma. 

First, suppose $\ell =1$ and $f\in \bigwedge^{m-\ell}V$. Then $f$ is necessarily decomposable, say  $f=f_1\wedge\dots\wedge f_{m-1}$ for some $f_1, \dots , f_{m-1} \in V$. 
%$f_i\in V$ for $i=1, \dots , m-1$. 
Further, suppose  $c_f\ne 0$.  Then we must have  $A_1\nsubseteq V_f = \langle f_1,\ldots, f_{m-1}\rangle$. Consequently, 
$A_1 + V_f = V$ and hence $\dim A_1 \cap V_f = \a_1 -1$. Noting %since 
%$W(f) = \{ L \in G_1(V) : \dim (L\cap A_1) \ge 1 \text{ and } f(L)\ne 0\}$, we see observe 
that 
$W(f) = \{ \langle x \rangle : x \in A_1 \setminus (A_1\cap V_f) \}$, we obtain %. It follows that 
$$
\wt(c_f) = |W(f)| = \frac{ q^{\a_1} - q^{\a_1 - 1}}{q-1} = q^{\a_1 - 1} = \Q.
$$

Next, suppose $1 < \ell < m$ and \eqref{otherineq} holds for positive %integers 
values of $\ell$ smaller than the given one. %~$\ell$. 
Note that  $t \ge 1$ since $c_f \ne 0$. %Also note that if we 
Note also that $t':= \codim_{A_{\ell-1}}(E \cap A_{\ell-1})$ satisfies $t'\le t$ %. Indeed, 
since the inclusion $A_{\ell-1}\hookrightarrow A_{\ell}$ %clearly 
induces an injective homomorphism 
$A_{\ell-1}/E\cap A_{\ell-1}\hookrightarrow A_{\ell}/E$.
%if $s$ elements of $E$ extend a basis of $E \cap A_{\ell-1}$ to a basis of $E$, then no nontrivial linear combination of these $s$ elements is in $A_{\ell-1}$ and hence $s \le \a_{\ell} -\a_{\ell-1}$, which implies $t'\le t$. %, we have. 
We shall now divide the proof into two cases. % according as $t=1$ or $t>1$. 

\medskip
\noindent
\textbf{Case 1.}  $t=1$. 

In this case, by Lemma \ref{F},  $A_{\ell-1}\subseteq E$. Hence $F \setminus A_{\ell-1}=F$ %, $t'=0$ 
and 
$F\cap A_{\ell-1} = \emptyset$. 
It follows that % and hence the cardinality of the set 
$|F \setminus A_{\ell-1}| = |F| = |A_{\ell}| - |E| = q^{{\a}_\ell} - q^{{\a}_\ell-1}$. 
%q^{{\a}_\ell-1}(q-1)$. 
Moreover, by the induction hypothesis, 
$\wt(c_{f\wedge x}) \geq q^{\delta(\alpha')} $, 
for every $x\in F$. Now \eqref{w1w2} reduces to 
$$
\wt(c_f) \ge \frac{1}{ q^{\ell-1}(q-1)}\sum\limits_{x\in F}\wt(c_{f\wedge x})
	       \ge \frac{1}{ q^{\ell-1}(q-1)} (q^{{\a}_\ell} - q^{{\a}_\ell-1})  %q^{{\a}_\ell-1}(q-1)
	       q^{\delta(\alpha')}\\
	       = \Q.
$$
Thus, \eqref{otherineq} is proved in this case. Also, it is clear that if 
$\wt (c_{f\wedge x}) > q^{\delta(\alpha')} $ for some $x\in F$, then $\wt(c_f) > q^{\delta(\alpha)}$. 
Note also that $t'=0$ in this case, since $A_{\ell-1}\subseteq E$. 

\medskip
\noindent
\textbf{Case 2.}  $t\geq 2$. 

First note that, with $t$ and $t'$ as above, %we have 
$$
|F\setminus A_{\ell-1}| = \left( |A_{\ell}| - |E|\right) - \left(|A_{\ell-1}| - |E \cap A_{\ell-1}| \right) = \left(q^{{\a}_\ell}-q^{{\a}_\ell- t} \right)- \big( q^{{\a}_{\ell-1}}-q^{{\a}_{\ell-1}-t'}\big).
$$
We will now consider three different subcases as follows.

\textbf{Subcase $\mathbf{2.1}$.}  %Either 
${\a}_\ell-{\a}_{\ell-1} \geq 2$. % or ${\a}_\ell-{\a}_{\ell-1}=1$ and $t'\ne t$.
 
Here, using the expression for $|F\setminus A_{\ell-1}|$ obtained above, we see that 
$$
|F\setminus A_{\ell-1}|  - \left(q^{{\a}_{\ell}} - q^{ {\a}_{\ell}-1} \right) = \left(q^{{\a}_{\ell}-1} - q^{{\a}_{\ell} - t} - q^{ {\a}_{\ell-1}} \right) + q^{ {\a}_{\ell} -t' } > 0,% \ge  q^{{\a}_{\ell} - t} + q^{ {\a}_{\ell-1}}.
$$
where the last inequality follows by noting that $q^{{\a}_{\ell}-1} \ge 2 q^{{\a}_{\ell}-2} \ge q^{{\a}_{\ell}-t} + q^{{\a}_{\ell-1} }$, since $t\ge 2$ and $ {\a}_\ell-{\a}_{\ell-1} \geq 2$.  Hence, by \eqref{w1w2} and the induction hypothesis, %we deduce that
$$
\wt(c_f) \geq \frac{1}{ q^{\ell-1}(q-1)} |F\setminus A_{\ell-1}|\, q^{\delta(\alpha')} > 
\frac{1}{ q^{\ell-1}(q-1)} (q^{{\a}_\ell} - q^{{\a}_\ell-1})q^{\delta(\alpha')}
% \frac{q^{ {\a}_{\ell}-1}}{ q^{\ell-1}} q^{\delta(\alpha')} 
= \Q.
$$ 
Thus, we obtain \eqref{otherineq} with, in fact,  a strict inequality.

 \smallskip 

\textbf{Subcase $\mathbf{2.2}$.}  %Either 
${\a}_\ell-{\a}_{\ell-1}=1$ and $t'\ne t$.
 
Here, $t' \le t-1$ and so using the expression for $|F\setminus A_{\ell-1}|$ obtained earlier, we see that 
%it is clear that 
$
|F\setminus A_{\ell-1}|  - \left(q^{{\a}_{\ell}} - q^{ {\a}_{\ell}-1} \right) = \big(q^{{\a}_{\ell}-1-t'} - q^{{\a}_{\ell} - t} \big)  \ge  0$, %\quad \text{with strict inequality when } %and $> 0$, in case t' = 0. } 
 % \ge  q^{{\a}_{\ell} - t} + q^{ {\a}_{\ell-1}} .
and also that strict inequality holds when $t' = 0$. 
Hence, by \eqref{w1w2} and the induction hypothesis,
%Thus from \eqref{w1w2}, we deduce that
$$
\wt(c_f) >  %\frac{1}{ q^{\ell-1}(q-1)} |F\setminus A_{\ell-1}|\, q^{\delta(\alpha')} 
\frac{1}{ q^{\ell-1}(q-1)} (q^{{\a}_\ell} - q^{{\a}_\ell-1})q^{\delta(\alpha')}
% \frac{q^{ {\a}_{\ell}-1}}{ q^{\ell-1}} q^{\delta(\alpha')} 
= \Q,
$$ 
where the above inequality is strict either because $t'>0$ in which case the second summation in  \eqref{w1w2} is nonempty and contributes a positive term or because $t'=0$ in which case 
$|F\setminus A_{\ell-1}|  > \left(q^{{\a}_{\ell}} - q^{ {\a}_{\ell}-1} \right)$. 
This yields \eqref{otherineq}, with  a strict inequality. 

 \smallskip 

\textbf{Subcase $\mathbf{2.3}$.}  %Either 
${\a}_\ell-{\a}_{\ell-1}=1$ and $t' = t$.
 
In this subcase of Case 2, we readily see that %using the expression for $|F\setminus A_{\ell-1}|$ obtained earlier, we find that 
%it is clear that 
$$
|F\setminus A_{\ell-1}|  =  q^{{\a}_{\ell} - t-1} (q^t - 1) (q-1) \quad \text{and} \quad 
|F \cap  A_{\ell-1}|  = q^{{\a}_{\ell} - t - 1 } (q^t - 1). 
$$
Hence, from \eqref{w1w2} and the induction hypothesis, we obtain
%Thus from \eqref{w1w2}, we deduce that
$$
\wt(c_f) \ge  %\frac{1}{ q^{\ell-1}(q-1)} |F\setminus A_{\ell-1}|\, q^{\delta(\alpha')} 
\frac{q^{{\a}_{\ell} - t-1} (q^t - 1) }{ q^{\ell-1}} q^{\delta(\alpha')}  + 
\frac{q^{{\a}_{\ell} - t-1} }{ q^{\ell-1}} q^{\delta(\alpha')} 
% \frac{q^{ {\a}_{\ell}-1}}{ q^{\ell-1}} q^{\delta(\alpha')} 
= \Q.
$$ 
Also, it is clear that if 
$\wt (c_{f\wedge x}) > q^{\delta(\alpha')} $ for some $x\in F$ or if the inequality in \eqref{w1w2} is strict, then $\wt(c_f) > q^{\delta(\alpha)}$. 

Thus, \eqref{otherineq} is proved in all cases. Hence, by induction on $\ell$ we conclude that $d(\C) = \Q$. The remaining assertions in the statement of the theorem are also clear from the proof. 
\end{proof}

\section{Annihilators of Decomposable Elements} %Minimum Weight Codewords of Schubert Codes}
\label{newsec4} 

Now that we know the minimum distance of Schubert codes, it is natural to ask for a classification as well as enumeration of the minimum weight codewords. In this section, we shall take some preliminary steps towards such a classification by analysing the intersections of annihilators of decomposable elements of 
$\bigwedge^{m-\ell}V$ with the constituent subspaces of the partial flag defining the given Schubert variety.  
%A beginning was made already in Theorem \ref{xx} where besides the minimum distance, we have some necessary conditions for a nonzero codeword $c_f$ of $\C$, corresponding to $f \in\bigwedge^{m-\ell}V$, to be a minimum weight codeword. We will now obtain a sufficient condition in terms of a stronger form %version 
%of Schubert decomposability of $f$. 
%While this does yield a fairly large 
%class of  minimum weight codewords, we will show that, unlike in the case of Grassmann codes, the 
%minimum weight codewords of $\C$ do not generate $\C$. 

As in %the beginning of 
Section \ref{sec3}, %positive 
integers $\ell, m$ with $1\le \ell < m$, an $m$-dimensional vector space $V$ over $\Fq$, a partial flag 
$A_1 \subset \dots \subset A_{\ell}$ of nonzero subspaces of $V$ with dimension sequence 
$\a = (\a_1, \dots , \a_{\ell})$ will be kept fixed throughout this section. 
Also, recall (from \S 2.2) 
that for any  $f \in\bigwedge^{m-\ell}V$, by $V_f$ we denote the annihilator of $f$. %, as defined in \S 2.2. 

%When $\ell > 1$ and $f \in\bigwedge^{m-\ell}V$ is given,  we let $E, \; F, \;  t$ and $t'$ to be as in Lemma~\ref{x} and Theorem~\ref{xx}. Also, as in \S 2.2, $V_f$ denotes the annihilator of $f$. %, as defined in \S 2.2. 

%We have noted already that the map given by $f\mapsto c_f$ of $\bigwedge^{m-\ell}V \to \C$ is surjective, but not injective, in general. Thus it can happen that $c_f = c_g$ for some $f,g \in \bigwedge^{m-\ell}V$ with $f\ne g$. Moreover, it may also happen that $g$ is decomposable, but $f$ is not, even though $c_f = c_g$. For instance, if $m= 2\ell = 4$, and $\{e_1, e_2, e_3, e_4\}$ is a basis of $V$, and if $A_1 = \langle e_1, e_2\rangle$ and $A_2=V$, then $f:=e_1 \wedge e_2  + e_3 \wedge e_4$ is not decomposable, but $c_f = c_{e_3 \wedge e_4}$ %on $C_{(2,4)}(2,4)$ 
%for the simple reason that $e_1 \wedge e_2\wedge L = 0$ for any 
%$L\in \Omega_{(2,4)}(2,4)$. Thus, in general, $c_f = c_g$ does not imply that $V_f=V_g$. However, we can  prove the following. 
%
% \begin{lemma}
% \label{kernel}
%Suppose $f, g \in\bigwedge^{m-\ell}V$ are such that $c_f = c_g$. Then 
%$$
%E_f = E_g \quad \text{and} \quad V_f\cap A_i = V_f\cap A_i \; \text{ for all } i=1, \dots , \ell.
%$$
%\end{lemma}

 \begin{lemma}\label{S}
 	%Assume that $\ell > 1$. 
 	Suppose $f \in\bigwedge^{m-\ell}V$ is a decomposable element with $c_f\ne 0$. Then  
$$
{\a}_i-\ell \, \leq \, \dim(V_f\cap A_i) \, \leq \, {\a}_i-i  \quad \text{ for all } i=1, \dots , \ell.
$$
In particular, $\dim(V_f\cap A_\ell)= {\a}_\ell-\ell$. 
 \end{lemma}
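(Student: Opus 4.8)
The proof has two halves: the lower bound $\a_i - \ell \le \dim(V_f \cap A_i)$ and the upper bound $\dim(V_f \cap A_i) \le \a_i - i$, with the final sentence following from the case $i = \ell$ of both (which pins down $\dim(V_f\cap A_\ell)$ exactly, since $\a_\ell - \ell$ appears on both ends). The plan is to treat each bound separately using only linear algebra on the annihilator $V_f$ of the decomposable element $f$.

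For the lower bound, the key fact is that since $f$ is decomposable with $f \ne 0$, we have $\dim V_f = m - \ell$ by the characterization \eqref{eq:5}. Then for any subspace $A_i$ of $V$ with $\dim A_i = \a_i$, the standard dimension formula gives
$$
\dim(V_f \cap A_i) = \dim V_f + \dim A_i - \dim(V_f + A_i) \ge (m-\ell) + \a_i - m = \a_i - \ell,
$$
since $V_f + A_i \subseteq V$ has dimension at most $m$. This is immediate and requires no hypothesis beyond decomposability (not even $c_f \ne 0$).

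For the upper bound, the hypothesis $c_f \ne 0$ must be used, since without it $f$ could annihilate all of $A_\ell$. The idea is: if $c_f \ne 0$, there is some $L \in W(f)$, i.e.\ some $\ell$-dimensional $L \subseteq A_\ell$ with $L = \langle v_\ell, \dots, v_1\rangle$, $v_j \in A_j$, and $f \wedge v_\ell \wedge \cdots \wedge v_1 \ne 0$; in particular $L \cap V_f = \{0\}$ by \eqref{vanishing}, and more: for each $i$, the vectors $v_1, \dots, v_i$ (which lie in $A_i$ and hence so does their span $L_i := \langle v_1, \dots, v_i\rangle$) are independent modulo $V_f$, so $L_i \cap V_f = \{0\}$. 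Therefore $V_f \cap A_i$ is a subspace of $A_i$ that meets the $i$-dimensional subspace $L_i \subseteq A_i$ trivially, which forces $\dim(V_f \cap A_i) \le \a_i - i$. The one point needing a little care is the assertion that $v_1, \dots, v_i$ are independent modulo $V_f$: this is where one argues that if some nontrivial combination $w = \sum_{j \le i} \lambda_j v_j$ lay in $V_f$, then $f \wedge w = 0$ would let us rewrite $f \wedge v_\ell \wedge \cdots \wedge v_1$ (expanding $w$ and using multilinearity / antisymmetry, after replacing one $v_j$ with $w$ in the wedge) to conclude it vanishes, contradicting $L \in W(f)$; this is the same replacement trick used in the proof of Lemma \ref{F}.

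I expect the main (minor) obstacle to be the bookkeeping in the upper-bound argument: being precise about "independent modulo $V_f$" and the replacement inside the length-$\ell$ wedge product, and making sure the chosen $L \in W(f)$ genuinely has a representative $v_\ell \wedge \cdots \wedge v_1$ with $v_j \in A_j$ — but this is guaranteed by the description of $\O$ recalled in \S 2.2. No deeper input is needed; the whole lemma is elementary once one invokes $\dim V_f = m - \ell$ and the nonvanishing observation \eqref{vanishing}.
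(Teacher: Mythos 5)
Your proposal is correct and follows essentially the same route as the paper: the lower bound comes from $\dim V_f = m-\ell$ and the subadditivity of dimension, and the upper bound comes from the existence of some $L=[v_\ell\wedge\cdots\wedge v_1]\in W(f)$ with $v_j\in A_j$, which gives an $i$-dimensional subspace $\langle v_1,\dots,v_i\rangle\subseteq A_i$ meeting $V_f$ trivially. Your remark that the lower bound needs only decomposability (not $c_f\ne 0$) is accurate, and the "replacement" justification you sketch for the upper bound is the standard fact the paper invokes implicitly when passing from $f\wedge v_i\wedge\cdots\wedge v_1\ne 0$ to $V_f\cap\langle v_1,\dots,v_i\rangle=\{0\}$.
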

 
 \begin{proof}
 Since $f$ is decomposable, $\dim V_f = m- \ell$.  Hence, for $1\le i \le \ell$, 
 $$
 	m \geq \dim(V_f+A_i) 	=\dim(V_f)+\dim(A_i)-\dim(V_f\cap A_i)
 	= m-\ell+{\a}_i-\dim(V_f\cap A_i),
 $$
and thus $\dim(V_f\cap A_i)\geq {\a}_i-\ell$. % \text{ for all }i=1,2,\ldots,\ell.$\
Further, since $c_f\ne 0$, there are $x_j \in A_j$ for $j=1, \dots , \ell$ such that $f\wedge x_\ell\wedge\ldots\wedge x_1\neq 0$. Consequently, for each $i =1, \dots , \ell$, there are linearly independent $x_1, \dots , x_i \in A_i$ such that $f\wedge x_i\wedge\ldots\wedge x_1\neq 0$ and therefore
$V_f \cap \langle x_1, \dots , x_i\rangle = \{0\}$. This implies that  $\dim(V_f\cap A_i)\leq {\a}_i-i$.
 \end{proof}

%The stronger form of Schubert decomposability of $f$ alluded to earlier is that the upper bound for 
%$\dim(V_f\cap A_i)$ in Lemma \ref{S} is attained for all $i=1, \dots , \ell$. We will first see below that the attainment of this upper bound for $i=\ell -1$ is equivalent to one of the necessary conditions for $c_f$ to be a minimum weight codeword.
%, and also that a weak converse is true. Later we shall show that it implies, in fact, that  $c_f$ is a minimum weight codeword. 
It turns out that the attainment of the upper bound given in Lemma \ref{S} for $\dim(V_f\cap A_i)$ has a nice characterization when $i=\ell -1$.

\begin{lemma}
\label{P}	
Assume that $\ell > 1$. Let $f \in\bigwedge^{m-\ell}V$ be a decomposable element with %such that 
$c_f\ne 0$, and % let 
$E$ %be 
the corresponding subspace as in \S %ection 
\ref{sec3}. Then: % the following holds.  
$$
\dim(V_f\cap A_{\ell-1}) = {\a}_{\ell-1}-(\ell-1) \Longleftrightarrow \codim_{A_\ell}E =1.
$$
%\begin{enumerate}
%	\item[(i)]  If $\dim(A_i\cap V_f)={\a}_i-i$ for all $i=1,\ldots,\ell$, then $\codim_{A_\ell}E=1$;
%	\item[(ii)]   If $\dim(V_f\cap A_{\ell-1})<{\a}_{\ell-1}-(\ell-1)$, then $\codim_{A_l}E>1$. 
%\end{enumerate}  		
\end{lemma}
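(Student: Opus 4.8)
The plan is to pass to the $\ell$-dimensional quotient space $\bar V:=V/V_f$ and convert everything into linear algebra there. Write $\bar x$ for the image of $x\in V$ and $\bar A_i$ for the image of $A_i$ in $\bar V$. Since $f$ is decomposable with $\dim V_f=m-\ell$, extending a basis $f_1,\dots ,f_{m-\ell}$ of $V_f$ to a basis of $V$ and expanding shows, for $w_1,\dots ,w_k\in V$ with $k\le \ell$, that
$$
f\wedge w_1\wedge\cdots\wedge w_k\ne 0\quad\Longleftrightarrow\quad \bar w_1,\dots ,\bar w_k\ \text{are linearly independent in }\bar V .
$$
Feeding this into the definition of $E$, together with the description of the points of $\Op$ as spans $\langle v_1,\dots ,v_{\ell-1}\rangle$ with $v_i\in A_i$, gives: $x\in E$ if and only if $\bar V$ has no basis of the form $\{\bar x,\bar v_1,\dots ,\bar v_{\ell-1}\}$ with $\bar v_i\in\bar A_i$; in particular, membership in $E$ depends only on $\bar x$. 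Lemma \ref{S} supplies $\dim\bar A_i\ge i$ for all $i$ and $\bar A_\ell=\bar V$, and since $x\in V_f\cap A_\ell$ forces $f\wedge x=0$ and hence $c_{f\wedge x}=0$, we have $V_f\cap A_\ell\subseteq E$; therefore the image $\bar E$ of $E$ satisfies $\codim_{A_\ell}E=\dim\bar V-\dim\bar E=\ell-\dim\bar E$. Thus the claimed equivalence reduces to: $\bar A_{\ell-1}$ is a hyperplane of $\bar V$ $\iff$ $\bar E$ is a hyperplane of $\bar V$. I would also record one elementary fact used below: since $\dim\bar A_i\ge i$, one may choose $\bar v_i\in\bar A_i$ greedily so that $\bar v_1,\dots ,\bar v_{\ell-1}$ are linearly independent, and if $\dim\bar A_{\ell-1}=\ell-1$ then these necessarily span $\bar A_{\ell-1}$.

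For the forward implication, assume $\dim\bar A_{\ell-1}=\ell-1$; I claim $\bar E=\bar A_{\ell-1}$. If $\bar x\in\bar A_{\ell-1}$, then for any $\bar v_i\in\bar A_i\subseteq\bar A_{\ell-1}$ the $\ell$ vectors $\bar x,\bar v_1,\dots ,\bar v_{\ell-1}$ lie in the $(\ell-1)$-dimensional space $\bar A_{\ell-1}$, hence cannot form a basis, so $x\in E$. If $\bar x\notin\bar A_{\ell-1}$, then appending $\bar x$ to the greedy $\bar v_1,\dots ,\bar v_{\ell-1}$ (which span $\bar A_{\ell-1}$) yields a basis of $\bar V$, so $x\notin E$. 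Hence $\bar E=\bar A_{\ell-1}$ has dimension $\ell-1$, i.e. $\codim_{A_\ell}E=1$.

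For the converse, assume $\codim_{A_\ell}E=1$. By Lemma \ref{F} with $t=1$ we get $A_{\ell-1}\subseteq E$. If $\dim\bar A_{\ell-1}$ were larger than $\ell-1$, then $\bar A_{\ell-1}=\bar V$; choosing $\bar x\in\bar A_{\ell-1}$ outside the span of the greedy $\bar v_1,\dots ,\bar v_{\ell-1}$ produces a basis $\{\bar x,\bar v_1,\dots ,\bar v_{\ell-1}\}$ of $\bar V$ with $\bar v_i\in\bar A_i$, so any lift $x\in A_{\ell-1}$ satisfies $x\notin E$, contradicting $A_{\ell-1}\subseteq E$. Combined with the bound $\dim\bar A_{\ell-1}\ge\ell-1$ from Lemma \ref{S}, this forces $\dim\bar A_{\ell-1}=\ell-1$, i.e. $\dim(V_f\cap A_{\ell-1})=\a_{\ell-1}-(\ell-1)$.

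I expect the only genuine work to be setting up this dictionary correctly: verifying the wedge-versus-independence equivalence for decomposable $f$, identifying $\bar E$ as the image of $E$ (which needs $V_f\cap A_\ell\subseteq E$), and matching $\codim_{A_\ell}E$ with $\dim\bar V-\dim\bar E$. Once that is in place, both implications are two-line arguments. The case $\ell=2$, where $\bar A_{\ell-1}=\bar A_1$ and the greedy list is a single nonzero vector, should be checked separately but goes through without change.
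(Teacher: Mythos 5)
Your proof is correct, and there is no circularity in citing Lemma \ref{F}: it is established before Lemma \ref{P} and its proof does not use it. The route is genuinely a bit different from the paper's, though the underlying dimension counts are the same. You transport everything to the $\ell$-dimensional quotient $\bar V:=V/V_f$ (valid precisely because $f$ is decomposable, so $f\wedge w_1\wedge\cdots\wedge w_k\ne 0$ iff the images $\bar w_i$ are independent), identify $\codim_{A_\ell}E=\dim\bar V-\dim\bar E$ via $V_f\cap A_\ell\subseteq E$, and in the forward direction you actually pin down $\bar E=\bar A_{\ell-1}$, i.e.\ $E=(V_f\cap A_\ell)+A_{\ell-1}$. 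The paper instead works in $V$ with fixed vectors $x_1,\dots,x_\ell$ witnessing $c_f\ne 0$: it shows $x_1,\dots,x_{\ell-1}\in E$, so $(V_f\cap A_\ell)+\langle x_1,\dots,x_{\ell-1}\rangle\subseteq E$, and concludes $\codim_{A_\ell}E=1$ from $E\ne A_\ell$ — the same inclusion as yours, but without needing the reverse containment. For the converse the paper does not invoke Lemma \ref{F}; it picks $y_\ell\in A_{\ell-1}$ outside $(V_f\cap A_{\ell-1})+\langle x_1,\dots,x_{\ell-1}\rangle$ and runs a swap argument to get $\langle x_{\ell-1},y_\ell\rangle\cap E=\{0\}$, hence $\codim_{A_\ell}E\ge 2$, whereas you quote Lemma \ref{F} with $t=1$ to get $A_{\ell-1}\subseteq E$ and contradict it with an explicit lift in $A_{\ell-1}\setminus E$. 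Your appeal to Lemma \ref{F} buys a shorter converse (the paper's swap trick essentially re-runs the argument inside the proof of Lemma \ref{F}), and the quotient dictionary makes the bookkeeping transparent and reusable; the paper's version stays self-contained in the ambient space, matching the coordinate-free but quotient-free style it uses throughout Sections 3--5.
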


\begin{proof}
Since $c_f \ne 0$, there are  $x_i\in A_i$ for $1\le i \le \ell$ such that $f\wedge (x_\ell\wedge\ldots\wedge x_1)\neq 0$. Consequently, $V_f\cap \langle x_1,\ldots, x_{\ell}\rangle=\{ 0 \}$ and $x_{\ell} \not\in E$. In particular, $E \ne A_{\ell}$. 
%\medskip

Suppose $\dim(V_f\cap A_{\ell-1}) = {\a}_{\ell-1}-(\ell-1)$. Observe that $x_1,\ldots, x_{\ell-1}\in E$. 
%To see this, assume the contrary, i.e.,  
Indeed, if $x_i\notin E$ for some $i\leq \ell-1$, then there exist 
%$y_1,\ldots y_{\ell-1}\in A_{\ell-1}$ with
 $y_j\in A_j$ for $j=1, \dots , \ell -1$ such that $(f\wedge x_i)\wedge(y_{\ell-1}\wedge\ldots\wedge y_1)\neq 0$;  consequently, $\langle x_i,y_1,\ldots, y_{\ell-1}\rangle$ is a subspace of $A_{\ell-1}$ of dimension ${\ell}$ 
 such that $V_f\cap \langle x_i,y_1,\ldots, y_{\ell-1}\rangle=\{0\}$, and this yields a contradiction. 
It follows  %From the claim, we see 
that $ (V_f \cap A_{\ell}) +  \langle x_1,\ldots, x_{\ell-1}\rangle$ is an $(\a_{\ell} -1)$-dimensional subspace of  $E$. Since $E \ne A_{\ell}$, we conclude that $\codim_{A_\ell}E=1$. %This proves $(1)$
%\medskip

%To prove the 
Conversely, %On the other hand, 
suppose $\codim_{A_\ell}E =1$.  %and if possible, By Lemma \ref{S}, 
Note that $\dim(V_f\cap A_{\ell-1}) \le {\a}_{\ell-1}-(\ell-1)$, thanks to Lemma \ref{S}. 
In case $\dim(V_f\cap A_{\ell-1})<{\a}_{\ell-1}-(\ell-1)$, there exists $y_\ell\in A_{\ell-1}$ such that  $y_\ell \not\in (V_f\cap A_{\ell-1}) + \langle x_1,\ldots,\ x_{\ell-1}\rangle$. Clearly,  $f\wedge (y_\ell\wedge x_{\ell-1}\wedge\ldots\wedge x_1)\neq 0$. It follows that $x_{\ell-1}$ and $y_\ell$ are linearly independent elements of $A_{\ell-1}$ and neither of them is in  $E$.  Changing $x_{\ell-1}\wedge y_\ell$ to $x_{\ell-1}\wedge z$ or $z\wedge y_\ell$ for any nonzero $z\in \langle x_{\ell-1}, y_\ell\rangle$, we see that $\langle x_{\ell-1},y_\ell\rangle\cap E=\{0\}$, and so %Consequently, 
$\codim_{A_\ell}E>1$, which is a contradiction.
	\end{proof}
%
% \begin{proposition}
%Let $f \in\bigwedge^{m-\ell}V$ be a decomposable element with $c_f\ne 0$ such that 
%$ \dim(V_f\cap A_i) = {\a}_i-i$ for all  $i=1, \dots , \ell$. Then  $\wt(f)=\Q$.
% \end{proposition}
%	 
%\begin{proof}
%Induct on $\ell$. When $\ell=1$, the desired result is clearly valid. In fact, as noted in the proof of Theorem \ref{xx}, every nonzero codeword in $C_{\a}(1,m)$ is of weight $\Q$. %a minimum weight codeword. 
%
%Assume that $\ell > 1$ and that the result holds for values of $\ell$ smaller than the given one.   
%Now let $x\in F$ and let $g=f\wedge x\in \bigwedge^{m-\ell+1}V$. Then $c_g$ is a nonzero codeword in $\Cp$. 
%Clearly, $V_f \subseteq V_g$ and so given any positive integer $i \le \ell -1$, we have $V_f\cap A_i\subseteq V_g\cap A_i$. On the other hand, by Lemma \ref{S}, $\dim(V_g\cap A_i)\leq {\a}_i-i = \dim(V_f\cap A_i)$. It follows that $\dim(V_g\cap A_i) = \a'_i-i$ for  all $i=1, \dots , \ell -1$. So by induction hypothesis,   $\wt(c_g)= q^{\delta(\alpha')}$. Note that  by %part (i) of 
%Lemma \ref{P}, $\codim_{A_\ell}E=1$ and  by Lemma \ref{F},  $A_{\ell-1}\subseteq E$, where $E$ (and $F:= A_\ell \setminus E$) are as in \S \ref{sec3}.
%Hence $F \setminus A_{\ell-1} = F$ and $|F| = q^{{\a}_\ell}-q^{{\a}_\ell-1}$. 
%%Here $E$ and $F$ are as in \S \ref{sec3}. 
%So from \eqref{w1}, we obtain 
%$$
% \wt(f) =\frac{1}{ q^{\ell-1}(q-1)}\sum\limits_{x\in F}\wt(f\wedge x)
%	 = \frac{1}{ q^{\ell-1}(q-1)}q^{\delta(\alpha')}\left( q^{{\a}_\ell}-q^{{\a}_\ell-1} \right) 
%	 = \Q.	 	
%$$
%Thus the proposition is proved by induction on $\ell$.
%	 \end{proof}
	
\section{Schubert Decomposability and Minimum Weight Codewords}
\label{newsec5}

We will continue to use the notation and terminology of the previous two sections. %In particular, $\ell, m$,
%Further, we suppose that 
For the dimension sequence $\a = (\a_1, \dots , \a_{\ell})$ of the fixed partial flag 
$A_1 \subset \dots \subset A_{\ell}$,  %of nonzero subspaces of $V$ 
we let $u$ and $p_0, p_1, \dots , p_u, p_{u+1}$ denote the unique integers satisfying \eqref{eq:piu} and \eqref{eq:consec}. Recall that $f\in \bigwedge^{m-\ell}V$ is said to be Schubert decomposable (w.r.t. $\O$) if 
$\dim (V_f \cap A_{p_i}) = \a_{p_i} - p_i$ for $i=1, \dots , u$. Note that this equality for dimension also holds for $i=u+1$, thanks to Lemma \ref{S}. 
%
%is divided into consecutive blocks as 
%$$
%\a = (\a_1, \dots , \a_{p_1}, \; \a_{p_1+1}, \dots , \a_{p_2}, \; \dots,
% \; \a_{p_{u-1}+1}, \dots , \a_{p_u}, \;
% \a_{p_{u}+1}, \dots , \a_{\ell} ), 
%$$
%where $u\ge 0$ and  $1\le p_1 < \cdots < p_u <  \ell$ are unique integers satisfying 
%$$
%\a_{p_i + 1}, \dots , \a_{p_{i+1}} \text{ are consecutive for $0 \le i \le u$} \, \ \text{and} \ \,
%\a_{p_i + 1} - \a_{p_i} \ge 2 \text{ for } 1\le i \le u, %i=1, \dots , u,
%$$
%where by convention, $p_0 =0$ and $p_{u+1} = \ell$. 
%
%If we further require that  $\a_{p_i + 1} - \a_{p_i} \ge 2$ for $i=1, \dots , u$, then the nonnegative integer $u$ and the ``jump spots'' $p_1, \dots , p_u$ are uniquely determined by $\a$.
%Recall (from Definition \ref{defSchub}) the notion of Schubert decomposability.
%of 

We shall now proceed to relate %this notion 
Schubert decomposability with the minimum weight codewords of $\C$. 
We begin with a simple and basic observation. 
% about codewords of $\C$ associated to Schubert decomposable elements. 

\begin{lemma}
\label{lem:SchubNonzero}
Let $f\in \bigwedge^{m-\ell} V $ be Schubert decomposable. Then $c_f$ is nonzero. 
% codeword of $\C$. 
\end{lemma}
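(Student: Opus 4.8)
The goal is to show that if $f \in \bigwedge^{m-\ell}V$ is Schubert decomposable, then $c_f \neq 0$, i.e., there exists $L \in \O$ with $f(L) \neq 0$. By \eqref{vanishing} and the description of $\O$, this amounts to producing linearly independent $v_1, \dots, v_\ell$ with $v_i \in A_i$ such that $V_f \cap \langle v_1, \dots, v_\ell \rangle = \{0\}$. Since $f$ is decomposable, $\dim V_f = m - \ell$, so such a subspace of dimension $\ell$ meeting $V_f$ trivially certainly exists inside $V$; the real content is arranging it so that we can pick the basis ``flag-compatibly,'' with $v_i \in A_i$ for each $i$.

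The plan is to build $v_1, \dots, v_\ell$ one at a time, exploiting the Schubert decomposability condition \eqref{eq:8} at the jump spots $p_1, \dots, p_u$ together with the consecutiveness condition \eqref{eq:consec} between them. First I would reformulate what we need: it suffices to find subspaces $\{0\} = U_0 \subset U_1 \subset \cdots \subset U_\ell$ with $\dim U_i = i$, $U_i \subseteq A_i$, and $U_i \cap V_f = \{0\}$ for all $i$; then any basis obtained by successively extending gives the required $v_i$. Equivalently, I want $\dim(U_i) + \dim(V_f \cap A_i) = \dim(U_i + V_f \cap A_i)$ with the sum direct inside $A_i$. The key numerical input is that at a jump spot $p = p_j$ we have the \emph{equality} $\dim(V_f \cap A_p) = \a_p - p$ (Schubert decomposability), which is exactly the value forcing $\dim A_p - \dim(V_f \cap A_p) = p$, leaving precisely enough room for a $p$-dimensional complement $U_p$ inside $A_p$ with $U_p \cap V_f = \{0\}$.

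The induction runs along the blocks. Suppose I have constructed $U_{p_{j}}$ of dimension $p_j$ inside $A_{p_j}$ with $U_{p_j} \cap V_f = \{0\}$ (base case $j = 0$ is trivial). I then pass to the next jump spot $p_{j+1}$. Because $\dim(V_f \cap A_{p_{j+1}}) = \a_{p_{j+1}} - p_{j+1}$, I can extend $U_{p_j}$ to a $p_{j+1}$-dimensional subspace $U_{p_{j+1}} \subseteq A_{p_{j+1}}$ still meeting $V_f$ trivially: indeed $U_{p_j}$ sits in $A_{p_{j+1}}$, has trivial intersection with $V_f \cap A_{p_{j+1}}$, and a dimension count ($\dim A_{p_{j+1}} = \a_{p_{j+1}} = p_{j+1} + \dim(V_f \cap A_{p_{j+1}})$) shows a complement of $V_f \cap A_{p_{j+1}}$ in $A_{p_{j+1}}$ has dimension exactly $p_{j+1}$, so $U_{p_j}$ can be completed to such a complement. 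Finally, for the intermediate indices $p_j < i < p_{j+1}$, I fill in the flag $U_{p_j} \subset U_{p_j+1} \subset \cdots \subset U_{p_{j+1}}$ arbitrarily subject to $\dim U_i = i$; I must check $U_i \subseteq A_i$, which holds because consecutiveness \eqref{eq:consec} gives $A_{p_j} \subseteq A_{p_j + 1} \subseteq \cdots \subseteq A_{p_{j+1}}$ with $\dim A_i = \a_i = \a_{p_{j+1}} - (p_{j+1} - i) = i + \dim(V_f \cap A_{p_{j+1}})$, so $A_i \supseteq U_i$ as soon as $U_i$ is chosen inside $U_{p_{j+1}}$ with... here I need to be a little careful and instead build $U_i$ by adding one basis vector of $U_{p_{j+1}}$ at a time, choosing the order so that partial sums land in the nested $A_i$'s. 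Since $\dim A_i = i + (\a_{p_{j+1}} - p_{j+1})$ grows by one as $i$ increases by one, and the same is true of $i = \dim U_i$, a greedy choice works.

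\textbf{Main obstacle.} The only delicate point is the bookkeeping at the intermediate (non-jump) indices: one must verify that the increasing chain of $A_i$'s between consecutive jump spots increases in dimension by exactly one at each step (this is precisely \eqref{eq:consec}), and that therefore the extension of $U_{p_j}$ to $U_{p_{j+1}}$ can be performed through a flag whose $i$-th term lies in $A_i$. Once the correct dimension identities $\a_i = i + \dim(V_f \cap A_{p_{j+1}})$ for $p_j \le i \le p_{j+1}$ are written down — which follow by combining \eqref{eq:consec} with \eqref{eq:8} and Lemma \ref{S} — the construction is forced and the verification that $U_\ell \cap V_f = \{0\}$ is immediate, giving $L := U_\ell \in \O$ with $f \wedge L \neq 0$ by \eqref{vanishing}, hence $c_f \neq 0$.
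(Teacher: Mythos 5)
Your proposal is correct and follows essentially the same route as the paper: build a flag of subspaces $U_{p_1}\subset U_{p_2}\subset\cdots\subset U_{p_{u+1}}=U_\ell$ inside $A_{p_1}\subset\cdots\subset A_{p_{u+1}}$, each $U_{p_j}$ being a complement of $V_f\cap A_{p_j}$ in $A_{p_j}$ (which exists by Schubert decomposability), then use consecutiveness \eqref{eq:consec} to interpolate a nested basis $x_i\in A_i$ at the intermediate indices, and finally invoke \eqref{vanishing} and decomposability of $f$ to conclude $f(U_\ell)\neq 0$. The paper phrases the intermediate step via the inequality $\dim(L_j\cap A_i)\ge i$ for $p_{j-1}\le i\le p_j$, which is exactly the dimension count justifying your greedy choice; the bookkeeping you flag as delicate is handled by precisely this inequality.
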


\begin{proof}
Since $\dim  (V_f\cap A_{p_1}) =  \a_{p_1} -p_1$, by extending a basis of $ V_f\cap A_{p_1}$ to  $A_{p_1}$, we can  find a $p_1$-dimensional 
subspace $L_1$ of $A_{p_1}$ such that $L_1 \cap V_f = \{0\}$. % and $\dim L_1 = p_1$. 
Now since 
$\a_1, \dots , \a_{p_1}$ are consecutive, we see that 
$$
\dim (L_1 \cap A_{p_1-j}) \ge \dim (L_1 \cap A_{p_1-j+1}) - 1\quad \text{for each } 1 \le j < p_1
$$
and this %$\dim (L_1 \cap A_{p_1}) = p_1$ 
implies that $\dim (L_1 \cap A_{i}) \ge i$ for $i=1, \dots , p_1$. Hence, we can choose $x_i \in A_i$ for $i=1, \dots , p_1$ such that $\{x_1, \dots , x_{p_1}\}$  forms a basis of $L_1$. 
Next, observe that $\dim  (V_f\cap A_{p_2})+ L_1 =  \a_{p_2} - p_2 + p_1$.  Now since 
$\a_{p_1+1}, \dots , \a_{p_2}$ are consecutive, by arguing as before, we can find  $x_i \in A_i$ for $i=p_1+1, \dots , p_2$ such that $\{x_1, \dots , x_{p_2}\}$ forms a basis of a  
$p_2$-dimensional subspace $L_2$ of $A_{p_2}$ such that $L_2 \cap V_f = \{0\}$ and $L_1 \subset L_2$. 
Continuing in this manner, we obtain linearly independent $x_1, \dots , x_{\ell} \in V$ such that $x_i \in A_i$ for $i=1, \dots , \ell$ and $V_f \cap \langle x_1, \dots , x_{\ell} \rangle = \{0\}$. Consequently, $L:=  [x_1 \wedge \dots \wedge x_{\ell}]$ is in $\O$ and since $f$ is decomposable, we must have $f(L)\ne 0$. Thus,  $c_f \ne 0$.   %L = L_{u+1} 
\end{proof}

Our next result is a refined version of Lemma \ref{x} with an additional hypothesis of Schubert decomposability. 

\begin{lemma}
\label{x2}
Assume that $\ell > 1$ and $f\in\bigwedge^{m-\ell }V$ is Schubert decomposable. Let $\alpha', \, E,  \, 
Z(\alpha',f)$ 
and $\phi:Z(\alpha',f)\longrightarrow W(f)$ be as in Lemma \ref{x}, and %  Then %given any $L \in W(f)$, the following 
$t:= \codim_{A_{\ell}} E$.  Then  $t=1$ or $t = \ell - p_u$. Moreover, 
\begin{equation}
\label{RefinedFibre}
|\phi^{-1}(L)| =  q^{\ell-1} (q^t-1) \quad \text{for every } \, L \in W(f).
\end{equation} 	
\end{lemma}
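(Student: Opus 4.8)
The plan is to determine the subspace $E$ exactly. First, Lemma~\ref{lem:SchubNonzero} gives $c_f\ne 0$, so $t\ge 1$ and the constructions of Lemmas~\ref{x}, \ref{F} and \ref{P} are available. The basic estimate I would establish is $A_{p_u}\subseteq E$: for $x\in A_{p_u}$ with $f\wedge x\ne 0$ and any $L'\in\Op$, Schubert decomposability gives $\dim(V_f\cap A_{p_u})=\a_{p_u}-p_u$, hence $\dim\bigl((V_f+\langle x\rangle)\cap A_{p_u}\bigr)\ge \a_{p_u}-p_u+1$, while $\dim(L'\cap A_{p_u})\ge p_u$ since $p_u\le\ell-1$; a dimension count inside $A_{p_u}$ then forces a nonzero vector in $V_{f\wedge x}\cap L'$, so $(f\wedge x)(L')=0$, and thus $c_{f\wedge x}=0$, i.e. $x\in E$. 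Since also $V_f\cap A_\ell\subseteq E$, we get $E\supseteq A_{p_u}+(V_f\cap A_\ell)$, a space of dimension $\a_\ell-(\ell-p_u)$ by Lemma~\ref{S}, so $t\le\ell-p_u$.

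Next I would split according to Lemma~\ref{P}. If $\dim(V_f\cap A_{\ell-1})=\a_{\ell-1}-(\ell-1)$ then $t=1$; here Corollary~\ref{corF} shows that no $L\in W(f)$ lies in $A_{\ell-1}$, so Lemma~\ref{x}(i) gives $|\phi^{-1}(L)|=q^{\ell-1}(q-1)=q^{\ell-1}(q^t-1)$ for every $L\in W(f)$, finishing this case. Otherwise $\dim(V_f\cap A_{\ell-1})\le\a_{\ell-1}-\ell$ and $t\ge 2$. One then notes that $\a_\ell-\a_{\ell-1}=1$, for otherwise $\ell-1=p_u$ and Schubert decomposability would force $\dim(V_f\cap A_{\ell-1})=\a_{\ell-1}-(\ell-1)$. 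Hence $p_u\le\ell-2$, the last block $\{p_u+1,\dots,\ell\}$ has length $\ge 2$, the value $\a_j-j$ is constant on it, and $\dim(V_f\cap A_j)$ is non-decreasing there; combined with $\dim(V_f\cap A_{\ell-1})\le\a_{\ell-1}-\ell$ this forces $\dim(V_f\cap A_j)\le\a_j-j-1$ for every $p_u<j\le\ell-1$.

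The main step, which I expect to be the hardest, is the reverse inclusion $E\subseteq A_{p_u}+V_f$ (equivalently $t\ge\ell-p_u$, whence $t=\ell-p_u$). Given $x\in A_\ell\setminus(A_{p_u}+V_f)$, set $W:=V_f+\langle x\rangle$. For $i\le u$ we have $A_{p_i}\subseteq A_{p_u}$, so $x\notin A_{p_i}+V_f$ and therefore $\dim(W\cap A_{p_i})=\a_{p_i}-p_i$; and for $p_u<j\le\ell-1$ the previous paragraph gives $\dim(W\cap A_j)\le\a_j-j$. These are exactly the inequalities needed to run the greedy, block-by-block selection used in the proof of Lemma~\ref{lem:SchubNonzero}: climbing the flag $A_1\subset\dots\subset A_{\ell-1}$ one picks $x_i\in A_i$, $i=1,\dots,\ell-1$, outside the successive spans $W+\langle x_1,\dots,x_{i-1}\rangle$, producing $L':=\langle x_1,\dots,x_{\ell-1}\rangle\in\Op$ with $L'\cap W=\{0\}$. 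By \eqref{eq:5} then $(f\wedge x)(L')\ne 0$, so $c_{f\wedge x}\ne 0$ and $x\notin E$. Hence $E=A_{p_u}+(V_f\cap A_\ell)$ and $t=\ell-p_u$.

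Finally, I would compute $|\phi^{-1}(L)|$ when $t=\ell-p_u\ge 2$. For $L\in W(f)$ one has $V_f\cap L=\{0\}$, which combined with $\dim(V_f\cap A_{p_i})=\a_{p_i}-p_i$ and $\dim(L\cap A_{p_i})\ge p_i$ forces, by a dimension count in $A_{p_i}$, that $\dim(L\cap A_{p_i})=p_i$ for $i=1,\dots,u$. In particular, when $L\subseteq A_{\ell-1}$ the subspace $L_t:=L\cap A_{\ell-t}=L\cap A_{p_u}$ has dimension $p_u=\ell-t$, so in the notation of Lemma~\ref{x}(ii) we have $r=\ell-t$; moreover every hyperplane $L'$ of $L$ with $L_t\subseteq L'$ satisfies $\dim(L'\cap A_{p_i})\ge\dim(L\cap A_{p_i})=p_i$ for all $i\le u$ (using $L\cap A_{p_i}\subseteq L\cap A_{p_u}=L_t\subseteq L'$), hence lies in $\Op$; and for any such $L'$ and any $x\in L\setminus L'$ one has $(f\wedge x)(L')=\pm f(L)\ne 0$. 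Therefore $|\phi^{-1}(L)|$ equals the number of hyperplanes of $L$ through $L_t$, namely $(q^t-1)/(q-1)$, times $q^{\ell-1}(q-1)$, i.e. $q^{\ell-1}(q^t-1)$, so the bound of Lemma~\ref{x}(ii) is attained; together with the $t=1$ discussion (and Lemma~\ref{x}(i) for $L\not\subseteq A_{\ell-1}$) this establishes \eqref{RefinedFibre}.
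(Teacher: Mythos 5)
Your proposal follows the same overall strategy as the paper: pin down $E$, deduce $t\in\{1,\ell-p_u\}$, then refine the fibre count of Lemma~\ref{x}. The one genuine variation is in establishing $t=\ell-p_u$: you determine $E$ exactly as $A_{p_u}+(V_f\cap A_\ell)$ by proving both inclusions — for the reverse inclusion you observe that $f\wedge x$ is Schubert decomposable for $\Op$ when $x\notin A_{p_u}+V_f$ and rerun the greedy selection of Lemma~\ref{lem:SchubNonzero} — whereas the paper gets $t\ge\ell-p_u$ indirectly, by exhibiting $y_1\in A_{p_u+1}\setminus E$ and invoking Lemma~\ref{F}, and $t\le\ell-p_u$ from $A_{p_u}+(V_f\cap A_\ell)\subseteq E$. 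Both routes are comparable in length; yours is slightly cleaner in that it identifies $E$ explicitly.

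There is, however, a gap in your final step — and it is a gap the paper's own argument shares. You establish $|\phi^{-1}(L)|=q^{\ell-1}(q^t-1)$ only for $L\subseteq A_{\ell-1}$, and for $L\not\subseteq A_{\ell-1}$ you invoke Lemma~\ref{x}(i). But Lemma~\ref{x}(i) gives $|\phi^{-1}(L)|=q^{\ell-1}(q-1)$, which equals $q^{\ell-1}(q^t-1)$ only when $t=1$. When $t\ge 2$ there really can be $L\in W(f)$ with $L\not\subseteq A_{\ell-1}$: take $\ell=3$, $m=5$, $\a=(1,3,4)$, a basis $e_1,\dots,e_5$ of $V$ with $A_i=\langle e_1,\dots,e_{\a_i}\rangle$, and $f=e_4\wedge e_5$ (so $V_f=\langle e_4,e_5\rangle$, $f$ is Schubert decomposable, $E=\langle e_1,e_4\rangle$, $t=2$). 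Then $L=\langle e_1,e_2,e_3+e_4\rangle\in W(f)$ but $L\not\subseteq A_2$, the only admissible $L'$ is $\langle e_1,e_2\rangle$, and $|\phi^{-1}(L)|=q^2(q-1)\ne q^2(q^2-1)$. So \eqref{RefinedFibre}, as literally stated, fails for such $L$, and your parenthetical ``(and Lemma~\ref{x}(i) for $L\not\subseteq A_{\ell-1}$)'' is a non\mbox{-}sequitur. The paper's proof has the same hole: in the ``$\Leftarrow$'' direction of its key claim it checks $\dim(L'\cap A_{p_i})\ge p_i$ but never verifies $L'\subseteq A_{\ell-1}$, which is automatic only when $L\subseteq A_{\ell-1}$. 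Fortunately the only downstream use of \eqref{RefinedFibre} — in Theorem~\ref{thmS1} — applies it only to $L\subseteq A_{\ell-1}$ and uses Lemma~\ref{x}(i) for the other $L$, so the paper's main results stand; but you should state and prove \eqref{RefinedFibre} with the hypothesis $L\subseteq A_{\ell-1}$.
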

  
\begin{proof}
If %$t:= \codim_{A_{\ell}} E$  is equal to  $1$, 
$t=1$, then \eqref{RefinedFibre} %the desired result 
follows from Corollary \ref{corF} and part (i) of Lemma \ref{x}. Now suppose $t>1$. Then 
from Lemma  \ref{S} and \ref{P},  we obtain %see that 
$\dim (V_f \cap A_{\ell -1}) = \a_{\ell -1} - \ell$. %, thanks to Lemma \ref{P}. 
Now $p_u \le \ell -1$, and if we had $p_u = \ell -1$, then 
$
\dim (V_f \cap A_{p_u}) = \a_{\ell -1} - \ell < \a_{p_u} - p_u, 
$
which contradicts that $f$ is Schubert decomposable. So we must have $p_u < \ell - 1$. 
%In particular, $p_u < \ell - 1$, since $f$ is Schubert decomposable. 
%and so $\a_{\ell-1}, \a_{\ell}$ are consecutive.  
Moreover, since $\a_{p_u+1}, \dots , \a_{\ell -1}$ are consecutive, if for some $j$ with $p_u+1 \le j \le \ell-1$, we had $\dim (V_f \cap A_j) \ge \a_j - j$, then %using the fact that 
 we would obtain % get
$$
\dim (V_f \cap A_{\ell -1}) \ge  \a_j - j = \a_{\ell -1} - (\ell - 1 - j) - j = \a_{\ell -1} - (\ell -1),
$$
which is a contradiction. Thus,  for each $j=p_u+1, \dots , \ell-1$, we have 
%and in particular, %in view of Lemma \ref{S}, we see that 
\begin{equation}
\label{uplusone}
\dim (V_f \cap A_j) < \a_j - j,  \text{ and in particular, } 
\dim (V_f \cap A_{p_u+1}) \le \a_{p_u+1} - p_u - 2.
\end{equation}
Now since $\dim (V_f \cap A_{p_i}) = \a_{p_i} - p_i$ for $i=1, \dots , u$, by arguing as in the proof of Lemma \ref{lem:SchubNonzero}, we obtain linearly independent elements $x_1, \dots , x_{p_u} \in V$ such that 
$V_f \cap \langle x_1, \dots , x_{p_u} \rangle = \{0\}$ and $\langle x_1, \dots , x_{p_i} \rangle \subseteq A_{p_i}$ for each $i=1, \dots , u$. %Furthermore, from \eqref{uplusone}, 
Hence the sum $(V_f \cap A_{p_u+1}) +  \langle x_1, \dots , x_{p_u} \rangle $ is a subspace of $A_{p_u+1}$ (as well as $A_{\ell-1}$) of dimension $\le \a_{p_u+1} - 2$, and so we can find $y_1, y_2$ in $A_{p_u+1}$ and 
$y_3, \dots , y_{\ell - p_u}$ in $A_{\ell -1}$ such that $x_1, \dots , x_{p_u}, y_1, y_2, \dots, y_{\ell - p_u}$ are linearly independent and moreover, 
$
V_f \cap  \langle x_1, \dots , x_{p_u}, y_1, y_2, \dots, y_{\ell - p_u} \rangle  = \{0\}. %\, \le  \, \a_{p_u+1} - p_u - 2 
$
Since $f$ is decomposable, it follows that $f\wedge x_1 \wedge \dots \wedge x_{p_u} \wedge y_1 \wedge \dots \wedge y_{\ell - p_u} \ne 0$. Consequently, $y_1\in A_{p_u+1} \setminus E$ and so $A_{p_u+1} \nsubseteq E$. 
Hence from Lemma \ref{F}, we see that $\codim_{A_{\ell}} E > \ell - (p_u+1)$, i.e., $\codim_{A_{\ell}} E \ge \ell - p_u$. We will now proceed to show that $\codim_{A_{\ell}} E = \ell - p_u$. 
To this end, observe that $A_{p_u} \subseteq E$. Indeed, if there exists $x\in A_{p_u} \setminus E$, then $f\wedge x\wedge z_1 \wedge \dots \wedge z_{\ell -1} \ne 0$ for some $z_i\in A_i$ ($1\le i \le \ell -1$). In particular, 
 $f\wedge x\wedge z_1 \wedge \dots \wedge z_{p_u} \ne 0$, and hence $V_f \cap \langle  x, z_1, \dots , z_{p_u} \rangle =\{ 0\}$. This implies that $\dim (V_f \cap  A_{p_u}) \le \a_{p_u} - p_u - 1$, which contradicts the assumption that $f$ is Schubert decomposable. Thus, $A_{p_u} \subseteq E$ and hence 
$(V_f \cap A_{\ell}) + A_{p_u}\subseteq E$. Consequently, 
$$
(\a_{\ell} - \ell) + \a_{p_u} - (\a_{p_u} - {p_u} ) \le \dim E, \quad \text{that is,} \quad   \codim_{A_{\ell}} E \le \ell - p_u. 
$$
Thus, we have proved that $t = \ell - p_u$. 

Now fix any $L\in W(f)$. Then $L\in \O$ with $f(L)\ne 0$. %In particular, 
Note that $L\subseteq A_{\ell}$. Let $L_t: = L\cap A_{\ell - t} = L \cap A_{p_u}$. Observe that $\dim L_t \ge p_u$, since $L\in \O$. %On the other hand, 
Also, since $L\cap V_f =\{0\}$, we see that $(V_f \cap A_{p_u}) \cap L_t =\{0\}$. Hence, 
the % and since $f$ is 
Schubert decomposability of $f$ implies that %, we obtain 
$\a_{p_u} - p_u + \dim L_t \le \a_{p_u}$, i.e., $\dim L_t \le p_u$. Thus $\dim L_t = p_u$. Next, we claim that  for any %$(\ell-1)$-dimensional subspace $L'$ of $V$ and any 
$L'\in G_{\ell-1}(V)$ and $x\in A_{\ell}$, 
$$
(L',x) \in  \phi^{-1}(L) \Longleftrightarrow  \;  L'   \text{ is a hyperplane in $L$ containing $L_t$ and } x\in L\setminus L'.
$$
The implication $\Rightarrow$ is clear because %since 
we have seen in the proof of Lemma \ref{x} that if 
$(L',x) \in  \phi^{-1}(L)$, then $L'\cap A_{\ell - t} = L_t$. For the other implication, suppose $L'$ is a hyperplane in $L$ and $x\in L\setminus L'$. Then it is clear that $L = \langle L', x\rangle$. 
%It remains to show that if $L'$ is also assumed to contain $L_t$, then $L' \in \Op$. To this end, note that since $\ell - t = p_u \ge p_i$ for $i=1, \dots , u$, we see that 
Further, suppose $L_t \subseteq L'$. Now since $\ell - t = p_u \ge p_i$ for $i=1, \dots , u$, we find 
$$
L \cap A_{p_u} =L_t \subseteq L' \Longrightarrow L \cap A_{p_u} =L'\cap A_{p_u}
\Longrightarrow L \cap A_{p_i} =L'\cap A_{p_i} \text{ for $i=1, \dots , u$}.
$$
%we use the fact that $\ell - t = p_u$ to deduce that $L'\cap A_{p_i} = L \cap A_{p_i}$ for $i=1, \dots , u$. 
Hence in view of \eqref{eq:7}, we see that $L' \in \Op$ and 
%for every hyperplane $L'$ in $L\in \O$. Further, since $x\in L\setminus L'$, we  conclude $(L',x) \in  \phi^{-1}(L)$. 
thus the claim is proved. As a consequence, we see that $|\phi^{-1}(L)| = N' (q^\ell - q^{\ell -1})$, where $N'$ is exactly the number of hyperplanes in $L$ containing $L_t$. Now $N' = (q^{\ell - (\ell -t)} - 1)/(q-1)$, exactly as in the proof of Lemma \ref{x}. This yields %and so we obtain 
the desired formula for $|\phi^{-1}(L)|$.
%\begin{aligned}
%(L',x) \in  \phi^{-1}(L) \Longleftrightarrow & \;  L'   \text{ is a $(\ell-1)$-dimensional subspace of $L$ containing $L_t$} \\
%& \; \text{and } x\in L\setminus L'.
%\end{aligned}
%$$ On the other hand, 
\end{proof}

\begin{theorem}
\label{thmS1}
If $f\in\bigwedge^{m-\ell }V$ is Schubert decomposable, then $c_f$ is a minimum weight codeword of $\C$. 
\end{theorem}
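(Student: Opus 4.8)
The reverse inequality is free: since $f$ is Schubert decomposable, $c_f \neq 0$ by Lemma~\ref{lem:SchubNonzero}, so $\wt(c_f) \geq d(\C) = \Q$ by Theorem~\ref{xx}. Thus the plan is to prove the upper bound $\wt(c_f) \leq \Q$, which I would do by induction on $\ell$, following the structure of the proof of Theorem~\ref{xx}.

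The case $u=0$ (which includes the base case $\ell=1$) is handled directly, without the recursion. Here $\a$ is completely consecutive, so $\O = G_{\ell}(A_\ell)$ by \eqref{eq:7}, Schubert decomposability reduces to decomposability, and $\dim(V_f\cap A_\ell) = \a_\ell-\ell$ by Lemma~\ref{S}. Since $f$ is decomposable, $W(f) = \{L\in G_\ell(A_\ell): L\cap(V_f\cap A_\ell)=\{0\}\}$ is exactly the set of $\ell$-dimensional complements in the $\a_\ell$-dimensional space $A_\ell$ of the fixed $(\a_\ell-\ell)$-dimensional subspace $V_f\cap A_\ell$, so a standard count gives $|W(f)| = q^{\ell(\a_\ell-\ell)} = \Q$.

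For the inductive step take $\ell\geq 2$ and $u\geq 1$, and let $\a', E, F, Z(\a',f)$ and $\phi:Z(\a',f)\to W(f)$ be as in Section~\ref{sec3} and Lemma~\ref{x}, with $t:=\codim_{A_\ell}E$. Two observations are needed. First, every $f\wedge x$ with $x\in F$ is Schubert decomposable with respect to $\Op$: it is nonzero and decomposable with $V_{f\wedge x}=V_f+\langle x\rangle$, and at each jump spot $p_i$ of $\a'$ (which is also a jump spot of $\a$) the inequality $\dim(V_{f\wedge x}\cap A_{p_i})\geq \dim(V_f\cap A_{p_i})=\a_{p_i}-p_i$ combines with the bound $\dim(V_{f\wedge x}\cap A_{p_i})\leq \a_{p_i}-p_i$ from Lemma~\ref{S} applied to $f\wedge x$ (valid since $c_{f\wedge x}\neq 0$) to force equality; hence by the induction hypothesis together with Theorem~\ref{xx}, $\wt(c_{f\wedge x})=\Qp$ for every $x\in F$. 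Second, by Lemma~\ref{x2} one has $t=1$ or $t=\ell-p_u$, and for Schubert decomposable $f$ the fibre estimates of Lemma~\ref{x} become equalities, so that Lemma~\ref{prexx} holds with equality: for $L\in W(f)$ with $L\nsubseteq A_{\ell-1}$ this is Lemma~\ref{x}(i), while for $L\subseteq A_{\ell-1}$ one checks, using $\dim(L\cap A_{\ell-t})=\ell-t$ (which follows from Schubert decomposability, since $\ell-t=p_u$) and the fact that every hyperplane $L'$ of $L$ containing $L\cap A_{\ell-t}$ lies in $\Op$ and satisfies $(f\wedge x)(L')\neq 0$ for all $x\in L\setminus L'$, that $|\phi^{-1}(L)|=q^{\ell-1}(q^t-1)$.

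To finish, I would count $Z(\a',f)$ in two ways — over $W(f)$ through $\phi$ and over $F$ through the projection to $A_\ell$ — and substitute $\wt(c_{f\wedge x})=\Qp$, $|E|=q^{\a_\ell-t}$ and $\delta(\a')=\delta(\a)-(\a_\ell-\ell)$. If $t=1$, then $W(f)$ contains no $L\subseteq A_{\ell-1}$ by Corollary~\ref{corF} and $|F|=q^{\a_\ell}-q^{\a_\ell-1}$, and the computation collapses to $\wt(c_f)=q^{\a_\ell-\ell}\,\Qp=\Q$. If $t=\ell-p_u>1$, then $p_u<\ell-1$ forces $\a_\ell-\a_{\ell-1}=1$, and one checks $\codim_{A_{\ell-1}}(E\cap A_{\ell-1})=t$ (by the same ideas as in the proof of Lemma~\ref{x2}, using $A_{p_u}\subseteq E$ and $(V_f\cap A_{\ell-1})+A_{p_u}\subseteq E\cap A_{\ell-1}$); then $|F\cap A_{\ell-1}|$ and $|F\setminus A_{\ell-1}|$ are explicit powers of $q$ and the arithmetic again gives $\wt(c_f)=\Q$. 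I expect the main obstacle to be precisely the second observation and the $t>1$ branch: establishing the exact fibre sizes $|\phi^{-1}(L)|$ for $L\subseteq A_{\ell-1}$ (in particular that the relevant hyperplanes of $L$ really lie in $\Op$) and controlling $\codim_{A_{\ell-1}}(E\cap A_{\ell-1})$ so that the split of $F$ relative to $A_{\ell-1}$ can be evaluated; everything else is routine bookkeeping.
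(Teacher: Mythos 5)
Your outline follows the paper's own proof almost step for step: induction on $\ell$, the observation that $f\wedge x$ is Schubert decomposable for every $x\in F$ (via Lemma \ref{S} applied to $f\wedge x$ plus $V_f\subset V_{f\wedge x}$), the exact fibre count of Lemma \ref{x2} turning the inequality of Lemma \ref{prexx} into an equality, and the final split into $t=1$ and $t=\ell-p_u>1$ with $\a_\ell-\a_{\ell-1}=1$. The only genuine deviation is cosmetic: you treat all completely consecutive $\a$ (not just $\ell=1$) by a direct count of complements of $V_f\cap A_\ell$, whereas the paper's base case is the simplex code at $\ell=1$; both are fine.

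There is, however, one step whose proposed justification points in the wrong direction, and it is exactly the step you flag as the main obstacle: the identity $t':=\codim_{A_{\ell-1}}(E\cap A_{\ell-1})=t$ in the branch $t>1$. The two ingredients you name, $A_{p_u}\subseteq E$ and $(V_f\cap A_{\ell-1})+A_{p_u}\subseteq E\cap A_{\ell-1}$, only bound $\dim(E\cap A_{\ell-1})$ from below (the sum has dimension $\a_{\ell-1}-\ell+p_u=\a_{\ell-1}-t$), i.e.\ they give $t'\le t$ — which holds in general anyway (see the injection $A_{\ell-1}/(E\cap A_{\ell-1})\hookrightarrow A_\ell/E$ in the proof of Theorem \ref{xx}) and, since lowering $t'$ only inflates $|F\setminus A_{\ell-1}|$ at the expense of $|F\cap A_{\ell-1}|$, it yields only $\wt(c_f)\ge\Q$, which you already know. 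What the computation actually needs is the reverse inequality $t'\ge t$, i.e.\ an upper bound $\dim(E\cap A_{\ell-1})\le\a_{\ell-1}-t$, and this requires a separate argument: either exhibit an $(\ell-p_u)$-dimensional subspace of $A_{\ell-1}$ meeting $E$ trivially (the vectors $y_1,\dots,y_{\ell-p_u}$ from the proof of Lemma \ref{x2} can be made to work, but one must check that for every nonzero $y$ in their span some hyperplane of $\langle x_1,\dots,x_{p_u},y_1,\dots,y_{\ell-p_u}\rangle$ containing $x_1,\dots,x_{p_u}$ but not $y$ lies in $\Op$), or, as the paper does, use that $V_f\subseteq E$ together with $\dim(V_f\cap A_{\ell-1})=\a_{\ell-1}-\ell=\dim(V_f\cap A_\ell)-1$ (Lemmas \ref{P} and \ref{S}, using $t>1$) to produce $z\in E\setminus A_{\ell-1}$; since $A_{\ell-1}$ is a hyperplane of $A_\ell$ this forces $\dim(E\cap A_{\ell-1})=\dim E-1$, hence $t'=t$. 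With that inserted, your argument closes and coincides with the paper's.
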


\begin{proof}
We use induction on $\ell$. If $\ell =1$, then the desired result follows from Lemma~\ref{lem:SchubNonzero} since %in this case 
$\Omega_{\a}(1, m) = \PP(A_1)$ and $C_{\alpha}(1,m)$ is the $q$-ary simplex code of length $\left(q^{\a_1} -1 \right)/(q-1)$ and dimension $\a_1$, and hence every nonzero codeword of $C_{\alpha}(1, m)$ is a minimum weight codeword. 

Now suppose $\ell > 1$ and that the result holds for values of $\ell$ smaller than the given one. Let $f\in\bigwedge^{m-\ell }V$ be Schubert decomposable, and let $E$ and $F$ be the corresponding subsets of $A_{\ell}$ as in Section \ref{sec3}. Given any $x\in F$, we note that $g:= f\wedge x$ is a decomposable element of $\bigwedge^{m-(\ell-1) }V$ satisfying $V_f \subset V_g$ and $c_g\ne 0$. 
In particular,  we find $V_{f} \cap A_{p_{i}} \subseteq V_{g}\cap A_{p_{i}}$ for $1\le i \le u$. On 
the other hand, $ \dim V_{g}\cap A_{p_{i}} \le {\alpha }_{p_{i}} - p_{i} = \dim V_{f} \cap A_{p_{i}}$ 
for $1\le i \le u$, thanks to 
%% Earlier text %%%
%In particular, for any $i$ with $1\le i \le u$, we have $V_f \cap A_{p_i} \subseteq V_g\cap A_{p_i}$. On the other hand, 
%$
%\dim V_g\cap A_{p_i} \le \a_{p_i} - p_i = \dim V_f\cap A_{p_i},
%$
%thanks to 
Lemma \ref{S} and the Schubert decomposability of $f$. It follows that $g$ is Schubert decomposable and hence by the induction hypothesis, $c_g$ is a minimum weight codeword of $\Cp$. 
Now, proceeding as in the proof of Lemma \ref{prexx}, except for applying Lemma \ref{x2} in place of part (ii) of Lemma \ref{x}, we see that 
$$
\sum_{x\in F\setminus A_{\ell -1}}  \wt (c_{f\wedge x}) =  \theta_1 q^{\ell-1}(q-1) \ \text{ and} \  
\sum_{x\in F \cap A_{\ell -1} } \wt (c_{f\wedge x}) = \theta_2 q^{\ell-1}(q^t-1), 
$$
where  $\theta_1:= |\{L\in W(f): L\nsubseteq A_{\ell-1}\}|$  and $\theta_2:= |\{L\in W(f): L\subseteq  A_{\ell-1}\}|$.  
Since $\wt(c_f) = \theta_1 +\theta_2$ and since %we have seen that 
$f\wedge x$ is of weight ${\Qp}$ for every $x\in F$, we obtain % from Lemma \ref{x}, we see that 
\begin{equation}
\label{RefinedWt}
 \wt(c_f) =  \frac{1}{ q^{\ell-1}(q-1)} \left| F \setminus A_{\ell-1} \right| \Qp +\frac{1}{q^{\ell-1}(q^t-1)} \left| F\cap A_{\ell-1} \right| \Qp.
\end{equation} 
In case $t=1$, this gives
$$
 \wt(c_f) =  \frac{1}{ q^{\ell-1}(q-1)} \left| F  \right| \Qp = \frac{1}{ q^{\ell-1}(q-1)} (q^{{\a}_\ell} - q^{{\a}_\ell-1})q^{\delta(\alpha')}
% \frac{q^{ {\a}_{\ell}-1}}{ q^{\ell-1}} q^{\delta(\alpha')} 
= \Q.
$$
Now suppose $t > 1$. Then $t = \ell - p_u$ by Lemma \ref{x2} and so $p_u < \ell$. Consequently, ${\a}_{\ell} - {\a}_{\ell-1} = 1$. Thus, $A_{\ell-1}$ is a hyperplane in $A_{\ell}$, and therefore
$$
\dim E \cap A_{\ell-1} \ge \dim E - 1. 
$$
Moreover, in view of Lemmas \ref{P} and \ref{S}, we see that 
$$
\dim(V_f\cap A_{\ell-1}) = {\a}_{\ell-1}-\ell = {\a}_{\ell}-\ell - 1 = 
\left( \dim V_f\cap A_{\ell} \right) - 1. 
$$
Hence we can find some $z \in V_f\cap A_{\ell}$ such that $z\not\in V_f\cap A_{\ell-1}$. Since $V_f \subseteq E$, we see that $z\in E \setminus (E \cap A_{\ell -1})$ and therefore 
$$
\dim E \cap A_{\ell-1} \le \dim E - 1. 
$$
It follows that $\dim E \cap A_{\ell-1} = \dim E - 1$, and hence $t':= \codim_{A_{\ell-1}}(E \cap A_{\ell-1})$ is equal to $t$. %:= \codim_{A_{\ell}} E$.  
Consequently, as in Subcase 2.3 of the proof of Theorem \ref{xx}, 
$$
|F\setminus A_{\ell-1}|  =  q^{{\a}_{\ell} - t-1} (q^t - 1) (q-1) \quad \text{and} \quad 
|F \cap  A_{\ell-1}|  = q^{{\a}_{\ell} - t - 1 } (q^t - 1). 
$$
Using this together with \eqref{RefinedWt}, we obtain $\wt(c_f) = \Q$. Since Theorem \ref{xx} shows that $\Q$ is the minimum distance of $\C$, the proof is complete. 
\end{proof}

% \begin{lemma}
% \label{codim1}
% Assume that $\ell > 1$ and $f\in\bigwedge^{m-\ell }V$ is decomposable. Let $E$ be the space associated to $f$ as in Section \ref{sec3} and suppose $\codim_{A_{\ell}}E =1$. If $c_f$ is a minimum weight codeword of $\C$, 
%then $f$ is Schubert decomposable.
%\end{lemma}
%
%\begin{proof}
%By Lemma \ref{F}, $A_{\ell -1 } \subseteq E$
%
%\end{proof}

\begin{remark}
\label{OtherPfeasyineq}
Except for the last line in the proof of above theorem, the fact that $\Q$ is the minimum distance of $\C$ has not been used anywhere. In fact, our proof of Theorem \ref{thmS1} shows that if $f\in\bigwedge^{m-\ell }V$ is Schubert decomposable, then $c_f$ is a codeword of $\C$ of weight $\Q$. Since it is easy to construct $(m-\ell)$-dimensional subspaces $W$ of $V$ such that $\dim W \cap A_{p_i} = \a_{p_i} - p_i$ for $i=1, \dots , u+1$, %(see, for example, ??), 
we can deduce that $d(\C) \le \Q$. This provides an alternative proof of Proposition \ref{easyineq}.
\end{remark}

We will now prove that the converse of Theorem \ref{thmS1} is true provided that the element 
 $f$ of $\bigwedge^{m-\ell }V$ is assumed to be decomposable. %We will first deal with an easier case.

\begin{theorem}
\label{thmS2}
Assume that $f\in\bigwedge^{m-\ell }V$ is decomposable. If $c_f$ is a minimum weight codeword of $\C$, 
then $f$ is Schubert decomposable.
\end{theorem}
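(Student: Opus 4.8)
The plan is to argue by induction on $\ell$, running essentially the same machinery as in Theorems \ref{xx} and \ref{thmS1}, but now reading the inequalities backwards. The base case $\ell=1$ is immediate: every nonzero $f\in\bigwedge^{m-1}V$ is decomposable, $u=0$, and condition \eqref{eq:8} is vacuous, so there is nothing to prove. For the inductive step, assume $\ell>1$ and that $c_f$ is a minimum weight codeword with $f$ decomposable. Let $E,F$ and $t:=\codim_{A_\ell}E$, $t':=\codim_{A_{\ell-1}}(E\cap A_{\ell-1})$ be as in Section \ref{sec3}. Since $c_f\ne 0$ and $c_f$ has weight $\Q$, Theorem \ref{xx} applies: for every $x\in F$ the codeword $c_{f\wedge x}$ is a minimum weight codeword of $\Cp$, and moreover either (i) $t=1$, $t'=0$, or (ii) $t'=t\ge 2$, $\a_\ell-\a_{\ell-1}=1$, with equality in \eqref{w1w2}.

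The first key step is to promote ``minimum weight'' for the codewords $c_{f\wedge x}$ to ``Schubert decomposable'' for the elements $f\wedge x$, and then apply the induction hypothesis. For any fixed $x\in F$, the element $g:=f\wedge x\in\bigwedge^{m-(\ell-1)}V$ is decomposable with $V_f\subseteq V_g$ and $c_g\ne 0$; by the induction hypothesis applied to $g$ (with respect to the truncated flag $A_1\subset\cdots\subset A_{\ell-1}$, whose jump data may differ from that of $\a$), we will need to know $g$ is Schubert decomposable w.r.t.\ $\Omega_{\a'}(\ell-1,m)$ — but the induction hypothesis gives us exactly the converse direction, so what we actually obtain is that $g$ being a minimum weight codeword forces $g$ to be Schubert decomposable w.r.t.\ $\a'$, i.e.\ $\dim(V_g\cap A_{q_j})=\a_{q_j}-q_j$ at the jump spots $q_j$ of $\a'$. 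This is the engine. The remaining work is bookkeeping: relate the jump spots $p_1,\dots,p_u$ of $\a$ to those of $\a'$, and use the $(t,t')$-dichotomy of Theorem \ref{xx} to pin down $\dim(V_f\cap A_{p_i})$.

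The second key step is the dimension count at $p_u$. In Case (i) ($t=1$), Lemma \ref{P} gives $\dim(V_f\cap A_{\ell-1})=\a_{\ell-1}-(\ell-1)$; combined with the fact that $\a$ restricted to indices $\le\ell-1$ has the same consecutive-block structure near the top (so $p_u\le\ell-1$ is a jump spot of $\a$ iff it is one of $\a'$, unless $p_u=\ell-1$ in which case $\a_{\ell-1}-(\ell-1)$ is precisely the required value), one transfers the dimension equalities $\dim(V_g\cap A_{p_i})=\a_{p_i}-p_i$ for the relevant $i$ down to $V_f$, using $V_f\subseteq V_g$ together with the upper bound $\dim(V_f\cap A_{p_i})\le\a_{p_i}-p_i$ from Lemma \ref{S}; equality is then forced by squeezing. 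In Case (ii) ($t'=t\ge 2$, $\a_\ell-\a_{\ell-1}=1$, equality in \eqref{w1w2}), one argues as in Lemma \ref{x2}: from equality in \eqref{w1w2} one gets $|\phi^{-1}(L)|=q^{\ell-1}(q^t-1)$ for every $L\in W(f)$ with $L\subseteq A_{\ell-1}$, which via the hyperplane count forces $\dim(L\cap A_{\ell-t})=\ell-t$ and, feeding this through the Schubert-decomposability-style inequalities, yields $\dim(V_f\cap A_{\ell-t})=\a_{\ell-t}-(\ell-t)$ — and since in this subcase $\ell-t=p_u$ is exactly the last jump spot of $\a$, together with the transferred equalities at $p_1,\dots,p_{u-1}$ (these are jump spots of $\a'$ too) we get \eqref{eq:8} in full.

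I expect the main obstacle to be precisely this comparison of jump-spot data between $\a$ and $\a'$ and the careful extraction, in Case (ii), of the single new dimension equality $\dim(V_f\cap A_{p_u})=\a_{p_u}-p_u$ out of the assertion that \eqref{w1w2} holds with equality; the latter requires one to unwind the fibre-size estimate of Lemma \ref{x}(ii) to its equality case and see that this equality propagates from the $L\in W(f)$ contained in $A_{\ell-1}$ back to a statement purely about $V_f$, exactly paralleling the forward computation in Lemma \ref{x2} but now used as a constraint rather than a conclusion. Everything else — the base case, the reduction to the induction hypothesis via $g=f\wedge x$, and the squeezing with Lemmas \ref{S} and \ref{P} — should be routine once the indexing is set up correctly.
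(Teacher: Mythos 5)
Your induction frame coincides with the paper's: base case $\ell=1$, Theorem \ref{xx} to get that $c_{f\wedge x}$ has minimum weight for all $x\in F$ together with the dichotomy on $(t,t')$, and the induction hypothesis to conclude that $g:=f\wedge x$ is Schubert decomposable with respect to $\a'$. Your Case (i) can indeed be completed, though not by "squeezing" alone: from Lemma \ref{P} and Lemma \ref{S} one gets $\dim(V_f\cap A_{\ell-1})=\a_{\ell-1}-(\ell-1)=\dim(V_g\cap A_{\ell-1})$, hence the set-theoretic identity $V_f\cap A_{\ell-1}=V_g\cap A_{\ell-1}$, and intersecting with $A_{p_i}$ transfers the equalities $\dim(V_g\cap A_{p_i})=\a_{p_i}-p_i$ to $V_f$ (this also disposes of the $p_u=\ell-1$ bookkeeping). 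This is a legitimate variant of the paper's Case 1, which instead uses $A_{p_u}\subseteq E$ and the codeword identity $0=c_{f\wedge y}=\lambda\,c_{f\wedge x}$.

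The genuine gap is Case (ii). First, you take $\ell-t=p_u$ for granted; that identity is a \emph{consequence} of the Schubert decomposability of $f$ (Lemma \ref{x2}) and is part of what must be proved -- a priori, while $\dim(V_f\cap A_{p_u})=\a_{p_u}-p_u-1$ is still a possibility, nothing pins $\codim_{A_\ell}E$ to $\ell-p_u$ (the step $A_{p_u}\subseteq E$ in Lemma \ref{x2} uses exactly the equality you are trying to establish). Second, and more seriously, the information you extract from equality in \eqref{w1w2} -- that $|\phi^{-1}(L)|=q^{\ell-1}(q^t-1)$ and hence $\dim(L\cap A_{\ell-t})=\ell-t$ for every $L\in W(f)$ with $L\subseteq A_{\ell-1}$ -- pushes in the wrong direction: since $V_f\cap L=\{0\}$ for $L\in W(f)$, it only yields the upper bound $\dim(V_f\cap A_{\ell-t})\le\a_{\ell-t}-(\ell-t)$, while the upper bounds $\dim(V_f\cap A_{p_i})\le\a_{p_i}-p_i$ are already known from Lemma \ref{S}. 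What must be proved is the \emph{lower} bound at $p_u$, i.e.\ that the hyperplane $V_f$ of $V_g$ does not cut $V_g\cap A_{p_u}$ down by one, and no step of your sketch produces it. The paper's proof supplies the missing idea: it reduces everything to showing $V_g\cap A_{p_u}\subseteq V_f$ for one suitably chosen $x\in F$, and in its Case 2 it constructs such an $x$ explicitly (as $z_{p'_u+1}$, from a basis of $V_f$ adapted to the flag and complementary vectors $z_1,\dots,z_\ell$, so that $x\notin V_f+A_{p_u}$ while $x\in F$ because $c_{f\wedge x}$ does not vanish on the explicitly built $L'\in\Op$); then writing $y\in V_g\cap A_{p_u}$ as $z+\lambda x$ with $z\in V_f$ forces $\lambda=0$. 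Note the paper never needs the equality case of \eqref{w1w2} here. Without this (or some substitute) construction, your Case (ii) does not go through.
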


\begin{proof}

We use induction on $\ell$. If $\ell =1$, then $u=0$ and there is nothing to prove. Suppose $\ell > 1$ and that the result holds for values of $\ell$ smaller than the given one. Let $f\in\bigwedge^{m-\ell }V$ be a  decomposable element such that $c_f$ is a minimum weight codeword of $\C$. In particular, $c_f \ne 0$ and if 
 $E$ and $F$ denote the subsets of $A_{\ell}$ associated to $f$, as in Section \ref{sec3}, 
then $t:= \codim_{A_{\ell}} E$ satisfies $1\le t  \le \a_{\ell}$. %Since $f$ is decomposable
 Moreover, by Theorem \ref{xx}, $c_{f\wedge x}$ is a minimum  weight codeword of $\Cp$ for every $x\in F$ and furthermore, we either have $t=1$ or we have $t > 1$ and $\a_\ell - \a_{\ell-1}=1$.
 
  % We will use this for varying choices of $x$. First, 
Note that for an arbitrary $x\in F$, by induction hypothesis we see that $g:=f\wedge x$ is Schubert decomposable. Hence $\dim V_g \cap A_{p_i} = \a_{p_i} - p_i$ for all $i=1, \dots , u$. Now since $f$ is decomposable, %its annihilator 
 so is $g$; moreover, $V_f$ is a hyperplane in $V_g$. Hence % = V_f + \
 \begin{equation}
 \label{pqi}
 q_i:= \dim V_f \cap A_{p_i} = \a_{p_i} - p'_i \quad \text{where} \quad p'_i = p_i \text{ or } p_i +1 \; \text{ for } i=1, \dots , u.
 \end{equation}
Let us also note that it suffices to show that $V_g\cap A_{p_u} \subseteq V_f\cap A_{p_u}$ 
%% Earlier text
%because in that case $V_f\cap A_{p_u}  = V_g\cap A_{p_u}$, the other inclusion being trivial. Consequently,  $\dim V_f\cap A_{p_i} = \a_{p_i} - p_i$ for all $i=1, \dots , u$. 
%Thus, $f$ is Schubert decomposable.
because in that case, we obtain $V_{f}\cap A_{p_{u}} = V_{g}\cap A_{p_{u}}$, the other inclusion 
being trivial, and consequently, %we obtain 
$\dim V_{f}\cap A_{p_{i}} = {\alpha }_{p_{i}} - p_{i}$  for all 
$i=1, \dots , u$, i.e., $f$ is Schubert decomposable.
We will now divide the proof into two cases according as $t=1$ or $t > 1$. 
%
%$p'_u = p_u$ because in that case, we have  $\dim V_f\cap A_{p_u} = \dim V_g\cap A_{p_u}$;  further, since $V_f\cap A_{p_u} \subseteq V_g\cap A_{p_u}$, we obtain $V_f\cap A_{p_u}  = V_g\cap A_{p_u}$. Consequently, $V_f\cap A_{p_i} = V_g\cap A_{p_i}$ for each $i=1, \dots , u$ and so $\dim V_f\cap A_{p_i} = \a_{p_i} - p_i$ for all $i=1, \dots , u$. Thus $f$ is Schubert decomposable. We let $t:= \codim_{A_\ell}E$ and divide the proof into two cases.

\medskip
\noindent
{\bf Case 1.} $t =1$

In this case $p_u \le \ell -1$ and by Lemma \ref{F}, $A_{p_u} \subseteq E$. Let $x$ be an arbitrary element of 
$F$ and as before, let $g = f\wedge x$. 
%We will show that $V_g\cap A_{p_u} \subseteq V_f\cap A_{p_u}$ and as noted earlier, this will prove that $f$ is Schubert decomposable.
Let $y\in V_g\cap A_{p_u}$. Since $y \in V_g = V_f + \langle x \rangle$, we can write $y = z + \lambda x$ for some $z\in V_f$ and $\lambda \in \Fq$. Also since $y \in A_{p_u}$, we find $y\in E$
 and so $c_{f\wedge y}=0$. Moreover, $c_{f\wedge z}=0$ simply because $z\in V_f$. It follows that 
$$
0 = c_{f\wedge y} = c_{f\wedge z} +  \lambda \, c_{f\wedge x} =  \lambda \, c_{f\wedge x} \quad \text{and hence} \quad \lambda = 0 \; \text{ so that } \; y \in V_f\cap A_{p_u}.
$$ 
Thus %shows that 
$V_g\cap A_{p_u}  \subseteq V_f\cap A_{p_u}$ and so, as noted before, $f$ is Schubert decomposable.
 
\medskip
\noindent
{\bf Case 2.} $t >1$

In this case $\a_\ell - \a_{\ell-1}=1$ and so $p_u < \ell -1$, i.e.,  %so that 
$p_u+2 \le \ell$. Recall that as per our convention $p_{u+1}:= \ell$ and so in view of \eqref{pqi} and Lemma \ref{S}, we set $p'_{u+1}:= \ell$ and $q_{u+1}:= \a_{\ell} - \ell$. Now we can recursively find $y_1, \dots y_{m-\ell}$ such that 
$$
 V_f \cap A_{p_i} = \langle y_1, \dots , y_{q_i} \rangle  \text{ for } i=1, \dots , u+1 \quad \text{and} \quad 
 V_f =  \langle y_1, \dots , y_{m- \ell} \rangle.
$$
In view of the dimension formulas \eqref{pqi} (that are also valid for $i=u+1$), this ensures that no nontrivial linear combination of $y_{q_i+1}, \dots , y_{m-\ell}$ is in $A_{p_i}$ for each $i=1, \dots , u+1$. 
% or
%$$
%\left\{ y_1, \dots , y_{q_i} \right\}  \text{ is a basis of }  V_f \cap A_{p_i} \text{ for } i=1, \dots , u+1 \quad \text{and} \quad 
%\left\{ y_1, \dots , y_{m- \ell}  \right\}  \text{ is a basis of }  V_f
% $$
By recursively extending these bases of $ V_f \cap A_{p_i}$ to $A_{p_i}$, we can also find $z_1, \dots , z_{\ell} \in A_{\ell}$ such that 
$$
 A_{p_i} = \langle y_1, \dots , y_{q_i},  z_1, \dots , z_{p'_i} \rangle  \quad \text{for } i=1, \dots , u+1. 
$$
Now consider the %$\ell$ and $(\ell -1)$-dimensional 
subspaces $L$ and $L'$ of $A_{\ell}$ defined by
$$
L = \langle z_1, \dots , z_{\ell} \rangle \quad \text{and} \quad L' = \langle z_1, \dots , z_{p'_u}, z_{p'_u+2}, \dots , z_{\ell} \rangle.
$$
It is clear that $\dim L = \ell = \dim L' +1$ and also that 
$$
\dim (L \cap  A_{p_i} )= \dim (L'  \cap A_{p_i})  = \dim \langle   z_1, \dots , z_{p'_i} \rangle  = p'_i \, \ge \, p_i \; \text{ for } i=1, \dots , u.
$$
Hence,  in view of \eqref{eq:7}, we see that $L\in \O$ and $L'\in \Op$. Moreover, by our choice of $z_1, \dots , z_{\ell}$, it is clear that $V_f \cap L = \{0\}$. Since $f$ is decomposable, it follows from \eqref{vanishing} that $c_f(L)\ne 0$. Hence if we let $x:= z_{p'_u+1}$, then we find $c_{f\wedge x}(L')\ne 0$. It follows that $x\in F$ 
and so induction hypothesis applies to $g = f\wedge x$ for this choice of $x$. Thus, $g$ is Schubert decomposable. Moreover, if $y\in V_g\cap A_{p_u}$, then being an element of $V_g = V_f + \langle x \rangle$, we can write 
$$
y = z + \lambda x \text{ for some } z\in \langle y_1, \dots , y_{m- \ell} \rangle  \text{ and } \lambda \in \Fq.
$$
On the other hand, being an element of $A_{p_u}$, we see that 
$$
y \in \langle y_1, \dots , y_{q_u},  z_1, \dots , z_{p'_u} \rangle .
$$
Thus, if $\lambda \ne 0$, then $x = z_{p'_u+1}$ can be expressed as a linear combination of $y_1, \dots , y_{m-\ell},  z_1, \dots , z_{p'_u}$, which contradicts the choice of $y$'s and $z$'s. So $\lambda = 0$ and $y \in V_f\cap A_{p_u}$. Thus %shows that 
$V_g\cap A_{p_u}  \subseteq V_f\cap A_{p_u}$ and so %as noted before, 
$f$ is Schubert decomposable. 
\end{proof}

In view of Theorems \ref{thmS1} and \ref{thmS2}, we make the following conjecture.

\begin{conjecture}
\label{conj}
\label{DandMWCisSD}
Minimum weight codewords of the Schubert code $\C$ are precisely the codewords corresponding to Schubert decomposable elements of $\bigwedge^{m-\ell}V$. 
%In other words, given any $f\in \bigwedge^{m-\ell}V$,
%$$
%c_f \text{ is a minimum 
%weight codeword of } \C \Longleftrightarrow c_f = c_h \text{ for some Schubert decomposable} h \in  \bigwedge^{m-\ell}V.
%$$
\end{conjecture}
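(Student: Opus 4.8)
The plan is to deduce the conjecture from the two one-sided statements already established. Theorem~\ref{thmS1} gives the implication ``Schubert decomposable $\Rightarrow$ minimum weight'', and Theorem~\ref{thmS2} gives ``minimum weight and decomposable $\Rightarrow$ Schubert decomposable''. Consequently, the entire remaining content of the conjecture is the assertion that every minimum weight codeword of $\C$ equals $c_f$ for some \emph{decomposable} $f\in\bigwedge^{m-\ell}V$. When $\O=G_{\ell,m}$, the map $f\mapsto c_f$ is a bijection and $\a$ is completely consecutive, so this assertion is exactly Nogin's characterization~\cite{N}; thus one may assume $\O\subsetneq G_{\ell,m}$, where $f\mapsto c_f$ has a nontrivial kernel, and the real issue is to control this non-injectivity.

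I would prove the assertion by induction on $\ell$, following the inductive scheme behind Theorems~\ref{xx} and~\ref{thmS1}. The base case $\ell=1$ is immediate, since $\C$ is then a $q$-ary simplex code: every nonzero codeword is of minimum weight and of the form $c_f$, and any such $f\in\bigwedge^{m-1}V$ is automatically (Schubert) decomposable. For the inductive step, fix a minimum weight codeword $c=c_f$ with $\ell>1$, and let $E\subseteq A_\ell$, $F=A_\ell\setminus E$ and $t=\codim_{A_\ell}E$ be as in Section~\ref{sec3}. By Theorem~\ref{xx}, $c_{f\wedge x}$ is a minimum weight codeword of $\Cp$ for every $x\in F$, and one is in case (i): $t=1$, $t'=0$ (so $A_{\ell-1}\subseteq E$ by Lemma~\ref{F}); or case (ii): $t\ge2$, $t'=t$, $\a_\ell-\a_{\ell-1}=1$, with equality in~\eqref{w1w2}. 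The assignment $x\mapsto c_{f\wedge x}$ is a linear map $A_\ell\to\Cp$ with kernel $E$, so its image is a $t$-dimensional subspace $U\subseteq\Cp$, onto which $F$ maps with $0$ omitted; hence \emph{every} nonzero vector of $U$ is a minimum weight codeword of $\Cp$, and therefore, by the inductive hypothesis (the conjecture for $\ell-1$), of the form $c_g$ for a Schubert decomposable $g\in\bigwedge^{m-\ell+1}V$.

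The heart of the matter is then to \emph{patch} this lower-dimensional data into a single decomposable $f'\in\bigwedge^{m-\ell}V$ with $c_{f'}=c_f$. In case (i) the situation is tractable: $U$ is a line, spanned by $c_h$ for a decomposable, Schubert decomposable $h\in\bigwedge^{m-\ell+1}V$; the scalars are recorded by a linear form $\lambda\colon A_\ell\to\Fq$ with $\ker\lambda=E$; and, since $A_{\ell-1}\subseteq E$ forces $c_f$ to vanish at every $L\in\O$ with $L\subseteq A_{\ell-1}$ (Corollary~\ref{corF}), the codeword $c_f$ is completely determined by the pair $(h,\lambda)$. One then constructs a decomposable $f'$ by a suitable choice of its annihilator $V_{f'}$, dictated by the position of $V_h$ relative to the flag through Lemmas~\ref{S} and~\ref{P}, so that $f'\wedge x$ represents $\lambda(x)\,c_h$ for every $x\in A_\ell$ while $c_{f'}$ vanishes on the points of $\O$ inside $A_{\ell-1}$; such an $f'$ is automatically Schubert decomposable, and a comparison of values over $\O$ yields $c_{f'}=c_f$. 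When $\ell=2$ the lower code $\Cp$ is a simplex code, so $h$ is essentially forced and automatically decomposable and the reconstruction is straightforward; when $\a$ is completely non-consecutive, one is always in case (i) and $\alpha'$ is again completely non-consecutive, so the induction closes. These are precisely the cases in which the conjecture can be proved unconditionally at present.

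The obstacle to the general case is case (ii). There $U$ may be a subspace of $\Cp$ of dimension $t\ge2$ \emph{all} of whose nonzero vectors are minimum weight codewords; the inductive hypothesis supplies a Schubert decomposable representative for each of these individually, but there is no evident way to choose them coherently, and the fibre of $c_f$ under $f\mapsto c_f$ is then positive-dimensional with no apparent decomposable member. Resolving this difficulty --- equivalently, in the geometric language of the introduction, showing that any hyperplane of $\PP^{k_\a-1}$ meeting $\O$ in the maximal number $m_\a$ of $\Fq$-rational points is induced by a decomposable element of $\bigwedge^{m-\ell}V$ --- is what would complete the proof of the conjecture.
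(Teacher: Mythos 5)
You have correctly diagnosed the logical structure: Theorems~\ref{thmS1} and~\ref{thmS2} reduce the conjecture to the single remaining claim that every minimum weight codeword of $\C$ equals $c_f$ for some \emph{decomposable} $f$, and your case split by $t=\codim_{A_\ell}E$ (case (i) with $t=1$ tractable; case (ii) with $t\ge 2$, $\a_\ell-\a_{\ell-1}=1$, $t'=t$, the obstacle) is exactly what Theorem~\ref{xx} produces. You are also right that this is, as stated, a \emph{conjecture}: the paper proves it only in the completely non-consecutive case (Theorem~\ref{thmCNC} and Corollary~\ref{CorCNC}) and for $\ell=2$ (Corollary~\ref{CorCNCTwo}), precisely because case (ii) cannot arise there, and no one has resolved the patching problem you describe for $t\ge 2$. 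So your proposal is better read as an honest reconstruction of the known partial results and of the open obstacle than as a proof, and on that level it is accurate.

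Two things deserve sharpening. First, your treatment of case (i) glosses over the hardest part. You say one ``constructs a decomposable $f'$ by a suitable choice of its annihilator $V_{f'}$, dictated by the position of $V_h$ relative to the flag'', but the existence of such an $f'$ and the fact that $c_{f'}=c_f$ on \emph{all} of $\O$ (not merely on the points reached through $\Cp$) is the substance of Theorem~\ref{thmCNC}'s proof: one has to pick a basis adapted to $E$ and the flag, produce an element $b+x\in V_g$ inside a carefully chosen $(\ell-1)$-dimensional $B_{\ell-1}$, prove $c_{g'\wedge b}=0$ by a dimension count against every $U\in\Op$, and then compare $c_f$ and $c_h$ pointwise over $\O$ using~\eqref{vanishing}. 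None of this is automatic from ``the codeword is determined by $(h,\lambda)$''; that would need justification, and the paper supplies it by the explicit construction. Second, your description of $\ell=2$ is slightly off: the paper does not argue that $\Cp$ being a simplex code ``forces $h$''; it uses the elementary fact that a pair $\a=(\a_1,\a_2)$ is either completely consecutive (whence $\O\simeq G_2(A_2)$ and Nogin's theorem applies) or completely non-consecutive (whence Theorem~\ref{thmCNC} applies), and Corollary~\ref{CorCNCTwo} follows from this dichotomy rather than from a fresh inductive argument.
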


\section{Completely Non-consecutive Case}
\label{newsec5}

We will continue to use the notation and terminology of the last three sections. %In particular, $\ell, m$,
%Further, we suppose that 
The main result of this section is an affirmative answer to Conjecture \ref{conj} when 
the dimension sequence $\a = (\a_1, \dots , \a_{\ell})$ of the fixed partial flag 
$A_1 \subset \dots \subset A_{\ell}$ is  completely non-consecutive, i.e., when $\a_i - \a_{i-1} \ge 2$ for  $1 < i \le \ell$.

% Now we are in position to prove the main theorem of this section. We will show that only  decomposable vectors of $\bigwedge^{m-\ell}V$ to the minimum weight codewords and these are the only.

\begin{theorem}
\label{thmCNC}
Assume that $\a$ is  completely non-consecutive. %Let $f\in\bigwedge^{m-\ell}V$. 
If $c$ is  a minimum weight codeword of $\C$, then $c = c_h$ for some decomposable $h\in\bigwedge^{m-\ell}V$. % such that   $h$ is decomposable. 
%Hence %Consequently, 
%$$
%c_f \text{ is a minimum 
%weight codeword of } \C \Longleftrightarrow f \text{ is Schubert decomposable.}
%$$ 	
 \end{theorem}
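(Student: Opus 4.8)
The plan is to argue by induction on $\ell$. For $\ell=1$ every element of $\bigwedge^{m-1}V$ is decomposable and there is nothing to prove, so assume $\ell>1$ and that the theorem holds for the dimension sequence $\a'=(\a_1,\dots,\a_{\ell-1})$, which is again completely non-consecutive. Write $c=c_f$ and let $E,F$ be the subsets of $A_\ell$ attached to $f$. Because $\a$ is completely non-consecutive, $\a_\ell-\a_{\ell-1}\ge 2$, so alternative (ii) of Theorem~\ref{xx} is impossible and hence $t:=\codim_{A_\ell}E=1$; by Lemma~\ref{F}, $A_{\ell-1}\subseteq E$, so $E$ is a hyperplane of $A_\ell$. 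Fix $x_0\in F$ and extend $E$ to a hyperplane $\hat E$ of $V$ with $x_0\notin\hat E$ (so $\hat E\cap A_\ell=E$). The map $A_\ell\to\Cp$, $x\mapsto c_{f\wedge x}$, is linear with kernel exactly $E$, so its image is a line; writing $c_{f\wedge x}=\lambda(x)\,c_0$ with $c_0:=c_{f\wedge x_0}$ gives a linear functional $\lambda$ on $A_\ell$ with $\ker\lambda=E$ and $\lambda(x_0)=1$. By Theorem~\ref{xx}, $c_0$ is a minimum weight codeword of $\Cp$, so the induction hypothesis yields a decomposable $g'\in\bigwedge^{m-\ell+1}V$ with $c_{g'}=c_0$. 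Since $c_{g'}\ne 0$, Lemma~\ref{S} applied to $\Cp$ gives $\dim(V_{g'}\cap A_{\ell-1})=\a_{\ell-1}-(\ell-1)$, hence $V_{g'}+A_{\ell-1}=V$; as $A_{\ell-1}\subseteq\hat E\ne V$, this forces $V_{g'}\not\subseteq\hat E$, so $W:=V_{g'}\cap\hat E$ has dimension $m-\ell$. Let $h$ be the wedge of a basis of $W$, so $h\in\bigwedge^{m-\ell}V$ is decomposable with $V_h=W$; I claim that, after rescaling $h$, one has $c_h=c_f$.

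The key observation is a dimension count: any subspace $W'$ of $\hat E$ with $\dim W'\ge m-\ell+1$ meets every $(\ell-1)$-dimensional subspace $L'$ of $\hat E$, since $\dim(W'\cap L')\ge (m-\ell+1)+(\ell-1)-(m-1)=1$; and every $L'\in\Op$ lies in $A_{\ell-1}\subseteq\hat E$. From this: (a) if $x\in E$ then either $h\wedge x=0$ or $\langle V_h,x\rangle$ is an $(m-\ell+1)$-dimensional subspace of $\hat E$, and in both cases $c_{h\wedge x}=0$; and (b) if $x\notin\hat E$ and $L'\in\Op$ then $\langle V_h,x\rangle\cap L'=0\iff V_{g'}\cap L'=0$ — the non-obvious implication uses that $L'\subseteq\hat E$ whereas $x\notin\hat E$, so the $x$-component of any element of $\langle V_h,x\rangle\cap L'$ must vanish. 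Hence for every $x\in F$ the codeword $c_{h\wedge x}$ has the same, nonempty (recall $c_0\ne 0$), support as $c_{g'}$; in particular $x\mapsto c_{h\wedge x}$ is linear with kernel exactly $E$ and one-dimensional image. To identify that image as $\Fq\,c_{g'}$, fix $x_1\in F$ and choose $v_1\in V_{g'}\setminus\hat E$ with $x_1-v_1\in\hat E$ (possible since $\dim (V/\hat E)=1$); then $g'=\rho\,(h\wedge v_1)$ for some $\rho\ne 0$ (both sides being decomposable with annihilator $V_{g'}$), so $h\wedge x_1=\rho^{-1}g'+h\wedge(x_1-v_1)$, and $c_{h\wedge(x_1-v_1)}=0$ by the same reasoning as in (a), giving $c_{h\wedge x_1}=\rho^{-1}c_{g'}$. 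Consequently $c_{h\wedge x}=\nu(x)\,c_{g'}$ with $\nu$ linear and $\ker\nu=E$; since also $c_{f\wedge x}=\lambda(x)\,c_{g'}$ with $\ker\lambda=E$, we have $\nu=\kappa\lambda$ for some nonzero scalar $\kappa$, and replacing $h$ by $\kappa^{-1}h$ we may assume $c_{h\wedge x}=c_{f\wedge x}$ for every $x\in A_\ell$.

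Finally I would check $(f-h)\wedge P_j=0$ for each chosen representative $P_j$ of a point $L_j\in\O$. If $L_j\subseteq A_{\ell-1}$, then $f\wedge P_j=0$ by Corollary~\ref{corF}, while $V_h\cap L_j=V_{g'}\cap L_j\ne 0$ (dimension count, using $L_j\subseteq\hat E$), so $h\wedge P_j=0$. If $L_j\not\subseteq A_{\ell-1}$, then $\dim(L_j\cap A_{\ell-1})=\ell-1$ and, up to a nonzero scalar, $P_j=v\wedge P'$ with $v\in A_\ell\setminus A_{\ell-1}$ and $P'$ a representative of $L_j\cap A_{\ell-1}\in\Op$; thus $f\wedge P_j$ and $h\wedge P_j$ are, up to the same scalar, the corresponding coordinates of $c_{f\wedge v}$ and $c_{h\wedge v}$, which coincide. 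Hence $c_f=c_h$ with $h$ decomposable, completing the induction.

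The step I expect to be the main obstacle is upgrading the equality of \emph{supports} of $c_{h\wedge x}$ and $c_{f\wedge x}$ to an equality of \emph{codewords}: this requires recognizing that both $x\mapsto c_{h\wedge x}$ and $x\mapsto c_{f\wedge x}$ are rank-one linear maps sharing the kernel $E$, and then pinning the image line down through the ``adjacent decomposable'' choice $x_1-v_1\in\hat E$ together with the dimension count inside $\hat E$. The remaining ingredients — the reduction to $t=1$, the dimension identities for $V_{g'}$, and the manipulation of Pl\"ucker representatives — are routine given the earlier sections.
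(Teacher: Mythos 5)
Your proof is correct, and it takes a genuinely different (more coordinate-free) route than the paper's, though the two share the same skeleton: induction on $\ell$, reduction to $t=\codim_{A_\ell}E=1$, construction of a decomposable $h$ from the decomposable produced by the induction hypothesis, and a final verification that $c_f=c_h$. The paper fixes an explicit basis $\{e_1,\dots,e_m\}$ of $V$ adapted to the flag and to $E$, takes $x=e_{\a_\ell}\in F$, invokes Theorem~\ref{thmS2} to record that the inductively produced decomposable $g$ is Schubert decomposable, writes $g=g'\wedge(b+x)$ for a carefully chosen $b\in A_{\ell-1}$ with $b+x\in V_g$, shows $c_{g'\wedge b}=0$ so that $c_g=c_{g'\wedge x}$, and then defines $h$ by deleting the $e_{\a_\ell}$-component from each factor of $g'$, which forces the identity $h\wedge x=g'\wedge x$ in $\bigwedge^{m-\ell+1}V$ (not merely at the codeword level); the final check of $c_f=c_h$ is then a direct two-case computation. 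Your argument replaces the basis by an abstract hyperplane $\hat E\supseteq E$ of $V$ transverse to a chosen $x_0\in F$, takes $V_h:=V_{g'}\cap\hat E$ outright, and gets by with Lemma~\ref{S} alone rather than Theorem~\ref{thmS2}. The price is that $c_{h\wedge x}=c_{f\wedge x}$ is initially only a support statement, and you correctly identify and handle the resulting normalization issue: both $x\mapsto c_{f\wedge x}$ and $x\mapsto c_{h\wedge x}$ are rank-one linear maps on $A_\ell$ with kernel exactly $E$, so it suffices to match a single value, which the ``adjacent decomposable'' $v_1\in V_{g'}\setminus\hat E$ with $x_1-v_1\in\hat E$ accomplishes. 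The closing case split over $L_j\subseteq A_{\ell-1}$ versus $L_j\not\subseteq A_{\ell-1}$ parallels the paper's split over $u_\ell\in E$ or not. In short: the paper's construction buys an exact form-level equality $h\wedge x=g'\wedge x$ at the cost of explicit coordinates, while yours isolates the single linear-algebra observation (two rank-one maps with a common kernel agree up to scalar) that makes the coordinate-free version go through.
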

 
 \begin{proof}
 We will use induction on $\ell$. The result clearly holds when $\ell=1$ since every nonzero element of 
 $\bigwedge^{m-1}V$ is decomposable. Suppose $\ell > 1$ and the result is true for values of $\ell$ smaller than the given one. Let $c$ be a minimum weight codeword of $\C$. Fix $f\in \bigwedge^{m-\ell}V$ such that $c=c_f$, and let $E, F$ be as in Section \ref{sec3}. Since $\a_\ell- \a_{\ell-1} \ge 2$, by Theorem \ref{xx} we see that $\codim_{A_\ell}E = 1$ and $f\wedge x$ is a minimum weight codeword of $\Cp$ for every $x\in F$. Moreover,  $A_{\ell -1} \subseteq E$, thanks to Lemma~\ref{x}. Thus, we can and will choose a basis $\{e_1, \dots , e_m\}$ of $V$ such that %Now since 
\begin{equation}
\label{gradedbasis}
 A_i = \langle e_1, \dots , e_{\a_i}\rangle \; \text{ for } \; i=1, \dots, \ell \quad \text{and} \quad E = 
  \langle e_1, \dots , e_{\a_\ell - 1}\rangle.
\end{equation}
Let $x:= e_{\a_\ell}$. Clearly, $x\in F$ and hence $f\wedge x$ is a minimum weight codeword of $\Cp$. Moreover, $\a'$ is completely non-consecutive. So by induction hypothesis, $c_{f\wedge x} = c_g$ for some decomposable $g\in \bigwedge^{m-\ell + 1}V$. Moreover, by Theorem \ref{thmS2}, $g$ is Schubert decomposable, and so $\dim V_g \cap A_i = \a_i -i$ for $i=1, \dots , \ell -1$. Thus, we can recursively choose 
$z_1, \dots , z_{\ell-1}$ such that 
$$
z_1 \in A_1 \setminus (V_g \cap A_1) \text{ and } z_i \in A_i \setminus \left( \langle z_1, \dots , z_{i-1} \rangle +  V_g \cap A_i \right) \text{ for } \; i=2, \dots, \ell -1. 
$$
In particular, $z_1, \dots , z_{\ell - 1}$ span an $(\ell-1)$-dimensional subspace, say $B_{\ell-1}$ of $A_{\ell-1}$  such that $A_{\ell-1}= B_{\ell-1} + \left( V_g \cap A_{\ell-1}) \right)$. This implies that $V_g \cap B_{\ell-1} = \{0\}$. Also since $\dim V_g = m -\ell +1$ and $x\not\in A_{\ell -1}$, %(for otherwise, 
we see that $\dim V_g \cap ( B_{\ell-1} + \langle x \rangle ) \ge 1$. Hence, $V_g$ contains an element of the form $b+x$ for some $b \in B_{\ell -1}$. Consequently, we can find 
$g_1, \dots g_{m-\ell} \in V$ such that $g_1, \dots g_{\a_{\ell-1} -(\ell -1)}$ span  $V_g \cap A_{\ell-1}$ and 
$$
g = g_1 \wedge \dots \wedge g_{m-\ell} \wedge (b+x)  = g'\wedge b + g'\wedge x, \quad \text{where } \; 
g':= g_1 \wedge \dots \wedge g_{m-\ell}.
$$
Note that $V_g \cap A_{\ell-1} = \langle g_1, \dots g_{\a_{\ell-1} -(\ell -1)} \rangle \subseteq V_{g'} \cap A_{\ell-1}
\subseteq V_g \cap A_{\ell-1}$. Thus,  %w let $g':= g_1 \wedge \dots \wedge g_{m-\ell}$
$$
V_{g'} \cap A_{\ell-1} = V_g \cap A_{\ell-1} \quad \text{and} %therefore} 
\quad \dim ( V_{g'} \cap A_{\ell-1} ) = \a_{\ell-1} - (\ell-1). 
%V_{g'} \cap A_{i} =  V_g \cap A_{i}  \text{ for } \;  i=1, \dots \ell -1.
$$
%In particular, by Lemma \ref{S} [with $\ell$ changed $\ell-1$], we have $\dim V_{g'} \cap A_{\ell-1}  = \a_{\ell-1} - (\ell-1)$. 
We claim that $c_{g'\wedge b } = 0$. This is clear if $b\in V_{g'}$. Now suppose $b\not\in V_{g'}$. Then 
$$
 V_{g' \wedge b} \cap A_{\ell-1} 
=  ( V_{g'} + \langle b \rangle) \cap A_{\ell-1} 
=  ( V_{g'} \cap A_{\ell-1} ) + \langle b \rangle
\text{ has dimension } \a_{\ell-1} - \ell + 2
$$
and therefore $V_{g' \wedge b} \cap A_{\ell-1} $ has %must have a 
nonzero intersection with any $(\ell-1)$-dimensional subspace of $A_{\ell-1}$. In particular, $ U \cap V_{g' \wedge b}  \ne \{0\}$ for every $U\in \Op$. Thus, in view of \eqref{vanishing}, the claim is proved. From the claim, it follows that $c_g = c_{g'\wedge x}$. Writing each of $g_1, \dots g_{m-\ell}$ as a linear combination of $e_1, \dots , e_m$ and noting that $x= e_{\alpha_{\ell}}$, we see that %can write
$g'\wedge x = h \wedge x$, where $h $ is a decomposable element of $\bigwedge^{m-\ell}V$ of the form 
$h_1 \wedge \dots \wedge h_{m-\ell}$, where each of $h_1, \dots, h_{m-\ell}$ is in the $(m-1)$-dimensional space $V$ spanned by $\{e_1, \dots , e_m\}\setminus\{x\}$. 
We will now proceed to prove that $c_f = c_h$. 

Let $L \in \O$ and let $P= u_{\ell} \wedge \dots \wedge u_1$ with $u_i \in A_i$ for $1\le i \le \ell$, be the  representative of $L$ in $\bigwedge^{\ell}V$ among 
the fixed representatives $P_1, \dots , P_{n_\a}$ as in \S \ref{subsec:2.3}. We wish to show that 
$c_f(P):= f \wedge u_{\ell}\wedge \dots \wedge u_1$ is equal to $c_h(P): = h \wedge u_{\ell}\wedge \dots \wedge u_1$. Since $c_{f\wedge x} = c_{h \wedge x}$ , we readily see that $f \wedge x \wedge u_{\ell -1}\wedge \dots \wedge u_{1} = h \wedge x \wedge u_{\ell -1}\wedge \dots \wedge u_{1}$. We will now 
consider two cases. 
First, suppose $u_{\ell} \in E$. Then $c_{f \wedge u_{\ell} }$ is the zero codeword in $\Cp$ and hence $c_f(P)=0$. On the other hand, %$L \subseteq E$ and so 
by \eqref{gradedbasis} and our choice of $h$, we see that  $V_h + E$ is a subspace of $V'$. Since $\dim V_h + \dim L = m > \dim V'$, we must have $V_h \cap L \ne \{0\}$ and so by \eqref{vanishing}, we obtain $c_h(L) =0$ as well. Now suppose $u_{\ell} \not\in E$. Then $u_{\ell} = v_{\ell} + \lambda x$ for a unique $v_{\ell} \in E$ and $\lambda \in \Fq$ with $\lambda \ne 0$. As in the previous case,
$f \wedge v_{\ell} \wedge u_{\ell-1} \wedge \dots \wedge u_{1}  = 0 = h \wedge v_{\ell} \wedge u_{\ell-1} \wedge \dots \wedge u_{1} $. Consequently, 
$$
c_h(P) = \lambda \left( h \wedge x \wedge u_{\ell -1}\wedge \dots \wedge u_{1}  \right) 
= \lambda (f \wedge x \wedge u_{\ell -1}\wedge \dots \wedge u_{1}) 
= f \wedge u_{\ell} \wedge u_{\ell-1} \wedge \dots \wedge u_{1},
$$
and thus $ c_h(P) =  c_{f} (P)$. This establishes $c_f = c_h$ and so the theorem is proved. 
 \end{proof}	
 
 As an immediate consequence of the above theorem, we see that Conjecture \ref{DandMWCisSD} holds in the affirmative when $\a$ is  completely non-consecutive. 
 
\begin{corollary}
\label{CorCNC}
Assume that $\a$ is  completely non-consecutive. Then the minimum weight codewords of %the Schubert code 
$\C$ are precisely the codewords corresponding to Schubert decomposable elements of $\bigwedge^{m-\ell}V$. More precisely, for any $c\in \C$, 
$$
c \text{ has minimum weight } \Longleftrightarrow c = c_h \text{ for some Schubert decomposable } h \in \bigwedge^{m-\ell}V.
$$
 \end{corollary}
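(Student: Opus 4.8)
The plan is to derive this corollary directly by assembling the three main results established above; no new argument is needed once $\a$ is assumed completely non-consecutive. I would present the proof as a verification of the displayed biconditional, since the opening sentence of the corollary is just a rewording of it.

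For the ``$\Leftarrow$'' direction I would take $c = c_h$ with $h \in \bigwedge^{m-\ell}V$ Schubert decomposable and simply invoke Theorem \ref{thmS1}, which says precisely that such a $c_h$ is a minimum weight codeword of $\C$. I note that this implication uses nothing about $\a$ being completely non-consecutive and holds in general.

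For the ``$\Rightarrow$'' direction I would start with an arbitrary minimum weight codeword $c$ of $\C$. Because $\a$ is completely non-consecutive, Theorem \ref{thmCNC} supplies a \emph{decomposable} element $h \in \bigwedge^{m-\ell}V$ with $c = c_h$. Now $c_h$ is a minimum weight codeword and $h$ is decomposable, so the hypotheses of Theorem \ref{thmS2} are met, and that theorem yields that $h$ is Schubert decomposable. Combining the two implications gives the biconditional, and hence the ``precisely'' assertion of the first sentence.

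The only place where genuine work occurs is Theorem \ref{thmCNC} — promoting an arbitrary minimum weight codeword to one indexed by a decomposable element — and that has already been carried out, so there is no real obstacle left here beyond citing the inputs correctly. The one point worth flagging is that Theorem \ref{thmCNC} is exactly what licenses the reduction to the decomposable case, which is where the hypothesis on $\a$ enters; by contrast the two supporting results, Theorems \ref{thmS1} and \ref{thmS2}, are valid for every $\a$.
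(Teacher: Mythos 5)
Your argument is correct and follows exactly the paper's own proof, which cites Theorems \ref{thmS1}, \ref{thmS2}, and \ref{thmCNC}; you have simply spelled out how the three results assemble into the two implications. The observation that Theorem \ref{thmCNC} is where the completely non-consecutive hypothesis enters, while Theorems \ref{thmS1} and \ref{thmS2} hold for general $\a$, is accurate and a worthwhile remark.
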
  
 
 \begin{proof}
 Follows from Theorems \ref{thmS1}, \ref{thmS2} and \ref{thmCNC}.
 \end{proof}
 
%Note that %a special case of 
%this corollary implies  that Conjecture \ref{DandMWCisSD} holds in the affirmative when $\ell=2$. 
 
We note that a special case of the last corollary implies  that Conjecture \ref{DandMWCisSD} holds in the affirmative when $\ell=2$. 
 
 \begin{corollary}
\label{CorCNCTwo}
The minimum  weight codewords of $C_{\a}(2,m)$ are precisely the codewords corresponding to Schubert decomposable elements of $\bigwedge^{m-2}V$.
 \end{corollary}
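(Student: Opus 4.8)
The plan is to dispose of the case where $\a$ is completely non-consecutive at once, and then treat the remaining case by reducing to the Grassmann case. For $\ell=2$ there are exactly two possibilities: either $\a_2-\a_1\ge 2$, so that $\a$ is completely non-consecutive, or $\a_2-\a_1=1$, so that $\a_1,\a_2$ are consecutive, $u=0$, and $\a$ is completely consecutive.

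If $\a_2-\a_1\ge 2$, then $\a$ is completely non-consecutive and the assertion is precisely the content of Corollary \ref{CorCNC}, so there is nothing further to do.

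Suppose then that $\a_2=\a_1+1$. By \eqref{eq:7} the Schubert variety is nothing but the Grassmannian $G_2(A_\ell)$ of all $2$-dimensional subspaces of $A_\ell=A_2$. Fix a basis $\{e_1,\dots,e_m\}$ of $V$ with $A_2=\langle e_1,\dots,e_{\a_2}\rangle$ and set $\omega:=e_{\a_2+1}\wedge\dots\wedge e_m$. Since every representative $P_j=v_2\wedge v_1$ (with $v_i\in A_i$) lies in $\bigwedge^2 A_2$, the scalar $f\wedge P_j$ depends only on the part of $f$ of the form $\bar f\wedge\omega$ with $\bar f\in\bigwedge^{\a_2-2}A_2$, and the assignment $\bar f\mapsto c_f$ identifies $C_{\a}(2,m)$, up to a monomial equivalence, with the Grassmann code $C(2,\a_2)$ of the $\a_2$-dimensional space $A_2$. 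By Nogin's characterization of the minimum weight codewords of Grassmann codes (recalled in the Introduction), a minimum weight codeword of $C(2,\a_2)$ has the form $c_{\bar f}$ for some decomposable $\bar f\in\bigwedge^{\a_2-2}A_2$. Hence if $c$ is a minimum weight codeword of $C_{\a}(2,m)$, we may write $c=c_h$ with $h:=\bar f\wedge e_{\a_2+1}\wedge\dots\wedge e_m\in\bigwedge^{m-2}V$, and $h$ is decomposable because $\bar f$ is decomposable in $\bigwedge^{\a_2-2}A_2$ and $\{e_{\a_2+1},\dots,e_m\}$ spans a complement of $A_2$ in $V$. Since $u=0$, Schubert decomposability coincides with decomposability (see the remark after Definition \ref{defSchub}), so $h$ is Schubert decomposable; together with the converse implication, which is Theorem \ref{thmS1}, this proves the corollary in this case.

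The only point demanding care is the identification in the completely consecutive case: one must verify that evaluation on the points of $G_2(A_2)$ really does factor through the linear surjection $\bigwedge^{m-2}V\twoheadrightarrow\bigwedge^{\a_2-2}A_2$, $f\mapsto\bar f$, and that this is compatible with the monomial equivalences relating different choices of Pl\"ucker representatives, so that Nogin's theorem transfers faithfully. Beyond this bookkeeping no new idea is required. Alternatively, one could avoid any appeal to Nogin by rerunning the inductive argument behind Theorems \ref{thmS1}--\ref{thmCNC} in the special situation $\ell=2$, but passing to the known Grassmann case is the most economical route.
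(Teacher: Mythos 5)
Your proof is correct and follows essentially the same route as the paper: split $\ell=2$ into the cases $\a_2-\a_1\ge 2$ (completely non-consecutive, handled by Corollary \ref{CorCNC}) and $\a_2-\a_1=1$ (completely consecutive, handled by Nogin's characterization since $\O=G_2(A_2)$). The paper's proof is just a one-line citation of these two facts; you have spelled out the details, in particular the verification that the linear surjection $f\mapsto \bar f$ identifying $\C$ with $C(2,\a_2)$ sends Schubert decomposable elements to decomposable ones and vice versa, which is a genuine (if small) point that the paper leaves implicit.
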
  
 
 \begin{proof}
Follows from Corollary  \ref{CorCNC} and the characterization by Nogin \cite{N} of minimum weight codewords of Grassmann codes because when $\ell=2$, the pair $\a$ must be either completely consecutive or completely nonconsecutive.
 \end{proof}
 
 \section{Enumeration and Generation}
 
In this section, we consider the problem of enumerating the number of minimum weight codewords of Schubert codes and also of determining whether or not the minimum weight codewords  generate the Schubert code $\C$. The case of Grassmann codes, %i.e., of 
which is when $\a_i = m-\ell +i$ for $i=1, \dots , \ell$, is well-known. Here we know that the number of minimum weight codewords is 
$
(q-1) {{m}\brack {\ell}}_q
$
and also that the minimum weight codewords of the Grassmann code $C(\ell,m)$ generate $C(\ell,m)$. 
Both these assertions follow readily from Nogin's characterization of the minimum weight codewords as those that correspond to decomposable elements of $\bigwedge^{m-\ell}V$. In the case of Schubert codes, we have noted earlier that the %association 
map given by $f\mapsto c_f$ from $\bigwedge^{m-\ell}V$ onto %codewords of 
$\C$ is ``many-to-one''. But for studying the minimum weight codewords of $\C$, it suffices to consider the restriction of this map to the set of Schubert decomposable elements of $\bigwedge^{m-\ell}V$, and examine 
to what extent it is injective. This is done in the next two lemmas.
%In view of Theorems \ref{thmS1} and \ref{thmS2}, the study of the minimum weight codewords of $\C$ is essentially equivalent to that of Schubert decomposable elements of $\bigwedge^{m-\ell}V$. Thus we will first examine to what extent is the map of $f\mapsto c_f$, when restricted to Schubert decomposable elements of  $\bigwedge^{m-\ell}V$ is injective. 

\begin{lemma}
\label{lemE1}
Let $f, g\in \bigwedge^{m-\ell}V$ be Schubert decomposable elements such that $c_f = c_g$. Then 
$V_f \cap A_{p_i} = V_g \cap A_{p_i}$ for all $i=1, \dots , u+1$. 
\end{lemma}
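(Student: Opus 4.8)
The plan is to run an induction on $\ell$, using the reduction $f \mapsto f\wedge x$ for $x \in F$ that has been the workhorse throughout Sections \ref{sec3}--\ref{newsec5}. The base case $\ell = 1$ is vacuous since then $u = 0$ and the only index to check is $i = u+1 = \ell$, for which equality $V_f \cap A_\ell = V_g \cap A_\ell$ is automatic from $\dim(V_f \cap A_\ell) = \a_\ell - \ell = \dim(V_g \cap A_\ell)$ — but more to the point, both annihilators are hyperplanes of $V$ and for $\ell = 1$ the support $W(f) = \{\langle y\rangle : y \in A_1 \setminus V_f\}$ determines $V_f \cap A_1$ outright, hence $V_f$. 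For the inductive step, assume $\ell > 1$ and that the result holds in dimension $\ell - 1$. Let $f, g$ be Schubert decomposable with $c_f = c_g$; in particular $W(f) = W(g)$, and by Theorem \ref{xx} (or directly from Lemma \ref{x2}) applied to each, the codimension invariant $t := \codim_{A_\ell} E_f$ equals $\codim_{A_\ell} E_g$, since both equal $1$ or $\ell - p_u$ and the weight formula \eqref{RefinedWt} forces the same branch.

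Next I would show $E_f = E_g$ as subspaces of $A_\ell$. Since $c_f = c_g$, for every $x \in A_\ell$ and every $L' \in \Op$ we have $(f\wedge x)(L') = f(\langle L', x\rangle) = g(\langle L', x\rangle) = (g\wedge x)(L')$ whenever $\langle L', x\rangle \in \O$; more carefully, $c_{f\wedge x}$ and $c_{g\wedge x}$ agree as codewords of $\Cp$ because each coordinate of $c_{f\wedge x}$ is $\pm f(\langle L', x\rangle)$ for the appropriate $L'$-representative, and likewise for $g$. Hence $c_{f \wedge x} = 0 \iff c_{g\wedge x} = 0$, i.e. $E_f = E_g =: E$ and $F_f = F_g =: F$. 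Now fix any $x \in F$ and set $g_1 := f \wedge x$, $g_2 := g\wedge x$; these are decomposable elements of $\bigwedge^{m-(\ell-1)}V$ with $c_{g_1} = c_{g_2} \ne 0$, and (as in the proof of Theorem \ref{thmS1}) both are Schubert decomposable with respect to the truncated flag $A_1 \subset \dots \subset A_{\ell-1}$: indeed $V_f \subset V_{g_1}$, $\dim V_{g_1}\cap A_{p_i} \le \a_{p_i} - p_i = \dim V_f \cap A_{p_i}$ gives equality for $i \le u$, and the same for $g_2$. The induction hypothesis applied to $g_1, g_2$ then yields $V_{g_1}\cap A_{p_i} = V_{g_2}\cap A_{p_i}$ for all $i = 1, \dots, u+1$ (note that the "jump spots" $p_1, \dots, p_u$ of $\a'$ coincide with those of $\a$ below index $\ell-1$, and the last block of $\a'$ is handled by the $i = u+1$ clause via Lemma \ref{S}).

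It remains to descend from $V_{f\wedge x}\cap A_{p_i} = V_{g\wedge x}\cap A_{p_i}$ back to $V_f\cap A_{p_i} = V_g\cap A_{p_i}$. Here I would argue exactly as in the two cases of the proof of Theorem \ref{thmS2}. Since $f$ is Schubert decomposable and $V_f$ is a hyperplane of $V_{f\wedge x}$, the dimension $\dim V_f \cap A_{p_i}$ is either $\a_{p_i} - p_i$ or $\a_{p_i} - p_i - 1$, and the former holds (namely $f$ is Schubert decomposable) so in fact $\dim V_f \cap A_{p_i} = \a_{p_i} - p_i = \dim V_{f\wedge x}\cap A_{p_i}$, which forces $V_f \cap A_{p_i} = V_{f\wedge x}\cap A_{p_i}$; likewise $V_g \cap A_{p_i} = V_{g\wedge x}\cap A_{p_i}$. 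Combining with the equality obtained from the induction hypothesis gives $V_f \cap A_{p_i} = V_g \cap A_{p_i}$ for $i = 1, \dots, u$, and for $i = u+1$ one similarly has $V_f \cap A_\ell = V_{f\wedge x}\cap A_\ell \cap (\text{hyperplane})$ — more directly, $\dim(V_f\cap A_\ell) = \a_\ell - \ell$ by Lemma \ref{S}, so it suffices to know the two $(\a_\ell - \ell)$-dimensional spaces sit inside a common space of dimension one larger and are both cut out consistently; alternatively, pick a second $x' \in F$ outside the relevant configuration and intersect. The main obstacle I anticipate is precisely this last descent and the bookkeeping that $E_f = E_g$ really does follow from $c_f = c_g$ and not merely from $W(f) = W(g)$ — once that is nailed down, everything else is the induction machinery already developed. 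I would take care that the choice of $x \in F$ can be made uniformly (any single $x$ suffices for each $i \le u$, since Schubert decomposability of $f$ pins down $V_f \cap A_{p_i}$ regardless of which hyperplane $V_f$ is inside $V_{f\wedge x}$), so no delicate "varying $x$" argument is needed.
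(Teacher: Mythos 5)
Your proposal takes a genuinely different route from the paper's. The paper proves the lemma by contradiction in a single step: it fixes the smallest index $i$ at which $V_f \cap A_{p_i} \ne V_g\cap A_{p_i}$, chooses $x\in (V_f\cap A_{p_i}) \setminus (V_g\cap A_{p_i})$ (which by minimality lies in no $A_{p_j}$ for $j<i$), and then recursively builds a flag--compatible basis $x_1,\dots,x_\ell$ of a subspace $L\in\O$ that complements $V_g\cap A_\ell$ and has $x_{p_i}=x$. Then $V_g\cap L=\{0\}$ forces $c_g(L)\ne 0$, while $x\in V_f\cap L$ forces $c_f(L)=0$ by \eqref{vanishing}, contradicting $c_f=c_g$. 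No induction on $\ell$ is used. Your argument instead re-runs the $f\mapsto f\wedge x$ induction from Sections \ref{sec3}--\ref{newsec5}, first establishing $E_f=E_g$ from $c_f=c_g$ (that part is correct and cleanly done), then invoking the inductive claim for $f\wedge x, g\wedge x$ in $\Cp$, and finally descending from $V_{f\wedge x}\cap A_{p_i}$ to $V_f\cap A_{p_i}$. The descent for $i=1,\dots,u$ is sound: both $V_f\cap A_{p_i}$ and $V_{f\wedge x}\cap A_{p_i}$ have dimension $\a_{p_i}-p_i$ and the former sits inside the latter, so they coincide, and likewise for $g$, giving $V_f\cap A_{p_i}=V_g\cap A_{p_i}$.

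The gap, which you yourself flag, is the $i=u+1$ case (i.e.\ $A_\ell=A_{p_{u+1}}$), and it is a real gap, not just bookkeeping. Two things go wrong. First, the inductive hypothesis applied to $f\wedge x, g\wedge x$ with the truncated flag $A_1\subset\dots\subset A_{\ell-1}$ controls intersections only down to $A_{\ell-1}$; it says nothing at all about $V_{f\wedge x}\cap A_\ell$ versus $V_{g\wedge x}\cap A_\ell$. Second, even if one somehow knew $V_{f\wedge x}\cap A_\ell = V_{g\wedge x}\cap A_\ell$, this would only say $(V_f\cap A_\ell)+\langle x\rangle = (V_g\cap A_\ell)+\langle x\rangle$; both $V_f\cap A_\ell$ and $V_g\cap A_\ell$ are $(\a_\ell-\ell)$-dimensional hyperplanes of that common $(\a_\ell-\ell+1)$-dimensional space avoiding $x$, and there are $q$ such hyperplanes, so the equality does not pin them down. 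Your ``pick a second $x'\in F$ outside the relevant configuration and intersect'' would need to be made precise: one must exhibit $x,x'\in F$, independent modulo $V_f\cap A_\ell$, for which the $A_\ell$-level equality of the $(\ell-1)$-codes actually holds, and that input is not supplied by the induction. As written, the proof establishes the conclusion only for $i\le u$, and the top case $i=u+1$ — which is precisely the clause that Theorem \ref{thmE} uses to reduce the fibre description to ``$W\cap A_\ell = W'\cap A_\ell$'' — is left open.
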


\begin{proof}
Assume the contrary, i.e.,  suppose $V_f \cap A_{p_i} \ne V_g \cap A_{p_i}$ for some $i \le u+1$. We will  assume that $i$ is the least positive integer with this property. Then 
%$\le u+1$ such that $V_f \cap A_{p_i} \ne V_g \cap A_{p_i}$. 
%upon setting, as before, $p_0=0$ and $A_0:= \{0\}$, we have 
\begin{equation}
\label{mini}
V_f \cap A_{p_j} = V_g \cap A_{p_j } \quad \text{for } 0 \le j < i, \quad\text{where} \quad  p_0=0 \text{ and }A_0:= \{0\} .
\end{equation}
Since $\dim (V_f \cap A_{p_i} )= \dim (V_g \cap A_{p_i})$, we see that %have 
$V_f \cap A_{p_i} \not\subseteq V_g \cap A_{p_i}$.
% and $V_g \cap A_{p_i} \not\subseteq V_f \cap A_{p_i}$. In particular, 
So there is some $x\in V_f \cap A_{p_i}$ such that $x \not\in V_g \cap A_{p_i}$. %Note that 
By \eqref{mini}, $x\not\in A_{p_j}$ for $0\le j < i$. Since $f,g$ %and $g$ 
are Schubert decomposable, we can recursively choose $x_1, \dots , x_{\ell} \in A_{\ell}$ such that 
$$
(V_g \cap A_{p_j} )+ \langle x_1, \dots , x_{p_j}\rangle = A_{p_j} \quad \text{for } j=1, \dots , u+1 \quad \text{and} \quad x_{p_i} = x,
$$
where $p_{u+1}=\ell$. 
Now let $L= \langle x_1, \dots , x_{\ell}\rangle$. By our choice of $x_1, \dots , x_{\ell}$, it is clear from \eqref{eq:7} that $L\in \O$ and $V_g \cap L = \{0\}$. Since $g$ is decomposable, this implies $c_g(L)\ne 0$. 
On the other hand, since $x\in V_f \cap L$, we see from \eqref{vanishing} that $c_f(L)=0$. Thus, $c_f\ne c_g$, which contradicts the hypothesis. 
\end{proof}

A partial converse of the above result is also true. 

\begin{lemma}
\label{lemE2}
Let $f, g\in \bigwedge^{m-\ell}V$ be Schubert decomposable elements such that $V_f \cap A_{p_i} = V_g \cap A_{p_i}$ for all $i=1, \dots , u+1$. Then $c_f = \lambda \, c_g$ for some $\lambda\in \Fq\setminus\{0\}$.
% with $\lambda\ne 0$. 
\end{lemma}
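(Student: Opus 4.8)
The plan is to show that $c_f$ and $c_g$ have exactly the same support, $W(f)=W(g)$, and then to use the already established value $d(\C)=\Q$ (Theorem~\ref{xx}) to force $c_f-\lambda c_g$ to vanish for a suitable nonzero $\lambda$. First I would record a few reductions. Both $f$ and $g$ are decomposable, hence (being Schubert decomposable) nonzero, with $c_f,c_g\ne 0$ by Lemma~\ref{lem:SchubNonzero}; by Lemma~\ref{S} we have $\dim(V_f\cap A_\ell)=\a_\ell-\ell=\dim(V_g\cap A_\ell)$; and, since $A_{p_i}\subseteq A_\ell$, the hypothesis $V_f\cap A_{p_i}=V_g\cap A_{p_i}$ for all $i=1,\dots,u+1$ amounts to the single equality $V_f\cap A_\ell=V_g\cap A_\ell$, which is all I will use.

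The heart of the matter is the claim $W(f)=W(g)$; by symmetry it suffices to prove $W(f)\subseteq W(g)$. Let $L\in W(f)$. Since $f$ is decomposable, $f(L)\ne 0$ forces $V_f\cap L=\{0\}$ (the contrapositive of \eqref{vanishing}). As $L\in\O$ we have $L\subseteq A_\ell$ and $\dim L=\ell$; thus $L$ and $V_f\cap A_\ell$ are subspaces of $A_\ell$ whose intersection is contained in $V_f\cap L=\{0\}$ and whose dimensions sum to $\ell+(\a_\ell-\ell)=\a_\ell=\dim A_\ell$, so $A_\ell=L\oplus(V_f\cap A_\ell)$. Replacing $V_f\cap A_\ell$ by the equal subspace $V_g\cap A_\ell$ gives $A_\ell=L\oplus(V_g\cap A_\ell)$, whence $L\cap V_g=L\cap(V_g\cap A_\ell)=\{0\}$ (using $L\subseteq A_\ell$ once more). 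Since $g$ is decomposable of complementary dimension $m-\ell$, it follows that $g(L)\ne 0$ — this is the converse direction to \eqref{vanishing}, a standard dimension count used, for instance, in the proof of Lemma~\ref{lem:SchubNonzero} — i.e.\ $L\in W(g)$.

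Finally, pick any $L_0\in W(f)$ (possible since $c_f\ne 0$) and put $\lambda:=f(L_0)/g(L_0)$, which lies in $\Fq\setminus\{0\}$ because $f(L_0)$ and $g(L_0)$ are both nonzero. Set $h:=f-\lambda g$, so $c_h=c_f-\lambda c_g$ and $h(L)=f(L)-\lambda g(L)$ for each $L\in\O$. For $L\notin W(f)=W(g)$ this is $0$, and $h(L_0)=0$ by the choice of $\lambda$; hence $W(h)\subsetneq W(f)$ and so $\wt(c_h)=|W(h)|\le|W(f)|-1=\Q-1<d(\C)$, using $\wt(c_f)=d(\C)=\Q$ from Theorems~\ref{thmS1} and~\ref{xx}. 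Since a nonzero codeword of $\C$ has weight at least $d(\C)$, we conclude $c_h=0$, i.e.\ $c_f=\lambda c_g$. I do not anticipate a real obstacle: the only point needing care is that $L\subseteq A_\ell$ for $L\in\O$, which is precisely what makes the direct-sum decomposition of $A_\ell$ yield $L\cap V_g=\{0\}$; the rest combines elementary dimension counting with the standard fact that a codeword of weight below the minimum distance must vanish.
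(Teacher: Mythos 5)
Your argument is correct, and it follows a genuinely different route from the paper's. You first reduce the hypothesis to the single equality $V_f\cap A_\ell = V_g\cap A_\ell$, use a dimension count on $A_\ell$ to show $W(f)=W(g)$ (relying on the fact that for a decomposable $f$, $f(L)\ne 0$ if and only if $V_f\cap L=\{0\}$, which is the two-sided version of \eqref{vanishing}), and then invoke the minimum-distance machinery — Theorem~\ref{thmS1} to get $\wt(c_f)=d(\C)$, together with Theorem~\ref{xx} — to conclude that $c_f-\lambda c_g$, having support strictly contained in $W(f)$, must be zero. The paper instead gives a direct, basis-level computation: it picks compatible bases of $V_f$, $V_g$ and a complement in $A_\ell$, expands $g$ by multilinearity, and shows that every term except one has a trivial codeword (using Lemma~\ref{S} or an outright zero wedge), forcing $c_g$ to be a nonzero scalar multiple of $c_f$. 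Your proof is shorter and conceptually cleaner but leans on the already-established minimum distance formula and the fact that Schubert decomposability yields minimum weight; the paper's proof is self-contained at this level of the argument and makes the proportionality explicit by construction. Both Theorems~\ref{xx} and~\ref{thmS1} precede this lemma in the paper's logical order, so there is no circularity in your appeal to them.
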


\begin{proof}
Let $r_i: = \a_{p_i} -p_i$ for $1\le i \le u+1$ and $r:=r_{u+1}$. We can recursively find linearly independent %elements  f_{\a_{\ell} - \ell}
$f_1, \dots , f_r, \, e_1, \dots , e_{\ell} \in A_{\ell}$ such that for each $i=1, \dots , u+1$, 
$$
V_f \cap A_{p_i} = V_g \cap A_{p_i} = \langle f_1, \dots , f_{r_i}\rangle \quad \text{and} \quad
A_{p_i} = (V_f \cap A_{p_i})  + \langle e_1, \dots , e_{p_i}\rangle. % \quad \text{for }i=1, \dots , u+1.
$$
Extend $\{f_1, \dots , f_r\}$ to bases $\{f_1, \dots , f_{m-\ell} \}$ and $\{f_1, \dots , f_r, g_1, \dots , g_{m - \a_\ell} \}$ of $V_f$ and $V_g$ respectively, such that $f= f_1 \wedge \dots \wedge f_{m-\ell}$ and 
$g =  f_1 \wedge \dots \wedge f_r \wedge g_1 \wedge \dots \wedge g_{m - \a_\ell}$. Note that $V_f \cap \langle e_1, \dots , e_{\ell}\rangle = \{0\}$ % = V_g \cap \langle e_1, \dots , e_{\ell}\rangle $ 
and thus  $\{f_1, \dots , f_{m-\ell}, e_1, \dots , e_{\ell} \}$
%$\{f_1, \dots , f_r, f_{r+1}, \dots , f_{m-\ell}, e_1, \dots , e_{\ell} \}$ 
is a basis of~$V$.  In particular, for each $j=1, \dots , m- \a_\ell$, we can write $g_j = x_j + y_j +z_j$ for unique $x_j \in \langle f_1, \dots , f_r\rangle$, $y_j \in \langle f_{r+1}, \dots , f_{m-\ell}\rangle$ and $z_j \in \langle e_1, \dots , e_{\ell}\rangle$. Hence, by multilinearity, we see that $g$ is a finite sum of elements of the form 
$$
h = f_1 \wedge \dots \wedge f_r \wedge h_1 \wedge \dots \wedge h_{m - \a_\ell}, \text{ where } \; 
h_j \in \{x_j, y_j, z_j\} \text{ for } j=1, \dots , m-\a_\ell.
$$ 
%where $h_j \in \{x_j, y_j, z_j\}$ for $j=1, \dots , m-\a_\ell$. 
Now if $h_j = x_j$ for some $j$, then clearly $h=0$. Also, if $h_j = z_j$ for some $j$, then we find that $h$ is a decomposable element of $ \bigwedge^{m-\ell}V$ such that 
$$
\dim (V_h \cap A_{\ell}) \ge \dim \langle f_1, \dots , f_{r}, z_i \rangle = \a_{\ell} - \ell +1
$$ 
and thus $c_h=0$, because otherwise Lemma \ref{S} is contradicted. It follows that $c_g = c_{h^*}$, where 
$h^* := f_1 \wedge \dots \wedge f_r \wedge y_1 \wedge \dots \wedge y_{m - \a_\ell} $. 
%In view of Theorem \ref{thmS1}, we see that 
By Lemma \ref{lem:SchubNonzero}, $c_g \ne 0$ and hence $h^*\ne 0$. Consequently, $y_1 ,\dots , y_{m - \a_\ell}$
are linearly independent elements of $\langle f_{r+1}, \dots , f_{m-\ell}\rangle$, and therefore 
$y_1 \wedge \dots \wedge y_{m - \a_\ell}$ and $f_{r+1} \wedge \dots \wedge f_{m-\ell}$ differ by a nonzero scalar. This implies that $c_f = \lambda \, c_g$  
for some $\lambda\in \Fq\setminus\{0\}$. 
\end{proof}

\begin{remark}
It may be noted that with hypothesis as in Lemma \ref{lemE2}, the stronger conclusion that $f = \lambda \, g$ for some $\lambda\in \Fq\setminus\{0\}$ or equivalently, $V_f=V_g$, is not true, in general. Indeed, this is indicated by the proof and examples are easy to construct. For instance, if $\ell = 2$, $m=4$ and $\alpha = (2,4)$, then $f= e_1 \wedge e_3$ and $g = e_1 \wedge (e_2 + e_3)$ are Schubert decomposable elements of $\bigwedge^2V$ such that 
$c_f = c_g$, but $f$ and $g$ do not differ by a scalar.  Here $e_1, e_2, e_3, e_4$ denote the elements of a fixed basis of $V$. 
\end{remark}

The following lemma is a variant of \cite[Lem. 3]{GT}, but with a simpler formula and a more direct proof. 
%It will be needed for enumeration.

\begin{lemma}
\label{subspaces}
Let $B$ be a finite-dimensional vector space over $\Fq$ and let $A$ be a subspace of $B$ and $R$ a subspace of $A$. 
%Also let 
Suppose $b = \dim B$, $a= \dim A$ and $r = \dim R$. Let $u$ be any %nonnegative 
integer with $0 \le u\le b$ and, as before, let $G_u(B)$ denote the Grassmannian of $u$-dimensional subspaces of $B$. Then 
$$
\left| \left\{ U\in G_u(B) : U \cap A = R \right\} \right| = {{b-a}\brack{u-r}}_q q^{(a - r)(u - r)}.
$$
% with $\lambda\ne 0$. 
\end{lemma}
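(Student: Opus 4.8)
The plan is to reduce the problem to the case $R=\{0\}$ by passing to the quotient $B/R$, and then to count the relevant subspaces by projecting onto $B/A$ and showing that the fibres of this projection are affine spaces of a size independent of the base point. First I would dispose of the degenerate cases: any $U$ with $U\cap A=R$ satisfies $R\subseteq U$, and the natural map $U/R\to B/A$ is injective, so one needs $r\le u$ and $u-r\le b-a$; when either fails, both sides of the asserted identity vanish, since ${b-a \brack u-r}_q=0$ whenever $u-r<0$ or $u-r>b-a$. Hence I would assume from now on that $0\le u-r\le b-a$.

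For the reduction, set $\bar B=B/R$ and $\bar A=A/R$, so $\dim\bar B=b-r$ and $\dim\bar A=a-r$. The correspondence $U\mapsto U/R$ is a bijection between the $u$-dimensional subspaces of $B$ containing $R$ and the $(u-r)$-dimensional subspaces of $\bar B$, and a one-line check shows that $U\cap A=R$ holds if and only if $(U/R)\cap\bar A=\{0\}$. Thus it suffices to prove the following reduced statement: if $\bar A\subseteq\bar B$ with $\dim\bar A=a'$ and $\dim\bar B=b'$, then the number of $u'$-dimensional subspaces $W$ of $\bar B$ with $W\cap\bar A=\{0\}$ equals ${b'-a' \brack u'}_q\,q^{a'u'}$. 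Taking $a'=a-r$, $b'=b-r$, $u'=u-r$ (so $b'-a'=b-a$) then gives exactly the formula in the statement.

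To prove the reduced statement, let $\pi\colon\bar B\to C:=\bar B/\bar A$ be the quotient map, with $\dim C=b'-a'$. If $W\cap\bar A=\{0\}$ then $\pi|_W$ is injective, so $Y:=\pi(W)$ lies in $G_{u'}(C)$; conversely, for a $u'$-dimensional $W$ with $\pi(W)=Y\in G_{u'}(C)$ the equality of dimensions forces $W\cap\ker\pi=W\cap\bar A=\{0\}$ automatically. So it remains to show that for each fixed $Y\in G_{u'}(C)$ the number of $u'$-dimensional $W$ with $\pi(W)=Y$ is $q^{a'u'}$; summing over the ${b'-a' \brack u'}_q$ choices of $Y$ then finishes the proof. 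Fixing a linear section $s\colon Y\to\bar B$ of $\pi$, I would argue that such $W$ are precisely the graphs $W_\phi=\{\,s(y)+\phi(y):y\in Y\,\}$ as $\phi$ ranges over $\operatorname{Hom}(Y,\bar A)$, and that $\phi\mapsto W_\phi$ is injective; since $\dim\operatorname{Hom}(Y,\bar A)=u'a'$, this gives the count $q^{u'a'}$.

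The only genuinely non-formal point --- although it is routine once the setup is in place --- is the graph description of the fibre: one must check that a $u'$-dimensional $W$ with $\pi(W)=Y$ has exactly one element above each $y\in Y$ (this uses $\dim W=\dim\pi(W)$), that the resulting map $Y\to W$ is a linear section of $\pi$ and hence differs from $s$ by a homomorphism $Y\to\bar A$, and that distinct homomorphisms yield distinct subspaces. Everything else is bookkeeping with dimension counts.
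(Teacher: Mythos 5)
Your proof is correct and follows essentially the same route as the paper: both arguments fibre the set $\{U : U\cap A=R\}$ over $G_{u-r}(B/A)$ (your $Y=\pi(W)$ is the paper's $(U+A)/A$) and show that every fibre has exactly $q^{(a-r)(u-r)}$ elements. The only difference is bookkeeping in the fibre count --- you first reduce modulo $R$ and parametrize a fibre by $\operatorname{Hom}(Y,\bar A)$ via graphs of linear maps, while the paper counts ordered spanning tuples and divides by the number of changes of basis giving the same subspace --- and both computations give the same result.
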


\begin{proof}    
Let $\mathfrak{U}:= \left\{ U\in G_u(B) : U \cap A = R \right\}$
We have a natural surjective map 
$$
%\left\{ U\in G_u(B) : U \cap A = R \right\}  
\psi: \mathfrak{U} \to G_{u-r}(B/A) \quad \text{given by}\quad 
U \longmapsto \frac{U+A}{A} \simeq \frac{U}{U\cap A} = \frac{U}{R}.
$$
Note that 
an arbitrary element of $G_{u-r}(B/A)$ is of the form $T/A$, where $T$ is a subspace of $B$ containing $A$ with $\dim T = a+ u-r$. %(so that $\dim(T/A) = u-r$). 
Fix such $T/A$. Then
% inverse image of $\{T/A\}$ under the map $\psi$ is given by 
$$Ne
\psi^{-1}\left( T/A \right) = \left\{ U\in G_u(B) : U \cap A = R \text{ and } A + U = T\right\}
$$
To estimate the cardinality of this fibre, let us fix an ordered basis $\{x_1, \dots , x_a\}$ of $A$ such that $\{x_1,  \dots , x_r\}$ is a basis of $R$. %We can then find 
Now $T$ has an ordered basis of the form $\{x_1, \dots , x_a, y_1, \dots , y_{u-r}\}$, and $U:= \langle x_1, \dots , x_r,  y_1, \dots , y_{u-r}\rangle$ is in $\psi^{-1}\left( T/A \right)$. Moreover, every element of $\psi^{-1}\left( T/A \right)$ can be obtained in this manner %for a suitable choice of 
by choosing 
$z_1, \dots , z_{u-r} \in T$ such that $\langle z_1, \dots , z_{u-r} \rangle \cap A = \{0\}$ and $z_1, \dots , z_{u-r}$ are linearly independent.  Since $|T| = q^{a+u-r}$ and $|A| = q^a$, %we see that 
the number of ordered $(u-r)$-tuples 
$(z_1, \dots , z_{u-r})$ with this property is
\begin{equation} 
\label{count1}
\left(q^{a+u-r} - q^{a} \right) \left(q^{a+u-r} - q^{a+1} \right) \cdots \left(q^{a+u-r} - q^{a+u-r-1} \right)
\end{equation}
Two ordered $(u-r)$-tuples $(y_1, \dots , y_{u-r})$ and  $(z_1, \dots , z_{u-r})$ %will 
give rise to the same subspace %$U$ 
if and only if 
$$
\begin{bmatrix} z_1 \\ \vdots \\ z_{u-r} \end{bmatrix} = 
\begin{bmatrix} & & & \vdots & & & & \\ & & & \vdots & & & & \\ & P & & \vdots &  & Q &  & \\  & & & \vdots & & & & \\& & & \vdots & & & &  \end{bmatrix} 
\begin{bmatrix} x_1 \\ \vdots \\ x_r \\ y_1 \\ \vdots \\ y_{u-r} \end{bmatrix} 
$$
for some $(u-r)\times r$ matrix $P$ and $(u-r)\times (u-r)$ nonsingular matrix $Q$ with entries in $\Fq$. Indeed, in that case the two ordered bases $\{x_1, \dots , x_r,  y_1, \dots , y_{u-r}\}$ and $\{x_1, \dots , x_r,  z_1, \dots , z_{u-r}\}$ will differ by a nonsingular $u \times u$ matrix of the form % given in block form by 
$$
\begin{bmatrix} I_r  &\! \vdots & \! \mathbf{0} \\  \dots & \! \dots & \! \dots \\ P &\! \vdots &\! Q  
\end{bmatrix}
$$
where $I_r$ denotes the identity matrix of size $r\times r$ and $\mathbf{0}$ denotes the $r\times (u-r)$ matrix all of whose entries are zero. The number of ways in which matrices $P$ and $Q$ of the kind above can be chosen is clearly given by 
\begin{equation} 
\label{count2}
 q^{r(u-r)} \left( q^{u-r} -1 \right) \left( q^{u-r} -q \right) \cdots  \left( q^{u-r} - q^{u-r-1} \right). 
\end{equation}
It follows that the cardinality of $\psi^{-1}\left( T/A \right)$ is obtained by dividing the expression in \eqref{count1} by that in \eqref{count2}. Thus,  %can be obtained
$$
|\psi^{-1}\left( T/A \right)| = q^{(a-r)(u-r)} \quad \text{and hence } \quad |\mathfrak{U}| = {{b-a}\brack{u-r}}_q q^{(a - r)(u - r)}.
$$
This completes the proof. 
\end{proof}

We are now ready to prove the main result of this section. 
% concerning enumeration of minimum weight codewords of $\C$. 

\begin{theorem}
\label{thmE}
The number of codewords of $\C$ corresponding to Schubert decomposable elements of $ \bigwedge^{m-\ell}V$ is equal to 
$$
M_{\a} : = (q-1) q^{\mathsf{P}} \prod_{j=0}^{u}  { {\a_{p_{j+1}} - \a_{p_j} } \brack{p_{j+1} - p_j}}_q 
%\quad \text{where } p_0=0, \; p_{u+1} = \ell, \; \a_0 = 0, \text{ and } \a_{\ell+1}= m.
$$
where, as per our usual conventions, $p_0=0$, $p_{u+1} = \ell$, and $\a_0 = 0$, and where
$$
\mathsf{P} = \sum_{j=1}^u p_j \left(\a_{p_{j+1}} - \a_{p_j} - p_{j+1} + p_j \right). 
$$
Consequently, the number of minimum weight codewords of $\C$ is at least $M_{\a}$. Moreover, if $\a$ is completely non-consecutive, then the number of minimum weight codewords of $\C$ is exactly $M_{\a}$.
\end{theorem}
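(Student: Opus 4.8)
The plan is to count directly the set $\{c_f : f\in\bigwedge^{m-\ell}V \text{ Schubert decomposable}\}$, organizing these codewords by the ``annihilator flag'' attached to $f$, and then to read off the two stated consequences from Theorem~\ref{thmS1} and Corollary~\ref{CorCNC}. Write $\mathcal{S}$ for the set of Schubert decomposable elements of $\bigwedge^{m-\ell}V$, and let $\mathcal{F}$ be the set of chains $W_1\subseteq\cdots\subseteq W_{u+1}$ of subspaces of $V$ with $W_i\subseteq A_{p_i}$, $\dim W_i=\a_{p_i}-p_i$, and $W_{i+1}\cap A_{p_i}=W_i$ for $1\le i\le u$. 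Define $\Phi\colon\mathcal{S}\to\mathcal{F}$ by $\Phi(f)=(V_f\cap A_{p_1},\dots,V_f\cap A_{p_{u+1}})$; this is well defined, since the dimension equalities hold by the definition of Schubert decomposability together with Lemma~\ref{S} (for $i=u+1$), and $W_{i+1}\cap A_{p_i}=V_f\cap A_{p_{i+1}}\cap A_{p_i}=V_f\cap A_{p_i}=W_i$ because $A_{p_i}\subseteq A_{p_{i+1}}$. I would first check that $\Phi$ is surjective: given a chain in $\mathcal{F}$, iterating the intersection relations gives $W_{u+1}\cap A_{p_i}=W_i$ for all $i$, and Lemma~\ref{subspaces} (with $B=V$, $A=A_\ell$, $R=W_{u+1}$) produces, in strictly positive number, an $(m-\ell)$-dimensional subspace $U$ with $U\cap A_\ell=W_{u+1}$, whence $U\cap A_{p_i}=W_i$ for every $i$; any decomposable $f$ with $V_f=U$ then lies in $\mathcal{S}$ and maps to the given chain.

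The next step is to pass from $\mathcal{S}$ to actual codewords. By Lemma~\ref{lem:SchubNonzero}, $c_f\ne 0$ for every $f\in\mathcal{S}$. Lemmas~\ref{lemE1} and \ref{lemE2} together say that for $f,g\in\mathcal{S}$ one has $\Fq c_f=\Fq c_g$ if and only if $\Phi(f)=\Phi(g)$. Since a nonzero scalar multiple of a Schubert decomposable element is again Schubert decomposable with the same annihilator, the map $f\mapsto\Fq c_f$ therefore induces a bijection between $\mathcal{F}$ and the set of lines $\{\Fq c_f:f\in\mathcal{S}\}$, and consequently the set $\{c_f:f\in\mathcal{S}\}$ is a disjoint union of $|\mathcal{F}|$ punctured lines, so that $|\{c_f:f\in\mathcal{S}\}|=(q-1)\,|\mathcal{F}|$.

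It then remains to count $|\mathcal{F}|$, which I would do by building the chain one term at a time. Given $W_j$, the admissible choices for $W_{j+1}$ are exactly the subspaces of $A_{p_{j+1}}$ of dimension $\a_{p_{j+1}}-p_{j+1}$ whose intersection with $A_{p_j}$ equals $W_j$, so by Lemma~\ref{subspaces} (with $B=A_{p_{j+1}}$, $A=A_{p_j}$, $R=W_j$, and using $\bigl({n\atop k}\bigr)_q=\bigl({n\atop n-k}\bigr)_q$ to rewrite the lower index as $(\a_{p_{j+1}}-\a_{p_j})-(p_{j+1}-p_j)$) their number is
$$
{ {\a_{p_{j+1}}-\a_{p_j}} \brack {p_{j+1}-p_j} }_q\, q^{\,p_j\left(\a_{p_{j+1}}-\a_{p_j}-p_{j+1}+p_j\right)},
$$
which for $j=0$ reduces to ${ {\a_{p_1}} \brack {p_1} }_q$ in view of the conventions $p_0=0$, $\a_0=0$, $A_0=\{0\}$. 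Crucially, this count depends on the earlier choices only through $\dim W_j$, which is fixed at $\a_{p_j}-p_j$, so $|\mathcal{F}|$ is the product of these numbers over $j=0,\dots,u$; this product equals $q^{\mathsf{P}}\prod_{j=0}^{u}{ {\a_{p_{j+1}}-\a_{p_j}} \brack {p_{j+1}-p_j} }_q$, and multiplying by $q-1$ gives exactly $M_\a$. The two remaining assertions are then immediate: by Theorem~\ref{thmS1} each $c_f$ with $f\in\mathcal{S}$ is a minimum weight codeword of $\C$, so the number of minimum weight codewords is at least $|\{c_f:f\in\mathcal{S}\}|=M_\a$; and when $\a$ is completely non-consecutive, Corollary~\ref{CorCNC} identifies the minimum weight codewords with precisely the $c_f$, $f\in\mathcal{S}$, which yields equality.

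I expect the main obstacle to lie in the last step: verifying that the recursion is genuinely multiplicative---i.e.\ that the number of admissible $W_{j+1}$ depends on the chain constructed so far only through $\dim W_j$, not through $W_j$ itself, which is exactly what licenses the product formula---and that the exponents of $q$ produced by the successive applications of Lemma~\ref{subspaces} telescope precisely to $\mathsf{P}$, with the degenerate term $j=0$ and the boundary conventions $p_0=0=\a_0$, $p_{u+1}=\ell$ all landing correctly. One should also not overlook the small existence point in the surjectivity of $\Phi$, namely that Lemma~\ref{subspaces} gives a strictly positive count for subspaces $U$ with $U\cap A_\ell=W_{u+1}$, so that every chain in $\mathcal{F}$ is actually realized by an element of $\mathcal{S}$. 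Everything else is a direct assembly of Lemmas~\ref{lem:SchubNonzero}, \ref{lemE1}, \ref{lemE2} and \ref{subspaces}, together with Theorem~\ref{thmS1} and Corollary~\ref{CorCNC}.
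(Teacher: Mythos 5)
Your proposal is correct and follows essentially the same route as the paper: reduce modulo nonzero scalars, use Lemmas \ref{lemE1} and \ref{lemE2} to identify scalar classes of codewords of Schubert decomposable elements with their intersection data along the flag, and count that data by iterated application of Lemma \ref{subspaces}, finishing with Theorem \ref{thmS1} and Corollary \ref{CorCNC}. The only (harmless) difference is bookkeeping: you parametrize classes by chains $W_1\subseteq\cdots\subseteq W_{u+1}$ inside $A_\ell$ rather than by annihilators in $G_{m-\ell}(V)$, which trades the paper's cancellation of the fibre factor $q^{\ell(m-\a_\ell)}$ for the small surjectivity check you supply via the positivity of the count in Lemma \ref{subspaces}.
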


\begin{proof}
%are interested in counting the set 
Let us temporarily denote by $S_{\a}$ the set that we wish to enumerate, i.e., let %viz., 
$$
S_{\a} : = \{c_f : f \in \bigwedge^{m-\ell}V \text{ is Schubert decomposable} \}.
$$ 
Note that by Theorem \ref{thmS1}, elements of $S_{\a}$ are minimum weight codewords of $\C$ and in particular, nonzero elements of $\Fq^{n_{\a} }$. 
Consider the %proportionality 
relation $\sim$ on $S_{\a}$ defined by $c \sim c' \Leftrightarrow c = \lambda c'$ for some 
$\lambda\in \Fq\setminus\{0\}$. Clearly, $\sim$ is an equivalence relation and each equivalence class has cardinality $(q-1)$. Thus, if we denote 
%by $[c]$ the equivalence class of $c\in S_{\a}$ and 
by $\mathfrak{S}_{\a}$ the set of all equivalence classes, then $|S_{\a} | = (q-1)| \mathfrak{S}_{\a}|$. On the other hand, there is a similar equivalence relation (viz., proportionality) on the set of all decomposable elements of $\bigwedge^{m-\ell}V$, and the map $f\mapsto V_f$ sending a decomposable element to its annihilator clearly gives a bijection between the set of equivalence classes and the Grassmannian $G_{m-\ell}(V)$ of $(m-\ell)$-dimensional subspaces of $V$. %are cle \mathfrak{S}_{\a}$
This equivalence relation %clearly 
preserves Schubert decomposability and the set of equivalence classes of 
Schubert decomposable elements of $ \bigwedge^{m-\ell}V$ is clearly in bijection with 
$$
%\mathfrak{G}
\Lambda_{\a}: = \left\{W \in G_{m-\ell}(V): \dim W \cap A_{p_i} = \a_{p_i} - p_i \text{ for all } i=1, \dots , u+1\right\}.
$$
For any $c\in S_{\a}$, denote by $[c]$ its equivalence class in $\mathfrak{S}_{\a}$. Then the map
$$
\theta : \Lambda_{\a} \to \mathfrak{S}_{\a} \quad \text{given by} \quad \langle w_1, \dots , w_{m-\ell} \rangle \mapsto \left[ c_{w_1 \wedge \dots \wedge w_{m-\ell}} \right]
$$
is clearly well-defined and surjective. %In view of 
By Lemmas \ref{lemE1} and \ref{lemE2}, for any $W, W' \in \Lambda_{\a}$, 
$$
\theta(W) = \theta(W') \Leftrightarrow  W \cap A_{p_i} = W' \cap A_{p_i} \text{ for all } i=1, \dots , u+1 
\Leftrightarrow  W \cap A_{\ell} = W' \cap A_{\ell}.
$$
It follows that for any $[c]\in \mathfrak{S}_{\a}$, the fibre $\theta^{-1}([c])$ is in bijection with the set of all $W\in G_{m-\ell}(V)$ such that $W\cap A_{\ell}$ is equal to a fixed $(\a_{\ell} - \ell)$-dimensional subspace, say $W_{\ell}$, of $A_{\ell}$. Hence, by Lemma \ref{subspaces}, 
%and, in turn, this is in bijection with the Grassmannian $G_{m - \a_{\ell} }(V/W_{\ell})$. 
%Quotienting by $W_{\ell}$, we see that each fibre is in bijection with the Grassmannian of $(m- \alpha_\ell)$-dimensional subspaces of a $(m- \alpha_\ell + \ell)$-dimensional vector space over $\Fq$. Consequently, 
\begin{equation}
\label{Rel1}
| \Lambda_{\a}| = \! \sum_{[c]\in \mathfrak{S}_{\a} } \left|\theta^{-1}([c]) \right| = {{m- \alpha_\ell}\brack{m - \ell - (\alpha_\ell - \ell)}}_q 
q^{\ell(m - \alpha_\ell)} \left|\mathfrak{S}_{\a} \right| =  
q^{\ell(m - \alpha_\ell)}
\left|\mathfrak{S}_{\a} \right|.
\end{equation}
Now  let $r_i: = \a_{p_i} -p_i$ for $1\le i \le u+1$ and consider the following sequence of maps 
$$
\Lambda_{\a} \stackrel{\pi_{u+1}}{\longrightarrow}  \Lambda_{u+1} \stackrel{\pi_{u}}{\longrightarrow}  \Lambda_{u} \stackrel{\pi_{u-1}}{\longrightarrow} \;  \dots \; 
\stackrel{\pi_{2}}{\longrightarrow}  \Lambda_{2} \stackrel{\pi_{1}}{\longrightarrow}  \Lambda_{1} 
$$
where for $1\le j \le u+1$,  the set $\Lambda_j $ is defined by 
$$
\Lambda_j : = \left\{U \in G_{r_j}(A_{p_j}) :  \dim U\cap A_{p_i} = \a_{p_i} - p_i \text{ for } 1\le i < j \right\},
%; \text{and} \; \pi_j(U) = U \cap A_{p_{j-1}}
$$
while for $1\le j \le u+1$, 
the map $\pi_{j} : \Lambda_{j+1} \to \Lambda_j $ is defined by $\pi_j(U) = U \cap A_{p_{j}}$ for any $U \in \Lambda_{j+1}$, where, by convention, we have set
$$
\Lambda_{u+2}: = \Lambda_{\a}, \; \; r_{u+2} := m - \ell, \; \; A_{p_{u+2}}: = V,\; \;  \a_{p_{u+2}}: = m, \;  \text{and as before, } \; 
p_{u+1}: = \ell.
$$
By Lemma \ref{subspaces}, we see that the cardinality $N_j: = | \pi_j^{-1}(U)|$ of the fibre of any $U\in \Lambda_j$ is independent of the choice of $U$ and is given by 
\begin{equation}
\label{Rel2}
N_j  %= {{\a_{p_{j+1}} - \a_{p_j}}\brack{r_{j+1} - r_j}}_q  q^{(\a_{p_j} - r_j) (r_{j+1} - r_j)} 
=  {{\a_{p_{j+1}} - \a_{p_j}}\brack{r_{j+1} - r_j}}_q  q^{p_j (r_{j+1} - r_j)} 
\quad \text{for } j=1, \dots , u+1. 
\end{equation}
It follows that 
\begin{equation}
\label{Rel3}
| \Lambda_{\a}| = N_{u+1} | \Lambda_{u+1}| = N_{u+1}N_u | \Lambda_{u}| = \dots   = N_{u+1}N_u \cdots N_1  | \Lambda_{1}| .
\end{equation}
Now note that 
$$
N_{u+1} = {{m- \alpha_\ell}\brack{m - \ell - (\alpha_\ell - \ell)}}_q 
q^{\ell(m - \alpha_\ell)}  = q^{\ell(m - \alpha_\ell)}  \; \; \text{and} \; \;  
 | \Lambda_{1}|  = \left| G_{r_1}(A_{p_1})\right| =  {{\a_{p_{1}} }\brack{r_{1}}}_q .
$$
Substituting this and \eqref{Rel2} in \eqref{Rel3} and then comparing with \eqref{Rel1}, we obtain
$$
\left|\mathfrak{S}_{\a} \right| = \prod_{j=0}^u {{\a_{p_{j+1}} - \a_{p_j}}\brack{r_{j+1} - r_j}}_q  q^{p_j (r_{j+1} - r_j)} = \prod_{j=0}^u {{\a_{p_{j+1}} - \a_{p_j}}\brack{p_{j+1} - p_j}}_q  q^{p_j (r_{j+1} - r_j)} ,
$$
where, as before, we have set $p_0 = 0 = \alpha_{p_0} = r_0$. This implies that %e desired formula for 
$|S_\a| = M_{\a}$. The remaining assertions % in the theorem 
follow from Theorem \ref{thmS1} and Corollary \ref{CorCNC}. 
\end{proof}

\begin{remark}
\label{RemEnum}
It is clear that if Conjecture \ref{DandMWCisSD} holds in the affirmative, then $M_{\a}$ given in Theorem \ref{thmE} is precisely the number of minimum weight codewords of $\C$. %It may be 
Note that when $\a$ is  completely consecutive, i.e., when $u=0$, %this formula reduces to
we have $M_\a = (q-1) {{\a_\ell}\brack{\ell}}_q$, which is consistent with the result of Nogin \cite{N} mentioned earlier since in this case $\O$ is the Grassmannian $G_{\ell}(A_{\ell})$. 
\end{remark}

%One of the features of Grassmann codes is that they 
The question as to whether or not the minimum weight codewords of a code generate the code is often of some interest. It is a classical result that this is true in the case of binary Reed-Muller codes (see, e.g., \cite[Ch. 13,
§ 6]{MS}), whereas for $q$-ary generalized
Reed-Muller codes, it is not true, in general 
(see, e.g., \cite[Thm. 1]{DK}). For Grassmann codes as well as for related classes of codes such as affine Grassmann codes of an arbitrary level, the minimum weight codewords generate the code (see, e.g., \cite[Thm. 18 and Rem. 1]{BGH2}). However, we will show below that Schubert codes are, in general, not generated by their minimum weight codewords. %With this in view,  the follow

\begin{theorem}
\label{minwtdont}
Assume that $\a$ has more than two consecutive blocks, i.e., $u>1$. Then the $\Fq$-linear subspace of $\C$ generated by the codewords corresponding to Schubert decomposable elements of $\bigwedge^{m-\ell}V$ 
%$$
%\{c_f : f\in \bigwedge^{m-\ell}V \text{ is Schubert decomposable}\}
%$$ 
is a proper subset of $\C$. 
%In particular, if $\a$ is completely non-consecutive and $\ell >2$, then $\C$ is  not generated by its minimum weight codewords.
\end{theorem}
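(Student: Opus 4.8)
The plan is to produce an explicit codeword of $\C$ that cannot lie in the span of the minimum weight codewords, by exploiting the "block structure" of $\a$ when $u>1$. The key idea is a \emph{weight-counting obstruction}: every minimum weight codeword $c_h$ (with $h$ Schubert decomposable) has support $W(h)$ contained in a Schubert subvariety of $\O$ of a very restricted type, and one can read off from Theorem \ref{thmS1} (and the recursive description in its proof) a linear functional on $\Fq^{n_\a}$ — essentially evaluation at, or summation over, a cleverly chosen subset $T$ of $\O$ — that vanishes on $W(h)$ for \emph{every} Schubert decomposable $h$, but does not vanish identically on $\C$. Since any codeword in the span of the minimum weight codewords must also be annihilated by this functional, the functional witnesses that the span is proper.

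Concretely, first I would fix the jump spots $p_1, \dots , p_u$ and isolate the "middle" block, say the one between $p_1$ and $p_2$; since $u>1$, there is at least one block strictly between the first jump and the last coordinate. Using the description $\O = \{L : \dim (L\cap A_{p_i}) \ge p_i\}$ from \eqref{eq:7}, consider the subset
$$
\mathcal{Z} := \{ L \in \O : \dim (L \cap A_{p_1}) \ge p_1 + 1 \}.
$$
The point is twofold. On one hand, $\mathcal{Z}$ is a proper Schubert subvariety of $\O$ corresponding to the dimension sequence obtained from $\a$ by decreasing $\a_{p_1+1}$ (or rather shifting the flag condition), so the characteristic-function-type codeword supported on $\O \setminus \mathcal{Z}$, or a suitable $\Fq$-linear combination realizing membership in $\C$, is a genuine codeword of $\C$. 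On the other hand — and this is the crux — if $h$ is Schubert decomposable then $\dim (V_h \cap A_{p_1}) = \a_{p_1} - p_1$, and for $L \in W(h)$ we have $L \cap V_h = \{0\}$, which forces $\dim(L \cap A_{p_1}) \le p_1$, hence $\dim(L\cap A_{p_1}) = p_1$ exactly. Therefore $W(h) \cap \mathcal{Z} = \emptyset$ for every Schubert decomposable $h$, i.e. every minimum weight codeword of $\C$ vanishes on all points of $\mathcal{Z}$. So does every element of their span. But one can exhibit a codeword of $\C$ that does not vanish on $\mathcal{Z}$: take any decomposable $g$ with $\dim(V_g \cap A_{p_1}) = \a_{p_1} - p_1 - 1$ (allowed since $u>1$ gives enough room — the block below is consecutive, so dropping one dimension is compatible with $c_g \ne 0$ by Lemma \ref{S}), and check via \eqref{eq:7} and \eqref{vanishing} that $W(g) \cap \mathcal{Z} \ne \emptyset$. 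Comparing, $c_g$ — and hence $\C$ — is not in the span of the minimum weight codewords.

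The steps in order: (1) record, from the definition of Schubert decomposability together with $L \cap V_h = \{0\}$, that $\dim(L\cap A_{p_1}) = p_1$ for every $L\in W(h)$; (2) define $\mathcal{Z}$ and note it is a nonempty proper subset of $\O$, in fact itself (the point set of) a Schubert variety, so $|\mathcal{Z}| < n_\a$ and the ``complement'' is nonempty; (3) conclude that every codeword in the span $\langle c_h : h \text{ Schubert decomposable}\rangle$ vanishes at each $L \in \mathcal{Z}$; (4) construct a single decomposable $g \in \bigwedge^{m-\ell}V$ whose support meets $\mathcal{Z}$, by choosing $V_g$ to be an $(m-\ell)$-dimensional subspace with $\dim(V_g \cap A_{p_i}) = \a_{p_i} - p_i$ for $i \ge 2$ but $\dim(V_g\cap A_{p_1}) = \a_{p_1} - p_1 - 1$, then checking there is $L\in\mathcal{Z}$ with $L\cap V_g = \{0\}$; (5) deduce $c_g \in \C$ is not in the span, so the span is proper.

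The main obstacle I anticipate is step (4): one must verify that such a $g$ actually produces a nonzero codeword whose support genuinely intersects $\mathcal{Z}$ — i.e. that lowering the intersection with $A_{p_1}$ by one really does let some $L$ with $\dim(L\cap A_{p_1}) \ge p_1+1$ satisfy $L\cap V_g = \{0\}$ — and that this requires $u > 1$ rather than merely $u \ge 1$ (when $u=1$ the ``middle'' block is absent and the construction of $g$ with $c_g \ne 0$ may fail, consistent with the theorem's hypothesis). This is a dimension-count argument in the flag $A_1 \subset \dots \subset A_\ell$ using the consecutiveness relations \eqref{eq:consec}, much in the spirit of the proof of Lemma \ref{lem:SchubNonzero}, so it is routine but must be done carefully to see exactly where $u>1$ enters.
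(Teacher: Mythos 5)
Your core observation in step (1) is exactly the paper's: if $h$ is Schubert decomposable then $\dim(V_h\cap A_{p_1})=\a_{p_1}-p_1$, and for $L\in W(h)$ the condition $L\cap V_h=\{0\}$ forces $\dim(L\cap A_{p_1})\le p_1$, hence $=p_1$ since $L\in\O$. The paper also then exhibits a single $L\in\O$ on which every such $c_h$ vanishes but $c_g$ does not; your set $\mathcal{Z}$ is a more elaborate packaging of the same obstruction. So the route is essentially the paper's route.

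There is, however, a genuine gap: your argument silently assumes $\a_{p_1}\ge p_1+1$. When $\a_{p_1}=p_1$ (which does happen with $u>1$, e.g.\ $\a=(1,2,4,6)$ has $p_1=2$, $\a_{p_1}=2$), every $L\in\O$ contains $A_{p_1}$, so $\dim(L\cap A_{p_1})=p_1$ for \emph{all} $L\in\O$, your set $\mathcal{Z}$ is empty, and the requested $g$ with $\dim(V_g\cap A_{p_1})=\a_{p_1}-p_1-1=-1$ does not exist. In that case the obstruction has to be moved up one level: since $\a_{p_1+1}\ge\a_{p_1}+2$ and $\a_{p_1+1},\dots,\a_{p_2}$ are consecutive, one gets $\a_{p_2}\ge p_2+1$ automatically, and $u>1$ guarantees $p_2<\ell$ so that $\dim(V_h\cap A_{p_2})=\a_{p_2}-p_2$ is part of the Schubert decomposability condition. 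Replacing $p_1$ by $p_2$ in your $\mathcal{Z}$ then makes the whole argument go through. This two-case split is precisely where $u>1$ is used in the paper; note that your construction as written does not actually invoke $u>1$ anywhere (it works already for $u=1$ when $\a_{p_1}>p_1$, consistent with Remark \ref{RemMinWt}), which is a sign the second case was missed. Finally, your step (4) is left vague (``suitable linear combination realizing membership in $\C$''); the paper sidesteps this cleanly by taking $g=e_{\ell+1}\wedge\cdots\wedge e_m$ and $L=\langle e_1,\dots,e_\ell\rangle$, for which $c_g(L)\ne 0$ is immediate, and you should likewise pin down an explicit $g$ rather than appeal to a characteristic-function heuristic (arbitrary vectors supported on $\O\setminus\mathcal{Z}$ need not lie in $\C$).
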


\begin{proof}
As in the last proof, % of Theorem \ref{thmE},  As before, 
let $S_\a:= \{c_f : f\in \bigwedge^{m-\ell}V \text{ is Schubert decomposable}\}$. 
The hypothesis on $\a$ implies that $p_2+1 \le \ell$, and also that 
$$
\text{either (i) } \a_{p_1} \ge p_1+1 \quad \text{or} \quad  \text{ (ii) } \a_{p_1} = p_1\text{ and } \a_{p_2} \ge p_2+1.
$$
Now fix a basis $\{e_1, \dots , e_m\}$ of $V$ such that $A_i = \langle e_1, \dots e_{\a_i}\rangle$ for $i=1, \dots , \ell$. Also let $L= \langle e_1, \dots , e_{\ell}\rangle$ and $g:= e_{\ell+1} \wedge \dots \wedge e_m$. Then $g$ is a decomposable element of $ \bigwedge^{m-\ell}V$ such that $c_g(L)\ne 0$. %On the other hand, 
Now suppose $f\in  \bigwedge^{m-\ell}V$ is any Schubert decomposable element.  Then % codeword of  
in case (i) holds, i.e., when $\a_{p_1} \ge p_1+1$,  we find
 $$
 \dim V_f\cap A_{p_1} = \a_{p_1} - p_1 \ \quad \text{and} \quad  \dim L \cap A_{p_1} \ge p_1 +1
 $$
 and consequently, $\dim (V_f \cap L) \ge 1$, which in view of \eqref{vanishing} shows that $c_f(L)=0$.  On the other hand, if (ii) holds, then
  $$
 \dim V_f\cap A_{p_2} = \a_{p_2} - p_2 \ \quad \text{and} \quad  \dim L \cap A_{p_2} \ge p_2 +1
 $$
 and consequently, $\dim (V_f \cap L) \ge 1$, which implies once again that $c_f(L)=0$. It follows that if $c\in \C$ is any linear combination of elements of $S_\a$, then $c(L)\ne 0$. Hence, $c_g \in \C$ is not in the linear span of $S_\a$. 
%
%The rest of the proof can be as in Prasant's ``Remarks''. 
%It goes through even in the case of two consecutive blocks ($u=1$) but with $\a_{p_1} > p_1$. But otheriwse 
%$u>1$ hypothesis seems to be needed. 
\end{proof}

\begin{corollary}
Suppose $\a$ is completely non-consecutive and $\ell >2$. Then $\C$ is  not generated by its minimum weight codewords.
\end{corollary}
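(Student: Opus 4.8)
The plan is to obtain this as an immediate consequence of Theorem \ref{minwtdont} together with Corollary \ref{CorCNC}. The only preliminary point to settle is a piece of bookkeeping: when $\a$ is completely non-consecutive, the integer $u$ determined by \eqref{eq:piu} and \eqref{eq:consec} equals $\ell-1$. This is exactly the ``other extreme'' recorded right after \eqref{eq:consec}, so under the hypothesis $\ell>2$ we have $u=\ell-1>1$, and hence the standing assumption of Theorem \ref{minwtdont} (namely $u>1$) is met.

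Granting this, I would argue as follows. By Theorem \ref{minwtdont}, the $\Fq$-linear subspace of $\C$ spanned by the codewords $c_f$ with $f\in\bigwedge^{m-\ell}V$ Schubert decomposable is a proper subspace of $\C$. On the other hand, since $\a$ is completely non-consecutive, Corollary \ref{CorCNC} identifies precisely this collection of codewords with the set of all minimum weight codewords of $\C$. Combining the two statements, the span of the minimum weight codewords of $\C$ is a proper subspace of $\C$, which is exactly the assertion that $\C$ is not generated by its minimum weight codewords.

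I do not anticipate any real obstacle here; the statement is a formal corollary, and the only thing worth double-checking is that ``completely non-consecutive'' indeed forces $u=\ell-1$, which is immediate from the definitions in \S\ref{newsec5}. If one prefers a self-contained argument, one can also reproduce the proof of Theorem \ref{minwtdont} in this special case: fix a basis $\{e_1,\dots,e_m\}$ of $V$ adapted to the flag so that $A_i=\langle e_1,\dots,e_{\a_i}\rangle$, put $L=\langle e_1,\dots,e_\ell\rangle$ and $g=e_{\ell+1}\wedge\cdots\wedge e_m$; then $c_g(L)\neq 0$, while the same case split $\a_{p_1}\ge p_1+1$ versus $\a_{p_1}=p_1,\ \a_{p_2}\ge p_2+1$ used in Theorem \ref{minwtdont} shows $\dim(V_f\cap L)\ge 1$, hence $c_f(L)=0$ by \eqref{vanishing}, for every Schubert decomposable $f$. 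In either approach the conclusion follows at once.
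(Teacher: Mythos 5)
Your proposal is correct and follows exactly the paper's own argument: note that complete non-consecutiveness forces $u=\ell-1>1$, then invoke Theorem \ref{minwtdont} together with Corollary \ref{CorCNC}. The optional self-contained reproduction of the proof of Theorem \ref{minwtdont} is a harmless addition but not needed.
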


\begin{proof}
Since $\a$ is completely non-consecutive, we have $u = \ell -1$ and so $u > 1$. Thus the desired result follows from Corollary \ref{CorCNC} and Theorem \ref{minwtdont}. 
\end{proof}

\begin{remark}
\label{RemMinWt}
As in Remark \ref{RemEnum}, it is clear that if Conjecture \ref{DandMWCisSD} holds in the affirmative, then Theorem \ref{minwtdont} shows that $\C$ is not generated by its minimum weight codewords, provided $u>1$. In fact, our proof of Theorem \ref{minwtdont} shows that its conclusion as well as  the last assertion is also valid when $u=1$, provided $\a_{p_1} > p_1$.  On the other hand, when $u=0$, i.e., when $\a$ is consecutive, then $\O \simeq G_{\ell}(A_{\ell})$ and $\C$ is equivalent to the Grassmann code $C(\ell, \a_\ell)$. So we know from the work of Nogin \cite{N} that $\C$ is generated by its minimum weight codewords. Moreover, when $u=1$ and $\a_{p_1} = p_1$, then any $W\in \O$ satisfies $W\cap A_{p_1} = A_{p_1}$, i.e., $A_{p_1} \subseteq W$, and hence $W \mapsto W/A_{p_1}$ sets up a natural isomorphism between $\O$ and %the Grassmannian 
$G_{\ell - p_1}(A_{\ell}/A_{p_1})$. Consequently, %and hence 
$\C$ is equivalent to the Grassmann code $C(\ell - p_1, \a_\ell - \a_{p_1})$. So once again,  %the work of
 Nogin's result % \cite{N} 
 implies that $\C$ is generated by its minimum weight codewords in this case. 
\end{remark}

\section*{Acknowledgments}

The first named author would like to thank Lucio Guerra for a preliminary discussion many years ago  about 
the minimum distance conjecture %for Schubert codes 
in the case %when 
$\ell=3$ and some correspondence related to \cite{X}. %Xiang's proof. 
He is also grateful to Christian Krattenthaler for pointing out that the formula in \cite[Lem. 3]{GT} can be rewritten as a hypergeometric series and simplified using $q$-Chu-Vandermonde identity. Thanks are also due to the referees for their comments as well as suggestions for improving the exposition in an earlier version of this paper.


\begin{thebibliography}{AAAA}
 \bibitem{BGH2}
P. Beelen, S. R. Ghorpade, and T. H{\o}holdt,  {Duals of affine Grassmann codes and their relatives}, \emph{IEEE Trans. Inform. Theory}, \textbf{58} (2012), %No. 6,
3843--3855.

\bibitem{BP} P. Beelen and F. Pi\~nero, The structure of dual Grassmann codes, \emph{Des. Codes Cryptogr.} {\bfseries 79} %No. 3 
(2016), 451--470.
\bibitem{HC}
H. Chen, On the minimum distance of Schubert codes, 
{\em IEEE Trans. Inform. Theory} {\bfseries 46} (2000), 1535--1538.
 \bibitem{DK}
P. Ding and J. D. Key, Minimum-weight codewords as generators of generalized Reed-Muller codes, \emph{IEEE Trans. Inform. Theory} 
{\bfseries 46} (2000), %No. 6, pp. 
2152--2158. %, 2000.

\bibitem{GK}
S. R. Ghorpade and K. V. Kaipa, Automorphism groups of Grassmann codes, 
\emph{Finite Fields Appl.} {\bfseries 23} (2013),  80--102.
 	
\bibitem{GL}
S. R. Ghorpade and G. Lachaud,  Higher weights of Grassmann codes, %in: 
\emph{Coding Theory, Cryptography and Related Areas} (Guanajuato, 1998),
 J. Buchmann, T. H{\o}holdt, H. Stichtenoth and H. Tapia-Recillas Eds.,
  Springer-Verlag, Berlin, %/Heidelberg, 
  2000, pp. 122--131. 
 	\bibitem{GPP}
S. R. Ghorpade, A. R. Patil and H. K. Pillai,
{Decomposable subspaces,  linear sections of Grassmann varieties,
and higher weights of Grassmann codes}, %arXiv.cs.IT/0710.5161 (2007).
{\em Finite Fields Appl.} {\bfseries 15} (2009),  54--68.
 	\bibitem{GT}
 S. R. Ghorpade and M. A. Tsfasman,
{\rm Schubert varieties, linear codes and enumerative combinatorics},
\emph{Finite Fields Appl.} {\bfseries 11}  (2005), 684--699.
 	
\bibitem{GV}
L. Guerra and R. Vincenti, On the linear codes arising from Schubert
varieties, {\em Des. Codes Cryptogr.} {\bfseries 33} (2004), 173--180.
 	
 	
\bibitem{HJR2}
J. P. Hansen, T. Johnsen, and K. Ranestad,
{\rm Grassmann codes and Schubert unions}, \emph{Arithmetic, Geometry and Coding Theory} (AGCT-2005, Luminy), F. Rodier and S. Vl{\u{a}}du{\c{t}} Eds., 
S\'emin. %aires et 
Congr.,  %\`es, 
vol.~21, Soc. Math. France,  Paris, 2010, pp. 103--121. 

%\bibitem{KL} S. L. Kleiman and D. Laksov, {Schubert calculus}, 
% 	\emph{Amer. Math. Monthly} {\bfseries 79} (1972), 1061--1082.
%% 	Amer. Math. Monthly,; 79(10): pp 1061-1082, (1972).
\bibitem{M}
M. Marcus, \emph{Finite Dimensional Multilinear Algebra, Part II}, Marcel Dekker, New York, 1975.

\bibitem{MS}
F. J. MacWilliams and N. J. A. Sloane, \emph{The Theory of Error Correcting Codes}, Elsevier, New York, 1977.

\bibitem{N}
D. Yu. Nogin, Codes associated to Grassmannians,
\emph{Arithmetic, Geometry and Coding Theory} (Luminy, 1993), 
R. Pellikaan, M. Perret, S. G. Vl\u{a}du\c{t}, Eds.,
Walter de Gruyter, Berlin, %/New York, 
 1996,  pp. 145--154.
 \bibitem{FP}
 F. Pi\~nero, The structure of dual Schubert union codes, %{\tt arXiv.math:1410.3703v6 [math.CO]}, 2017. 
\emph{IEEE Trans. Inform. Theory} 
{\bfseries 63} (2017), %No. 6, pp. 
1425--1433.
\bibitem{CR1}
C. T. Ryan, An application of Grassmannian varieties to coding
theory, {\em Congr. Numer.} {\bfseries 57} (1987), 257--271.
\bibitem{CR2}
C. T. Ryan, Projective codes based on Grassmann varieties,
{\em Congr. Numer.} {\bfseries 57} (1987), 273--279. 
\bibitem{TVN}
M. Tsfasman, S. Vl{\u{a}}du{\c{t}} and D. Nogin, \emph{Algebraic Geometric Codes: Basic Notions}, Math. Surv. Monogr., vol. 139, Amer. Math. Soc., Providence, 2007.  
\bibitem{CKR}
C. T. Ryan and K. M. Ryan, The minimum weight of Grassmannian codes $C(k,n)$,
{\em Disc. Appl. Math.} {\bfseries 28} (1990), 149--156. 
%\bibitem{Se}
%\bibitem{TV1} 
%Tsfasman M. A., Vl\u{a}du\c{t}, S. G.: ``Algebraic Geometric
% Codes'', Kluwer, Amsterdam, 1991.
%\bibitem{TV2} 
%Tsfasman, M. A., Vl\u{a}du\c{t}, S. G.: Geometric approach to higher weights. 
%{IEEE Trans. Inform. Theory} {\bfseries 41} (1995) 1564--1588.
 	
 	\bibitem{X} X. Xiang, {On The Minimum Distance Conjecture For Schubert Codes},  \emph{IEEE Trans. Inform. Theory} {\bfseries 54} (2008), 486--488. % 54(1): pp 486 - 488 (2008).
 	
%  	 begin{equation}
%\label{ermd}


  	 
 \end{thebibliography}
\end{document}